\def\BState{\State\hskip-\ALG@thistlm}
\theoremstyle{definition}
\newtheorem{example}{{\rm \bf Example}}
\newtheorem{proposition}{{\rm \bf Proposition}}
\newtheorem{corollary}{Corollary}
\newtheorem{theorem}{Theorem}
\def\bSig\mathbf{\Sigma}
\begin{document}
%
%\printfigures

\begin{center}
\textmd{\LARGE{\bfseries{{A Novel  Multi-Period and Multilateral Price Index}}}}
\end{center}
\medskip

\begin{center}
\large{Consuelo R. Nava$^1$, \enspace
Maria Grazia Zoia$^{2*}$ }\\ \vspace{7mm}
\begin{small}$^1$ Department of Economics and Statistics ``Cognetti de Martiis'', Univerist\`a degli Studi di Torino, Lungo Dora Siena 100A, Torino, Italy. Email: \href{consuelorubina.nava@unito.it}{consuelorubina.nava@unito.it}\vspace{3mm}
\\$^{2}$ Department of Economic Policy, Universit\`a Cattolica del Sacro Cuore, Largo Gemelli 1, 20123, Milano, Italy.  Corresponding Author. Tel:+390272342948; fax: +390272342324. Email: \href{maria.zoia@unicatt.it}{maria.zoia@unicatt.it} 
\vspace{3mm}
\\$^{*}$ Corresponding author\end{small}
\end{center}

%\usepackage{authblk}
%\author[1]{Consuelo R. Nava}
%\author[2]{Maria Grazia Zoia}
%\author[1]{Department of Economics and Political Sciences, Univerist\`a della Valle d'Aosta, Strada Cappuccini 2A, Aosta, Italy}
%\author[1]{Department of Economic Policy, Universit\`a Cattolica del Sacro Cuore, Largo Gemelli 1, 20123, Milano, Italy}
%\affil[1]{TeX.SX}
%\affil[2]{Both on a bus}
%\date{}                     %% if you don't need date to appear
%\setcounter{Maxaffil}{0}
%\renewcommand\Affilfont{\itshape\small}

%\smallskip
%\centerline{\textsl{October 2012}}
%\smallskip

\smallskip

\begin{quote}
\begin{center}
\noindent {\sl Abstract} \end{center}
%Price indices in time and space is a most relevant topic in statistical analysis from both the methodological and the application side. 
%A price index, providing a novel and effective solution both in a multi-period and a multilateral framework, is devised. The index is the solution of an optimization approach which leads to estimate a regression model, linking values and quantities of the commodities, with Generalised Least Squares. Being its reference basket the union of the intersections of the baskets of all periods/countries in pairs, it provides a broader coverage than usual indices. The index updating formulas are derived and its properties are investigated. Last, applications with both real and simulated data provide evidence of its good performance.   
 A novel approach to price indices, leading to an innovative solution in both a multi-period or a multilateral framework, is presented. The index turns out to be the generalized least squares solution of a regression model linking values and quantities of the commodities. The index reference basket, which is the union of the intersections of the baskets of all country/period taken in pair, has a coverage broader than extant indices.  The properties of the index are investigated and updating formulas established. Applications to both real and simulated data provide evidence of the better index performance in comparison with extant alternatives. 
\vspace{10pt}
\\
%\begin{singlespace}
\noindent {\sl JEL code: C43; E31; C01.} 
\par
\noindent {\sl Keywords:} 
multi-period index, multilateral index, GLS solution, updating formulas, country-product dummy index.
\par 
\end{quote}

%\begin{quote}
%\begin{small}
%%\begin{singlespace}
%\vspace{2cm}
%\noindent $^*$ Univerist\`a della Valle d'Aosta, Strada Cappuccini 2A, Aosta, Italy, and Universit\`a Cattolica del Sacro Cuore, Largo Gemelli 1, 20123, Milano, Italy, \textit{E-mail}: c.nava@univda.it
%\\
%\noindent $^{**}$ Universit\`a Cattolica del Sacro Cuore, Largo Gemelli 1, 20123, Milano, Italy, \textit{E-mail}: antonio.pesce@unicatt.it 
%\\
%\noindent $^{***}$ Universit\`a Cattolica del Sacro Cuore, Largo Gemelli 1, 20123, Milano, Italy, \textit{E-mail}: maria.zoia@unicatt.it 
%%\end{singlespace}
%\end{small}
%\end{quote}

\newpage

%%%%%%%%%%%%%%%%%%%%%%%%%%
\section{Introduction}{\label{sec: intro}}
%%%%%%%%%%%%%%%%%%%%%%%%%%
Multi-period and multilateral price indices, used to compare sets of commodities over time and across countries respectively, are of prominent interest for statisticians  \citep[see, e.g.,][]{Biggeri2010}. Several approaches to the problem have been carried out in the literature. 

One of these is the axiomatic approach \citep[see, e.g.,][and the references quoted therein]{balk1995axiomatic}, which rests on the availability of both quantities and prices and aims at obtaining price indices enjoying suitable properties \citep[][]{fisher1921best, fisher1922making}.

A second approach hinges on the economic theory\footnote{This approach is also known as preference field approach or functional approach \citep{divisia1926indice}.} \citep[see, among others,][for a review]{diewert1979economic, caves1982economic} and rests on the idea that consumption choices come from the optimization of a utility function under budget constraints. Here, prices play the role of independent variables, while  quantities arise as solutions to an optimization problem in accordance with the preference scheme OF decision makers.

A third approach is the stochastic one \citep[see][for a  review] {clements2006stochastic, diewert2010stochastic}, which can be traced back to the works of \citet{jevons1863serious, jevons1869depreciation} and  \citet{edgeworth1887report, edgeworth1925memorandum}. Thanks to \citet{balk1980method} and \citet{ clements1987measurement}, this approach has been recently  reappraised, and its role in inflation measurements duly acknowledged \citep[see, e.g.,][and references quoted therein]{zahid2010measuring}. In this framework, prices are assumed to be affected by  measurement errors whose bias effect must be duly minimized. 

The stochastic approach (hereafter, SA) turns out to be somewhat different from other approaches, insofar as it is closely related to regression theory \citep{theil1960best, clements1987measurement}. In fact, the SA enables the construction of tests and confidence intervals for price indices, which provide useful pieces of information \citep{clements2006stochastic}. Furthermore, the SA has less limits than other approaches\footnote{The SA, differently from the index number theory  does not need to account for the economic importance of single prices.} and clears the way to further extensions, as shown in \cite{diewert2004stochastic, diewert2005weighted, silver2009hedonic, rao2004country}. 

In this paper, we devise a multi-period/multilateral price index, MPL index henceforth, within the stochastic framework. The derivation of the MPL index, which is the solution to an optimization problem, calls for quantities and values of the commodities (not prices), like \citet{walsh1901measurement}. In fact, the MPL index is obtained by applying generalized least squares (GLS) methods to a regression model linking values and quantities of the commodities. In the two-period (country) case the index turns out to be the ratio of weighted sums with harmonic means of the squared quantities as weights. Depending on the choice of the objective function to optimize, the commonly used indices, namely Laspeyeres, Paasche, Marshall-Edgeworth, Walsh and Geary Khamis, arise as special cases. 

The reference basket of the MPL index, namely the set of commodities for all periods/countries, is made up of the union of the intersections of all the couples of year/country baskets in pairs. This implies that the price index of a commodity can be always computed once the latter is present in at least two periods/countries. Thus, the reference basket turns  out to be more representative than the ones commonly used by the majority of statistical agencies, which either align the reference basket to that of the first period, or make it tally with the intersection of the commodity sets of all periods/countries. Eventually, such a reference basket is likely to be scarcely representative of the commodities present in each period/country.	 
In this sense, just like hedonic \citep{pakes2003reconsideration}, GESKS \citep{balk2012price} and country/time-product-dummy (CPD/TPD) approaches with incomplete price tableau \citep{rao2004country}, the MPL index does not drop any observation on the account of having no counterpart in the reference basket. Indeed, unlike the aforesaid approaches, the MPL index is built on quantities and values, not on prices. Accordingly, the lack of a commodity in a period/country  implies setting its quantity and value equal to zero in that period/country, not its price which, being not observed, must be considered unknown but not necessarily null. 
Neither any preliminary computation of binary price indices, as in the GESKS approach \citep{ivancic2011scanner}, nor the use of any type of weighting matrix for dealing with missing values or quantities, as in the case of CPD/TPD indices, are needed.

The updating of the MPL index is easy to accomplish and suitable formulas, tailored to the multi-period or multilateral nature of the data, are provided. 
In fact, while the inclusion of fresh values and quantities, of a set of commodities corresponding to an extra period, does not affect the previous values of the MPL index, the inclusion of a new country affects all former MPL indices. Hence, two updating formulas have been proposed for the MPL index: one for the multi-period case and another for the multilateral case. Closed-form expressions for the standard errors of the MPL estimates are provided and the properties of the estimators are duly investigated. 
An empirical comparison of the MPL index to CPD/TPD index -- a multilateral/multi-period index that, like the MPL one, can be read as a solution to an optimization problem -- provides evidence of an easier implementation and greater efficiency of the former index.

To sum up, a threefold  novelty characterizes the paper. First, it proposes a price index, which proves effective either for the multi-period or the multilateral case.  
Second, updating formulas tailored to the multilateral and the multi-period version of the index are provided.
Third, the grater simplicity of use and efficiency of the said index is highlighted in comparison with well-known standard multilateral/multi-period indices. 
Furthermore, the approach employed to build the MPL index yields the so called reference prices,  which are the prices expected to be paid for the commodities in the base time/country. The latter together with the values of MPL index allow to  determine the prices of those commodities that, for whatever reason, can not be observed in a given period or country. \\
The MPL index proves to be particularly useful when i) there is the need of a reference basket more representative than the mere intersection of the baskets involved in all periods/countries or in presence of ii) historical data; iii) when the prices of some commodities in some periods/countries are unknown. In a multi period-perspective, this may occur, for instance, when some commodities enter or leave the basket as a consequence of a technological change. The knowledge of prices is lacking in the periods that are antecedent the inclusion of commodities in the basket or in the periods that are subsequent their exclusion from the basket. In a multilateral perspective, this happens when a commodity is not dealt in yet on the market of a given country.

The paper is organized as follows.  In Section~\ref{sec:mpl}, within the SA, we devise the MPL index according to a minimum-norm criterion as well as its updating formulas for the multi-period and the multilateral cases, respectively. 
Section~\ref{subsec:prop} is devoted to the properties of the MPL index. Section~\ref{sec:empirical} provides an application of the MPL index to the Italian cultural supply data to shed light on its potential as both a multi-period and a multilateral index. To gain a better insight into the performance of the MPL index, a comparison with the CPD/TPD indices is made   by using both real and perturbed data. Section~\ref{sec:simulation} enriches previous empirical evidences with a simulation example, while  Section~\ref{sec:conclusion} completes the paper with some concluding remarks and hints. For the sake of easier readability, an Appendix has been added with proofs and technicalities.

%%%%%%%%%%%%%%%%%%%%
\section{The MPL index as solution to an optimization problem }\label{sec:mpl}
%%%%%%%%%%%%%%%%%%%%
In this section, taking the SA as the reference frame, we derive a multi-period/multilateral price index  whose reference basket -- over a set of periods or across a set of countries -- is  the union of the intersections of the commodity baskets of various periods/countries, taken in pairs. Such a reference basket proves to be an effective solution for several reasons. First, it is broader and more representative than the ones built on the intersection of commodities which are present in all periods/countries. Second, the price index is well defined, provided each commodity is present in at least two baskets.\footnote{Some similarities arise with the chaining rule \citep{forsyth1981theory, von2001chain} where the price index is a measure of the cumulative effect of adjacent periods from 0 to 1, 1 to 2, $\dots$, $t-1$ to $t$. Thus, chain indices compare the current and the previous periods in order to evaluate the evolution over many periods (for a comparison of this approach with the fixed base one see \cite{diewert2001consumer}). However, chain indices, unlike the MPL index, leave unresolved the reference basket updating and are not applicable in a multilateral perspective. } The computation of the MPL price index hinges on quantities and values of the commodities, not prices. The lack of a commodity in a given period/country $t$ entails that both its quantity and value vanish in that period/country.

Figure~\ref{fig:1} shows the reference basket corresponding to the usual approach as compared with that devised in the paper for the case of two and three periods/countries.
\begin{figure}[htbp]
\begin{center}
%\begin{doublespace}
\includegraphics[scale=0.3]{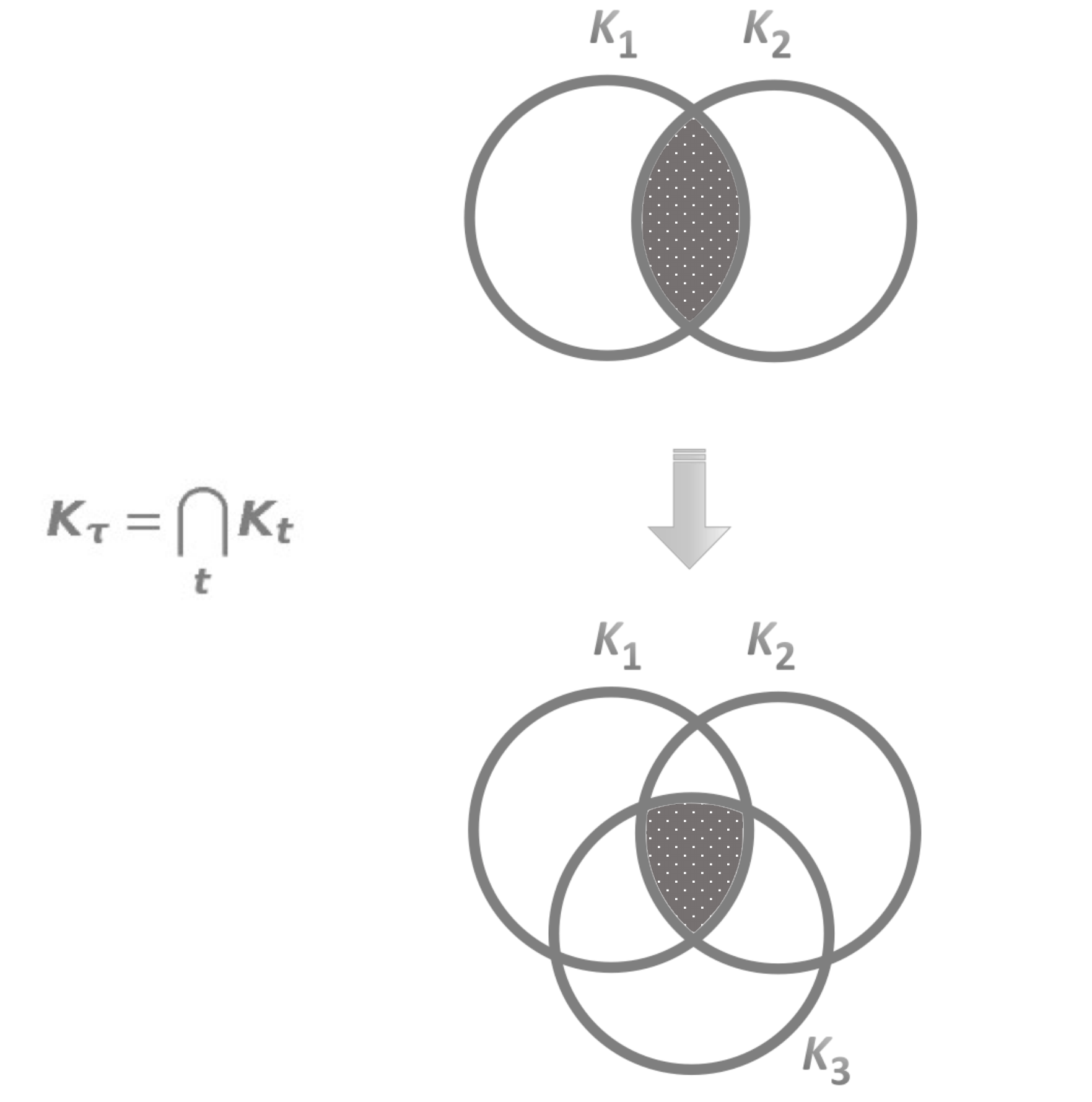} 
\includegraphics[scale=0.3]{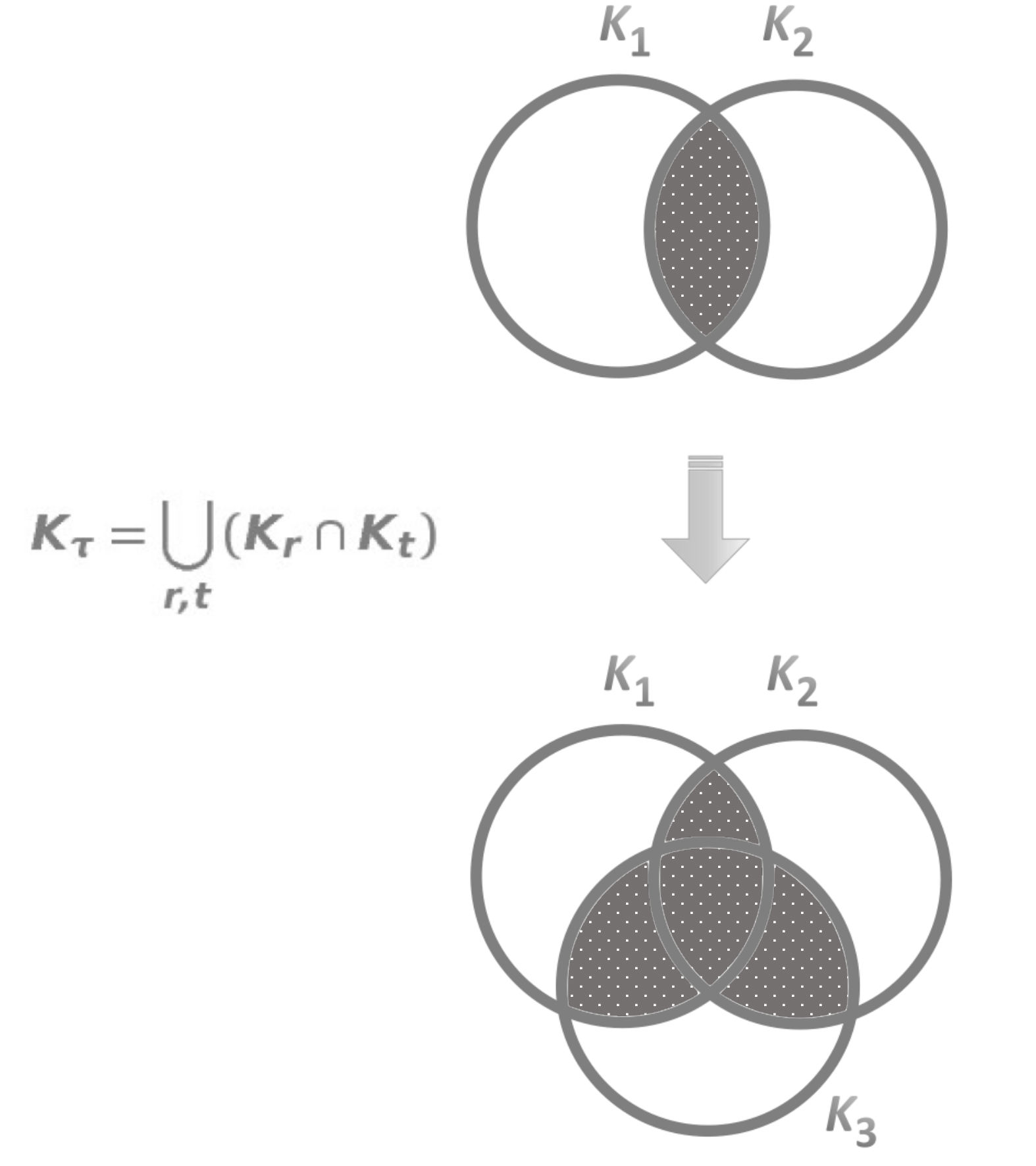}
\caption{%\doublespacing
The left-hand side panel shows the reference basket corresponding to the ``traditional'' approach. The right-hand side panel shows the reference basket corresponding to the MPL index. }
\label{fig:1}
%\end{doublespace}
\end{center}
\end{figure} 

 According to the SA, the MPL index is worked out as solution of an optimization problem consisting in finding an hyperplane lying as close as possible to the points whose coordinates are the $N$ commodities prices in $T$ periods. In fact, it hinges on the idea that in each time/country $t$, the $N$ commodity prices  move proportionally to a  set of $N$ reference  prices to within  ``small'' discrepancies, that is
\begin{equation}\label{eq:multi1}
\underset{(N,1)}{\boldsymbol{p}_t}\,\approx\, \lambda_t  \underset{(N,1)}{\boldsymbol{\tilde{p}}}\,\,\,\,\,\,\,\,\,\,\,\,\forall\,t=1,2,\dots,T
\end{equation}
or
\begin{equation}\label{eq:multi2}
\underset{(N,1)}{\boldsymbol{p}_t}\,=\, \lambda_t  \underset{(N,1)}{\boldsymbol{\tilde{p}}}+\underset{(N,1)}{\boldsymbol{\varsigma}_t}\,\,\,\,\,\,\forall\,t=1,2,\dots,T.
\end{equation}
Here $\boldsymbol{p}_t$ is the actual price vector of  the $N$ commodities at period/country $t$, $\tilde{\boldsymbol{p}}$ is the vector of the unknown (time invariant) reference prices, $\lambda_t$  is a scalar factor acting as price index at period/country $t$ and $\boldsymbol{\varsigma}_t$ is the discrepancy vector, that is a vector of error terms. 
As per Eq.~\eqref{eq:multi2}, in each period/country $t$, the $N$ prices $\boldsymbol{p}_t$  can be represented by a point in a $N$-dimensional space. Accordingly, the $N$ prices in $T$ periods/countries, namely $\boldsymbol{P}=[\boldsymbol{p}_{1},\dots,\boldsymbol{p}_{t},\dots,\boldsymbol{p}_{T}]$, can be represented by $T$ points in a $N$-dimensional space. If all prices move proportionally, these points would lie on a hyperplane, $\tilde{\boldsymbol{p}}$, and, in particular, on a straight line crossing the origin for
$T=2$. 
In general, this is only approximately true and a price ``line'' crossing the origin is chosen with the property of fitting the observed price points, by minimizing the deviations of the data from the ``line''. 
In compact notation, Eq.~\eqref{eq:multi1} can be more conveniently reformulated as follows
\begin{equation}\label{eq:multi3}
\underset{(N,T)}{\boldsymbol{P}_{}}\approx \,\underset{(N,T)}{\boldsymbol{\Pi}_{}}=\underset{(N,1)}{\tilde{\boldsymbol{p}}}\,\underset{(1,T)}{\boldsymbol{\lambda}_{}^{'}}
\end{equation}
where $\boldsymbol{\lambda}$ is the vector of the $T$ price indices and  $\tilde{\boldsymbol{p}}$ is the vector of the $N$ (unknown) reference prices. According to Eq.~\eqref{eq:multi3}, the problem of determining a set of $T$ price indices can be read as the problem of approximating the price matrix,  $\boldsymbol{P}$, with a matrix of unit rank, $\boldsymbol{\Pi}$, defined as the outer product of a vector of price indices by a virtual price vector. Moving from prices to values, the matrix $\boldsymbol{V}$ of the values of  $N$ commodities in $T$ periods/countries has the following representation
\begin{equation}\label{eq:multi33}
\underset{(N,T)}{\boldsymbol{V}_{}}=\boldsymbol{P}\ast\boldsymbol{Q}\approx\boldsymbol{\Pi}\ast\boldsymbol{Q} \approx (\underset{(N,1)}{\tilde{\boldsymbol{p}}}\underset{(1,T)}{\boldsymbol{\lambda}'_{}})\ast\underset{(N,T)}{\boldsymbol{Q}_{}} 
\end{equation}
where $\boldsymbol{Q}$ is the matrix  of the quantities of $N$ commodities in $T$ periods/countries and $\ast$ stands for the Hadamard (element-wise) product.
According to Eq.~\eqref{eq:multi33}, the values of $N$ commodities at time $t$ or for the $t$-th country can be represented as
%Eq.~\eqref{eq:multi33} entails the following 
\begin{equation}\label{eq:kjk}
{\boldsymbol{v}_t}=\lambda_t {\tilde{\boldsymbol{p}}}\ast{\boldsymbol{q}_t}+{\bm{\varepsilon}_t}\,\,\,\,\,\forall\, t=1,2,\dots,T
\end{equation}
where  $\boldsymbol{v}_{t}$ and $\boldsymbol{q}_{t}$ are the $t$-th columns of $\boldsymbol{V}$ and $\boldsymbol{Q}$ respectively, and ${\boldsymbol{\varepsilon}_t}$ is added to embody the error term inherent in the model specification of Eq.~\eqref{eq:multi33}. 
The above formula, taking into account the identity
\[\underset{(N,1)}{\tilde{\boldsymbol{p}}}*\underset{(N,1)}{\boldsymbol{q}_t}=\underset{(N,N)}{\boldsymbol{D}_{\tilde{\boldsymbol{p}}}} \underset{(N,1)}{\boldsymbol{q}_t}, \]
can be re-written as
\begin{equation}\label{eq:mod1}
\delta_t \underset{(N,1)}{\boldsymbol{v}_t} =\underset{(N,N)}{\boldsymbol{D}_{\tilde{\boldsymbol{p}}}}\,\underset{(N,1)}{\boldsymbol{q}_{t_{}}}+\underset{(N,1)}{\boldsymbol{\varepsilon}_t}\,\,\,\,\,\forall\, t=1,2,\dots,T                                            \end{equation}
where $\delta_t=(\lambda_t)^{-1}$ takes the role of the deflator, and  $\boldsymbol{D}_{\tilde{\boldsymbol{p}}}$ denotes a diagonal matrix with diagonal entries equal to the elements of  $\tilde{\boldsymbol{p}}$.\footnote{The matrix $\boldsymbol{D}_{\tilde{\boldsymbol{p}}}$ is defined as follows:
\[\boldsymbol{D}_{\tilde{\boldsymbol{p}}}=\begin{bmatrix} \tilde{p}_{1} & 0 & \dots &0 \\ 0 & \tilde{p}_{2} & \dots &0
\\ \vdots & \vdots & \ddots &\vdots\\ 0 & 0 & \dots &\tilde{p}_{N} \end{bmatrix}.
\]\label{ftn:matrix}} 
Eq.~\eqref{eq:mod1} expresses the value, $v_{it}$, of each commodity $i$ at time $t$ (discounted by a factor  $\delta_t$) as the product between the (time invariant) reference  price, $\tilde{p}_i$, and the corresponding quantity, $q_{it}$, plus an error term, $\varepsilon_{it}$. By assuming $\delta_t\neq 0$, its inverse $\lambda_t=\delta_t^{-1}$ tallies with the price index in Eq.~\eqref{eq:kjk}.
%\footnote{Note that $t=1,\dots,T$, $\boldsymbol{\lambda}=\boldsymbol{\delta}^{-1}$ is defined as the vector of the reciprocals of the non-null entries of  $\boldsymbol{\delta}$ and zeros otherwise, that is  
%\begin{equation}
%\label{eq:666}
%\boldsymbol{\lambda}=[\lambda_t]=\begin{cases} \delta_{t}^{-1}\,\,\,\,\,\,\mbox{if}\,\,\,\delta_{t}\neq 0
%\\
%0 \,\,\,\,\,\,\,\,\,\,\,\,\mbox{otherwise}
%\end{cases}.
%\end{equation}}
Over $T$ periods/countries, the model can be written as
\begin{equation}\label{eq:1}
\underset{(N, T)}{\boldsymbol{V}_{}}\, \, \underset{(T, T)}{\boldsymbol{D}_{\boldsymbol{\delta}}}=  \underset{(N, N)}{\boldsymbol{D}_{\tilde{\boldsymbol{p}}}}\, \, \underset{(N, T)}{\boldsymbol{Q}_{}} + \underset{(N, T)}{\boldsymbol{E}_{}}
\end{equation}
where $\boldsymbol{D}_{\boldsymbol{\delta}}$ is a $T\times T$ diagonal matrix with diagonal entries equal to the elements of $\boldsymbol{\delta}$.
%(see Footnote~\ref{ftn:matrix}). 

 With no lack of generality, we assume that the first period is the base period (that is  $\delta_1$=$\lambda_1=1$), and write the first equation separately from the others $T-1$. The system takes the form
\begin{equation}\label{eq:1b}
\left[\underset{(N,1)}{\boldsymbol{v}_{1},}\,\, \underset{(N,T-1)}{\boldsymbol{V}_1}\right] \begin{bmatrix} 1 &  \underset{(1,T-1)}{\boldsymbol{0}'}\\ \underset{(T-1,1)}{\boldsymbol{0}} & \underset{(T-1,T-1)}{\boldsymbol{\tilde{D}_\delta}}
\end{bmatrix} 
= \underset{}{\boldsymbol{D}_{\tilde{\boldsymbol{p}}}} \left[\underset{(N,1)}{\boldsymbol{q}_{1},}\,\, \underset{(N,T-1)}{\boldsymbol{Q}_1}\right]+\left[\underset{(N,1)}{\boldsymbol{\varepsilon}_{1},}\,\, \underset{(N,T-1)}{\boldsymbol{E}_1}\right]
\end{equation}
or equivalently, the form
\begin{equation}
\label{eq:17}
\begin{cases}
& \boldsymbol{v}_1=\boldsymbol{D}_{\tilde{\boldsymbol{p}}}\boldsymbol{q}_1+\boldsymbol{\varepsilon}_1
\\& \boldsymbol{V}_1 \boldsymbol{\tilde{D}_\delta}=\boldsymbol{D_{\tilde{\boldsymbol{p}}}}\boldsymbol{Q}_1+\boldsymbol{E}_1
\end{cases}.
\end{equation}
After some computations\footnote{Use has been made of the relationships
\[vec(\boldsymbol{ABC})=(\boldsymbol{C}'\otimes \boldsymbol{A})\,vec(\boldsymbol{A}) \,\,\,\,\,\,\mbox{and}\,\,\,     \,\,\,                  
\underset{(N^2,1)}{vec(\boldsymbol{D}_{\boldsymbol{a}})}=\boldsymbol{R}'_N\,\underset{(N,1)}{\boldsymbol{a}}\]                                              
where $\boldsymbol{D}_{\boldsymbol{a}}$ is a diagonal matrix whose diagonal entries are the elements of the vector $\boldsymbol{a}$ and $\boldsymbol{R}_N$ is the transition matrix from the Kronecker to the Hadamard product \citep{faliva1996hadamard}.
},  the system can be more conveniently rewritten in the form
\begin{equation}
\label{eq:sist1}
\begin{cases}
&  \boldsymbol{v}_1=(\boldsymbol{q}_1' \otimes \boldsymbol{I}_N) \boldsymbol{R}_N' \tilde{\boldsymbol{p}}+\boldsymbol{\varepsilon}_1
\\
& \underset{(N(T-1),1)}{\boldsymbol{0}}=(\boldsymbol{I}_{T-1}\otimes (-\boldsymbol{V}_1))\boldsymbol{R}_{T-1}'\boldsymbol{\delta}+(\boldsymbol{Q}_1'\otimes \boldsymbol{I}_N)\boldsymbol{R}_N'\tilde{\boldsymbol{p}}+\boldsymbol{\eta}\
\end{cases}.
\end{equation}
Here, $\boldsymbol{\eta}= vec(\boldsymbol{E}_1)$, $\boldsymbol{\delta}$ is a vector whose elements are the diagonal entries of $\tilde{\boldsymbol{D}}_{\boldsymbol{\delta}}$ as specified in Eq.~\eqref{eq:1b}, and $\boldsymbol{R}_j$ denotes the transition matrix from the Kronecker to the Hadamard product.\footnote{The matrix $\boldsymbol{R}_{j}'$ is defined as follows \[\underset{(j^2\times j)}{\boldsymbol{R}_j'} =
\begin{bmatrix} \underset{(1,j)}{\boldsymbol{e}_{1}}\otimes\underset{(1,j)}{\boldsymbol{e}_{1}} &
\underset{(j,1)}{\boldsymbol{e}_{2}}\otimes\underset{(j,1)}{\boldsymbol{e}_{2}}
& \dots  & \underset{j,1)}{\boldsymbol{e}_{j}}\otimes\underset{(1,j)}{\boldsymbol{e}_{j}} \end{bmatrix},
\]
where \noindent  $\boldsymbol{e}_i$  represents the $N$ dimensional $i$-th elementary vector. } 
An estimate of the vector $\boldsymbol{\delta}$ can be obtained by applying generalized least squares (GLS), by taking
\begin{equation}\label{eq:11}
\mathds{E}(\boldsymbol{\mu}\boldsymbol{\mu}')=\underset{(NT, NT)}{\boldsymbol{\Omega}}=\mbox{diag}\left[\bm{\vartheta}\right]=
\begin{bmatrix} \underset{(N,N)}{\boldsymbol{\Omega}_{11}} &  \underset{(N,N(T-1))}{\boldsymbol{0}}\\ 
\underset{(N,N)}{\boldsymbol{0}} & \underset{(N(T-1),N(T-1))}{\boldsymbol{\Omega}^{*}}
\end{bmatrix}
\end{equation}
where $\boldsymbol{\mu}'=[\varepsilon_{1}, \boldsymbol{\eta}']$ and  $\bm{\Omega}^{*}$ is a block diagonal matrix
\begin{equation}
\boldsymbol{\Omega}^{*}=\begin{bmatrix} \underset{(N,N)}{\boldsymbol{\Omega}_{22}} &  \underset{(N,N)}{\boldsymbol{0}}& ... \underset{(N,N)}{\boldsymbol{0}}\\
\underset{(N,N)}{\boldsymbol{0}} &\underset{(N,N)}{\boldsymbol{\Omega}_{33}}& ... \underset{(N,N)}{\boldsymbol{0}}\\
\vdots\\
\underset{(N,N)}{\boldsymbol{0}} &\underset{(N,N)}{\boldsymbol{0}}& ... \underset{(N,N)}{\boldsymbol{\Omega}_{TT}}\\
\end{bmatrix}.
\end{equation}
In this connection, we have the following result. 
\begin{theorem}\label{th:2gls}
\textit{The GLS estimate of the deflator vector $\boldsymbol{\delta}$ is given by 
\begin{equation}\label{eq:deltagg12}
\begin{split}\boldsymbol{\widehat{\delta}}_{GLS}=& 
\left\{\boldsymbol{I}_{T-1}\ast\boldsymbol{\widetilde{V}}_1'\boldsymbol{V}_1-(\boldsymbol{Q}'_1\ast\boldsymbol{\widetilde{V}}_1')\left[\sum_{j=1}^{T}\boldsymbol{q}_j\boldsymbol{q}_j'\ast{\boldsymbol{\Omega}^{-1}_{j,j}}\right]^{-1}(\boldsymbol{Q}_1\ast\boldsymbol{\widetilde{V}}_1)\right\}^{-1}
\\
&(\boldsymbol{Q}'_1\ast\boldsymbol{\widetilde{V}}_1')\left[\sum_{j=1}^{T}\boldsymbol{q}_j\boldsymbol{q}_j'\ast{\boldsymbol{\Omega}^{-1}_{j,j}}\right]^{-1}(\boldsymbol{q}_1\ast \boldsymbol{\widetilde{v}}_1)
\end{split}
\end{equation}
where $\boldsymbol{\widetilde{V}_1}$ is the matrix whose $j$-th columns is ${\boldsymbol{\Omega}^{-1}_{j,j}}\boldsymbol{v}_j$ and $\boldsymbol{\widetilde{v}}_1=\boldsymbol{\Omega^{-1}}_{1,1}{\boldsymbol{v}}_1$.
The vector $\bm{\widehat{\lambda}}$, whose entries are the reciprocals of non-null elements of the GLS estimate of $\boldsymbol{\delta}$ and zero otherwise,
is the MPL index  of $N$ commodities over $T$ periods or between $T$ countries}.\footnote{
The vector $\bm{\lambda}$ is defined as follows
\begin{equation}
\label{eq:666}
\boldsymbol{\lambda}=[\lambda_t]\,\,\,\mbox{with}\,\,\,\lambda_t=\begin{cases} \delta_{t}^{-1}\,\,\,\,\,\,\mbox{if}\,\,\,\delta_{t}\neq 0
\\
0 \,\,\,\,\,\,\,\,\,\,\,\,\mbox{otherwise}
\end{cases}.
\end{equation}
}
%is the MPL index  of $N$ commodities over $T$ periods or between $T$ countries}.
\end{theorem}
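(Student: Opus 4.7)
The plan is to cast the two-equation system \eqref{eq:sist1} as a single stacked linear regression model $\bm{y}=\bm{X}\bm{\beta}+\bm{\mu}$ with unknown parameter vector $\bm{\beta}=[\tilde{\bm{p}}',\bm{\delta}']'$, apply the standard GLS formula, and then use partitioned regression (Schur complement) to concentrate out $\tilde{\bm{p}}$ and isolate $\widehat{\bm{\delta}}$. Setting $\bm{y}=[\bm{v}_1',\bm{0}']'$ and collecting the regressors into
\[
\bm{X}=\left[\bm{X}_1,\bm{X}_2\right],\quad
\bm{X}_1=\begin{bmatrix}(\bm{q}_1'\otimes \bm{I}_N)\bm{R}_N'\\ (\bm{Q}_1'\otimes \bm{I}_N)\bm{R}_N'\end{bmatrix},\quad
\bm{X}_2=\begin{bmatrix}\bm{0}\\ -(\bm{I}_{T-1}\otimes \bm{V}_1)\bm{R}_{T-1}'\end{bmatrix},
\]
the GLS estimator is $\widehat{\bm{\beta}}=(\bm{X}'\bm{\Omega}^{-1}\bm{X})^{-1}\bm{X}'\bm{\Omega}^{-1}\bm{y}$, and by the block-inversion identity
\[
\widehat{\bm{\delta}}=(A_{22}-A_{21}A_{11}^{-1}A_{12})^{-1}(b_2-A_{21}A_{11}^{-1}b_1),
\]
where $A_{ij}=\bm{X}_i'\bm{\Omega}^{-1}\bm{X}_j$ and $b_i=\bm{X}_i'\bm{\Omega}^{-1}\bm{y}$.

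The core step is to simplify the Kronecker/Hadamard algebra using the structure of $\bm{R}_N'$. Since its $i$-th column equals $\bm{e}_i\otimes\bm{e}_i$, a one-line calculation of entries shows $(\bm{q}'\otimes \bm{I}_N)\bm{R}_N'=\bm{D}_{\bm{q}}$ for any $\bm{q}\in\R^N$; stacking over the columns of $\bm{Q}_1$ makes $(\bm{Q}_1'\otimes \bm{I}_N)\bm{R}_N'$ the column stack $[\bm{D}_{\bm{q}_2}',\ldots,\bm{D}_{\bm{q}_T}']'$. A symmetric argument gives $(\bm{I}_{T-1}\otimes \bm{V}_1)\bm{R}_{T-1}'=\mathrm{blkdiag}(\bm{v}_2,\ldots,\bm{v}_T)$. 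Combined with the elementary identity $\bm{D}_{\bm{a}}\bm{M}\bm{D}_{\bm{b}}=(\bm{a}\bm{b}')\ast \bm{M}$, these reduce the four blocks of $\bm{X}'\bm{\Omega}^{-1}\bm{X}$ to
\[
A_{11}=\sum_{j=1}^T \bm{q}_j\bm{q}_j'\ast\bm{\Omega}_{jj}^{-1},\quad
A_{12}=-(\bm{Q}_1\ast\widetilde{\bm{V}}_1),\quad
A_{22}=\bm{I}_{T-1}\ast\widetilde{\bm{V}}_1'\bm{V}_1,
\]
with $\widetilde{\bm{V}}_1$ as defined in the statement, while $b_1=\bm{q}_1\ast\widetilde{\bm{v}}_1$ and $b_2=\bm{0}$ because the lower block of $\bm{y}$ vanishes.

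Plugging these four quantities into the Schur-complement formula immediately reproduces \eqref{eq:deltagg12}: the outer inverse is exactly $A_{22}-A_{21}A_{11}^{-1}A_{12}$, and the factor on the right collapses to $(\bm{Q}_1'\ast\widetilde{\bm{V}}_1')A_{11}^{-1}(\bm{q}_1\ast\widetilde{\bm{v}}_1)$ after the two minus signs in $A_{21}$ and $A_{12}$ cancel. The statement about $\widehat{\bm{\lambda}}$ is then just the definition of elementwise reciprocal recorded in the footnote accompanying \eqref{eq:666}.

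The routine bookkeeping is all in step two — tracking Kronecker/Hadamard products through ten or so non-commuting factors and recognising each simplification to the Hadamard form. I expect the main obstacle to be justifying invertibility: the theorem tacitly requires $A_{11}=\sum_j \bm{q}_j\bm{q}_j'\ast \bm{\Omega}_{jj}^{-1}$ and the Schur complement $A_{22}-A_{21}A_{11}^{-1}A_{12}$ to be nonsingular, which should follow from positive definiteness of $\bm{\Omega}$ together with the reference-basket coverage assumption (each commodity appears in at least two periods/countries, so each row of $\bm{Q}$ is nonzero, giving $A_{11}\succ 0$), plus a rank condition on $\bm{V}_1$ ensuring $A_{22}$ is strictly positive on its diagonal. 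I would flag this invertibility remark before delivering the final formula.
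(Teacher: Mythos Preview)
Your proposal is correct and follows essentially the same route as the paper: stack the system into a single GLS regression, compute the blocks of $\bm{X}'\bm{\Omega}^{-1}\bm{X}$ and $\bm{X}'\bm{\Omega}^{-1}\bm{y}$, and extract $\widehat{\bm{\delta}}$ via partitioned inversion. The only cosmetic differences are that the paper orders the parameter vector as $[\bm{\delta}',\tilde{\bm{p}}']'$ rather than $[\tilde{\bm{p}}',\bm{\delta}']'$ and carries the Kronecker/Hadamard simplifications through the abstract identity $\bm{R}_N(\bm{A}\otimes\bm{B})\bm{R}_M'=\bm{A}\ast\bm{B}$ instead of your more concrete (and arguably cleaner) identification $(\bm{q}'\otimes\bm{I}_N)\bm{R}_N'=\bm{D}_{\bm{q}}$; the paper also does not discuss the invertibility caveats you raise.
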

\begin{proof}
See Appendix~\ref{app:proofs2gls}.
\end{proof}
\noindent  The following corollaries provide estimates of the deflator vector for special cases of interest.  
\begin{corollary}\label{cor:glsstat1}
\textit{Let us assume that the error terms of the system in Eq.~\eqref{eq:sist1} are stationary. Accordingly, the diagonal blocks, ${\boldsymbol{\Omega}_{j,j}}$ of the matrix $\boldsymbol{\Omega}$ are the same, i.e. ${\boldsymbol{\Omega}_{j,j}}=\widetilde{\boldsymbol{\Omega}}$ for all $j$. Then, the GLS estimate of the deflator vector, $\boldsymbol{\delta}$, is }
\begin{equation}\label{eq:cc1}
\begin{split}\boldsymbol{\widehat{\delta}}_{GLS}=&\left\{
(\boldsymbol{I}_{T-1}*\boldsymbol{\widehat{V}}_1'\boldsymbol{V}_1)-(\boldsymbol{Q}_1'*\boldsymbol{\widehat{V}}_1')\left[
(\boldsymbol{q}_1\boldsymbol{q}_1'+\boldsymbol{Q}_1\boldsymbol{Q}_1')*\widetilde{\boldsymbol{\Omega}}^{-1}
\right]^{-1}(\boldsymbol{Q}_1*\boldsymbol{\widehat{V}}_1)
\right\}^{-1}\cdot
\\
&(\boldsymbol{Q}_1'*\boldsymbol{\widehat{V}}_1')\left[
(\boldsymbol{q}_1\boldsymbol{q}_1'+\boldsymbol{Q}_1\boldsymbol{Q}_1')*\widetilde{\boldsymbol{\Omega}}^{-1}
\right]^{-1}(\boldsymbol{q}_1\ast \boldsymbol{\widehat{v}}_1)
\end{split}
\end{equation}
\end{corollary}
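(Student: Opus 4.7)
The plan is to obtain Eq.~\eqref{eq:cc1} by specializing the general GLS expression in Eq.~\eqref{eq:deltagg12} under the stationarity restriction $\boldsymbol{\Omega}_{j,j}=\widetilde{\boldsymbol{\Omega}}$ for all $j$; no fresh optimization problem needs to be solved, and I would borrow Theorem~\ref{th:2gls} verbatim as the starting point.

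The first step is to collapse the two identical bracketed factors $\sum_{j=1}^{T}\boldsymbol{q}_j\boldsymbol{q}_j'\ast\boldsymbol{\Omega}^{-1}_{j,j}$ appearing in Eq.~\eqref{eq:deltagg12}. Since $\boldsymbol{\Omega}^{-1}_{j,j}=\widetilde{\boldsymbol{\Omega}}^{-1}$ is now independent of $j$, linearity of the Hadamard product in its first argument yields
\[
\sum_{j=1}^{T}\boldsymbol{q}_j\boldsymbol{q}_j'\ast\widetilde{\boldsymbol{\Omega}}^{-1}=\left(\sum_{j=1}^{T}\boldsymbol{q}_j\boldsymbol{q}_j'\right)\ast\widetilde{\boldsymbol{\Omega}}^{-1},
\]
and exploiting the partition $\boldsymbol{Q}=[\boldsymbol{q}_1,\boldsymbol{Q}_1]$ gives $\sum_{j=1}^{T}\boldsymbol{q}_j\boldsymbol{q}_j'=\boldsymbol{q}_1\boldsymbol{q}_1'+\boldsymbol{Q}_1\boldsymbol{Q}_1'$, which is precisely the bracketed factor of Eq.~\eqref{eq:cc1}.

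The second step is to relabel the weighted data objects. By the definition given in Theorem~\ref{th:2gls}, the $j$-th column of $\boldsymbol{\widetilde{V}}_1$ is $\boldsymbol{\Omega}^{-1}_{j,j}\boldsymbol{v}_j$, which under stationarity becomes $\widetilde{\boldsymbol{\Omega}}^{-1}\boldsymbol{v}_j$; similarly $\boldsymbol{\widetilde{v}}_1=\widetilde{\boldsymbol{\Omega}}^{-1}\boldsymbol{v}_1$. These are exactly the ``hat'' objects $\boldsymbol{\widehat{V}}_1$ and $\boldsymbol{\widehat{v}}_1$ used in the statement of Corollary~\ref{cor:glsstat1}. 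Substituting these identifications together with the simplified bracket back into Eq.~\eqref{eq:deltagg12} reproduces Eq.~\eqref{eq:cc1} term by term.

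I do not anticipate any substantive obstacle: the result is a straight notational collapse enabled by stationarity. The only care required is the verification that Hadamard multiplication distributes over ordinary summation, so that the common factor $\widetilde{\boldsymbol{\Omega}}^{-1}$ can be pulled outside the sum, together with the elementary block identity $\sum_{j=1}^{T}\boldsymbol{q}_j\boldsymbol{q}_j'=\boldsymbol{q}_1\boldsymbol{q}_1'+\boldsymbol{Q}_1\boldsymbol{Q}_1'$; both are routine and need no further machinery beyond what is already invoked in the proof of Theorem~\ref{th:2gls}.
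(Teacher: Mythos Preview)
Your argument is correct: specializing Eq.~\eqref{eq:deltagg12} under $\boldsymbol{\Omega}_{j,j}=\widetilde{\boldsymbol{\Omega}}$, together with the distributivity of the Hadamard product over sums and the block identity $\sum_{j=1}^{T}\boldsymbol{q}_j\boldsymbol{q}_j'=\boldsymbol{q}_1\boldsymbol{q}_1'+\boldsymbol{Q}_1\boldsymbol{Q}_1'$, delivers Eq.~\eqref{eq:cc1} directly, and the relabeling $\boldsymbol{\widetilde{V}}_1\mapsto\boldsymbol{\widehat{V}}_1$, $\boldsymbol{\widetilde{v}}_1\mapsto\boldsymbol{\widehat{v}}_1$ is exactly the one used in the corollary.

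The paper takes a slightly different route: rather than plugging into the final formula of Theorem~\ref{th:2gls}, it repeats the derivation of that theorem under the Kronecker structure $\boldsymbol{\Omega}=\boldsymbol{I}_T\otimes\widetilde{\boldsymbol{\Omega}}$, recomputing $\boldsymbol{X}'\boldsymbol{\Omega}^{-1}\boldsymbol{X}$ and $\boldsymbol{X}'\boldsymbol{\Omega}^{-1}\boldsymbol{y}$ from scratch and then invoking partitioned inversion. Your approach is more economical, since it treats Theorem~\ref{th:2gls} as a black box and only manipulates its output; the paper's approach makes the simplified block structure of $\boldsymbol{X}'\boldsymbol{\Omega}^{-1}\boldsymbol{X}$ explicit, which is useful if one wants to read off, say, $\boldsymbol{\Lambda}_{11}$ or $\boldsymbol{\Lambda}_{22}$ as well. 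Substantively, the two arguments are equivalent and rest on the same algebraic facts.
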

\noindent\textit{where $\boldsymbol{\widehat{V}_1}$ is the matrix whose $j-th$ column is $\boldsymbol{\widetilde{\Omega}}^{-1}\boldsymbol{v}_j$ and $\boldsymbol{\widehat{v}}_1=\boldsymbol{\widetilde{\Omega}}^{-1}\boldsymbol{v}_1$ }.
\begin{proof}
The proof is given in Appendix~\ref{app:proofs2g}.
\end{proof}

\noindent The case $T=2$ is worth considering because it sheds light on the index structure. When assuming stationary and uncorrelated error terms, the matrix $\widetilde{\boldsymbol{\Omega}}$ reduces to a diagonal matrix, i.e. $\widetilde{\boldsymbol{\Omega}}=diag(\boldsymbol{\theta})=[\theta_i]$, and the price index turns out to be simply the ratio of weighted price averages, as stated in the following
corollary. % with weights depending on the harmonic means of the squared quantities, as stated in the following corollary.
\begin{corollary}\label{cor:1g}
\textit{Let the matrix $\widetilde{\boldsymbol{\Omega}}$ be diagonal and $T=2$. Then the MPL index $\widehat{\lambda}_{GLS}$ for a set of $N$ commodities is
\begin{equation}
\label{eq:ffrt0g}
\widehat{{\lambda}}_{GLS}=\frac{\sum_{i=1}^Np_{i2}\overline{\pi}_i}{\sum_{i=1}^Np_{i1}\overline{\pi}_i}=\frac{\boldsymbol{p}_{2}'\boldsymbol{\overline{\pi}} }
{\boldsymbol{ p}_{1}'\boldsymbol{\overline{\pi}}}.
\end{equation}
Here $\boldsymbol{p}_{t}$ is the price vector of the $N$ commodities at time $t$ and $\boldsymbol{\overline{\pi}}$ is the weight vector whose $i$-th entry is 
\begin{equation}\label{eq:gf}
\overline{\pi}_i=\frac{1}{\vartheta_i }p_{i2}\frac{q_{i1}^2q_{i2}^2}{q_{i1}^2+q_{i2}^2}.
\end{equation}
 } 
\end{corollary}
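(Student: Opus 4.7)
The plan is to specialize the GLS formula in Corollary~\ref{cor:glsstat1} (Eq.~\eqref{eq:cc1}) to the stated setting $T=2$ and $\widetilde{\boldsymbol{\Omega}}=\mathrm{diag}(\boldsymbol{\theta})$, after which $\widehat{\boldsymbol{\delta}}_{GLS}$ collapses to a scalar that can be inverted directly to obtain $\widehat{\lambda}_{GLS}$. First I would note the dimensional reductions: with $T=2$ the matrices $\boldsymbol{Q}_{1}$ and $\boldsymbol{V}_{1}$ of Eq.~\eqref{eq:1b} become the single columns $\boldsymbol{q}_{2}$ and $\boldsymbol{v}_{2}$, the identity $\boldsymbol{I}_{T-1}$ becomes the scalar $1$, and $\widehat{\boldsymbol{V}}_{1}$, $\widehat{\boldsymbol{v}}_{1}$ are the single vectors $D_{\boldsymbol{\theta}}^{-1}\boldsymbol{v}_{2}$ and $D_{\boldsymbol{\theta}}^{-1}\boldsymbol{v}_{1}$.

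The second, and key, simplification is that when $\widetilde{\boldsymbol{\Omega}}^{-1}=D_{\boldsymbol{\theta}}^{-1}$ is diagonal, the Hadamard product $(\boldsymbol{q}_{1}\boldsymbol{q}_{1}'+\boldsymbol{q}_{2}\boldsymbol{q}_{2}')\ast\widetilde{\boldsymbol{\Omega}}^{-1}$ is likewise diagonal, with $i$-th entry $(q_{i1}^{2}+q_{i2}^{2})/\vartheta_{i}$. Its inverse is therefore elementwise, and the three scalar quantities produced by Eq.~\eqref{eq:cc1} collapse to simple sums indexed by the $N$ commodities. Specifically, I would compute
\[
\boldsymbol{v}_{2}'D_{\boldsymbol{\theta}}^{-1}\boldsymbol{v}_{2}=\sum_{i=1}^{N}\frac{v_{i2}^{2}}{\vartheta_{i}},\qquad
(\boldsymbol{q}_{2}\ast D_{\boldsymbol{\theta}}^{-1}\boldsymbol{v}_{2})_{i}=\frac{q_{i2}v_{i2}}{\vartheta_{i}},
\]
and the analogous expression at $t=1$, and then substitute $v_{it}=p_{it}q_{it}$ wherever a value appears.

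The arithmetic then proceeds routinely: the bracketed ``denominator'' in Eq.~\eqref{eq:cc1} reduces to
\[
\sum_{i=1}^{N}\frac{v_{i2}^{2}}{\vartheta_{i}}-\sum_{i=1}^{N}\frac{q_{i2}^{2}v_{i2}^{2}}{\vartheta_{i}(q_{i1}^{2}+q_{i2}^{2})}=\sum_{i=1}^{N}\frac{p_{i2}^{2}\,q_{i1}^{2}q_{i2}^{2}}{\vartheta_{i}(q_{i1}^{2}+q_{i2}^{2})}=\boldsymbol{p}_{2}'\overline{\boldsymbol{\pi}},
\]
while the trailing numerator reduces to $\sum_{i}p_{i1}p_{i2}q_{i1}^{2}q_{i2}^{2}/[\vartheta_{i}(q_{i1}^{2}+q_{i2}^{2})]=\boldsymbol{p}_{1}'\overline{\boldsymbol{\pi}}$, where $\overline{\pi}_{i}$ is exactly the weight defined in Eq.~\eqref{eq:gf}. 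Hence $\widehat{\delta}_{GLS}=\boldsymbol{p}_{1}'\overline{\boldsymbol{\pi}}/\boldsymbol{p}_{2}'\overline{\boldsymbol{\pi}}$, and taking the reciprocal as prescribed in Eq.~\eqref{eq:666} yields the claimed ratio.

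The only delicate point I anticipate is the bookkeeping for the Hadamard products of vectors and of a rank-two matrix with a diagonal one; once it is recognized that the latter yields a diagonal matrix, every inversion and product becomes entrywise and the remaining algebra is just collecting the $i$-indexed terms, using $v_{it}=p_{it}q_{it}$, into the stated weighted averages.
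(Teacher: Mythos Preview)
Your proposal is correct and follows essentially the same route as the paper: specialize the stationary GLS formula to $T=2$ so that $\boldsymbol{Q}_1=\boldsymbol{q}_2$, $\boldsymbol{V}_1=\boldsymbol{v}_2$, exploit that $(\boldsymbol{q}_1\boldsymbol{q}_1'+\boldsymbol{q}_2\boldsymbol{q}_2')\ast\widetilde{\boldsymbol{\Omega}}^{-1}$ is diagonal when $\widetilde{\boldsymbol{\Omega}}$ is, reduce everything to entrywise sums, and substitute $v_{it}=p_{it}q_{it}$ to obtain $\widehat{\delta}_{GLS}=\boldsymbol{p}_1'\overline{\boldsymbol{\pi}}/\boldsymbol{p}_2'\overline{\boldsymbol{\pi}}$ and then invert. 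The only cosmetic difference is that the paper restates the $T=2$ blocks $\boldsymbol{X}'\boldsymbol{\Omega}^{-1}\boldsymbol{X}$ and $\boldsymbol{X}'\boldsymbol{\Omega}^{-1}\boldsymbol{y}$ explicitly before simplifying, whereas you start directly from Eq.~\eqref{eq:cc1}; the computations and the key diagonality observation are identical.
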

\begin{proof}
See Appendix~\ref{app:proofs3g}.
\end{proof}
\noindent 
 In case the diagonal entries $\vartheta_i$ of $\widetilde{\boldsymbol{\Omega}}$ are all equal, say  equal to 1,  then Eq.~\eqref{eq:ffrt0g} would tally with the OLS version of the MPL index. In this case the weights of the index would be function of the harmonic mean of the squared quantities
\begin{equation}\label{eq:ght}
\overline{\pi}_i=p_{i2}\frac{q_{i1}^2q_{i2}^2}{q_{i1}^2+q_{i2}^2}.
\end{equation}
\noindent
Two considerations are worth making about Eq.~\eqref{eq:ffrt0g}. The first is that most well-known price indices can be viewed as particular cases of the MPL index. In fact, they can be obtained from Eq.~\eqref{eq:ffrt0g} for particular choices of the scalars $\vartheta_i$, as proved in the following corollary.
\begin{corollary}\label{cor:81g}
\textit{By taking
\begin{equation}\label{eq:gf1}
\vartheta_i=p_{i2}\frac{q_{i1}q_{i2}^2}{q_{i1}^2+q_{i2}^2},
\end{equation}
then $\overline{\pi}_{i}=q_{i1}$ and the MPL index tallies with the Laspeyeres index, $\widehat{{\lambda}}_{L}$, i.e.
\begin{equation}\label{eq:gt1}
\widehat{{\lambda}}_{GLS}=\widehat{{\lambda}}_{L}=\frac{\sum_{i=1}^N p_{i2}q_{i1}}{\sum_{i=1}^N p_{i1}q_{i1}}=\frac{\boldsymbol{p}_2'\boldsymbol{q}_1}{\boldsymbol{p}_1'\boldsymbol{q}_1}.
\end{equation}
 By taking
\begin{equation}\label{eq:gf2}
\vartheta_i=p_{i2}\frac{q_{i1}^2q_{i2}}{q_{i1}^2+q_{i2}^2},
\end{equation}
then $\overline{\pi}_{i}=q_{i2}$ and the MPL index turns out to  tally with the Paasche index, $\widehat{{\lambda}}_{P}$, i.e. 
\begin{equation}\label{eq:gt2}
\widehat{{\lambda}}_{GLS}=\widehat{{\lambda}}_{P}=\frac{\sum_{i=1}^N p_{i2}q_{i2}}{\sum_{i=1}^N p_{i1}q_{i2}}=\frac{\boldsymbol{p}_2'\boldsymbol{q}_2}{\boldsymbol{p}_1'\boldsymbol{q}_2}.
\end{equation}
By taking
\begin{equation}\label{eq:gf3}
\vartheta_i=p_{i2}\frac{q_{i1}^2q_{i2}^2}{(q_{i1}^2+q_{i2}^2)(q_{i1}+q_{i2})},
\end{equation}
 then  $\overline{\pi}_{i}=(q_{i1}+ q_{i2})$ and the MPL index turns out to tally with the Marshal-Edgeworth index, $\widehat{{\lambda}}_{ME}$ i.e. 
\begin{equation}\label{eq:gt3}
\widehat{{\lambda}}_{GLS}=\widehat{{\lambda}}_{ME}=\frac{\sum_{i=1}^N p_{i2}(q_{i1}+q_{i2})}{\sum_{i=1}^N p_{i1}(q_{i1}+q_{i2})}=\frac{\boldsymbol{p}_2'(\boldsymbol{q}_{1}+\boldsymbol{q}_{2})}{\boldsymbol{p}_1'(\boldsymbol{q}_{1}+\boldsymbol{q}_{2})}.
\end{equation}
By taking
\begin{equation}\label{eq:gf4}
\vartheta_i=p_{i2}\frac{\left(q_{i1}q_{i2}\right)^{3/2}}{q_{i1}^2+q_{i2}^2},
\end{equation}
 then $\overline{\pi}_{i}=(q_{i1}q_{i2})^{1/2} $ and the MPL index turns out to tally with the Walsh index, $\widehat{{\lambda}}_{W}$, i.e. 
\begin{equation}\label{eq:gt4}
\widehat{{\lambda}}_{GLS}=\widehat{{\lambda}}_{W}=\frac{\sum_{i=1}^N p_{i2}(q_{i1}q_{i2})^{1/2}}{\sum_{i=1}^N p_{i1}(q_{i1}q_{i2})^{1/2}}=\frac{\boldsymbol{p}_2'\boldsymbol{\widetilde{q}}}{\boldsymbol{p}_1'\boldsymbol{\widetilde{q}}},
\end{equation}
where $\boldsymbol{\widetilde{q}}$ is a vector whose entries are the square roots of those of the vector $(\boldsymbol{q}_{1} \ast \boldsymbol{q}_{2})$.\\
By taking
\begin{equation}
\vartheta_i=\frac{1}{p_{i2}}
\end{equation}
and considering the square roots of the quantities in the index computation, then $\overline{\pi}_{i}=\frac{q_{i1}q_{i2}}{q_{i1}+q_{i2}} $ and the MPL index turns out to tally with the Geary-Khamis index \citep{drechsler1973weighting}, $\widehat{{\lambda}}_{GK}$, i.e. 
\begin{equation}\label{eq:gt4b}
\widehat{{\lambda}}_{GLS}=\widehat{{\lambda}}_{GK}=\frac{\sum_{i=1}^N p_{i2}\frac{q_{i1}q_{i2}}{q_{i1}+q_{i2}}}{\sum_{i=1}^N p_{i1}\frac{q_{i1}q_{i2}}{q_{i1}+q_{i2}}}=\frac{\boldsymbol{p}_2'\boldsymbol{\widehat{q}}}{\boldsymbol{p}_1'\boldsymbol{\widehat{q}}},
\end{equation}
where $\boldsymbol{\widehat{q}}$ is a vector whose $i$-th entry is $\frac{q_{i1}q_{i2}}{q_{i1}+q_{i2}}$.
}
\end{corollary}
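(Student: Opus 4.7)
The plan is to read the five cases as purely algebraic corollaries of Corollary~\ref{cor:1g}: since Eq.~\eqref{eq:ffrt0g} expresses $\widehat{\lambda}_{GLS}$ as a ratio $\boldsymbol{p}_2'\boldsymbol{\overline{\pi}}/\boldsymbol{p}_1'\boldsymbol{\overline{\pi}}$ with weights specified by Eq.~\eqref{eq:gf}, each claim reduces to verifying that the prescribed choice of $\vartheta_i$ yields the weight $\overline{\pi}_i$ advertised in the corollary. Once $\overline{\pi}_i$ is identified with $q_{i1}$, $q_{i2}$, $q_{i1}+q_{i2}$, $(q_{i1}q_{i2})^{1/2}$, or $q_{i1}q_{i2}/(q_{i1}+q_{i2})$, the corresponding Laspeyres, Paasche, Marshall--Edgeworth, Walsh, and Geary--Khamis expressions are immediate from the ratio form.

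Concretely, I would substitute each candidate $\vartheta_i$ into $\overline{\pi}_i=\vartheta_i^{-1}\,p_{i2}\,q_{i1}^2 q_{i2}^2/(q_{i1}^2+q_{i2}^2)$ and carry out the cancellations. For instance, with $\vartheta_i=p_{i2}q_{i1}q_{i2}^2/(q_{i1}^2+q_{i2}^2)$ the factors $p_{i2}$, $q_{i2}^2$, and $(q_{i1}^2+q_{i2}^2)$ cancel leaving $\overline{\pi}_i=q_{i1}$, which inserted into Eq.~\eqref{eq:ffrt0g} gives Eq.~\eqref{eq:gt1}. The Paasche case is symmetric in the roles of $q_{i1}$ and $q_{i2}$; the Marshall--Edgeworth case adds the extra factor $(q_{i1}+q_{i2})^{-1}$ in $\vartheta_i$ so as to leave $q_{i1}+q_{i2}$ in $\overline{\pi}_i$; and the Walsh case exploits the identity $q_{i1}^2 q_{i2}^2/(q_{i1}q_{i2})^{3/2}=(q_{i1}q_{i2})^{1/2}$.

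The Geary--Khamis case is the one requiring the most attention and is the step I expect to be the only genuine obstacle: here the author prescribes both $\vartheta_i=1/p_{i2}$ and a preliminary re-parametrization in which the quantities entering Eq.~\eqref{eq:gf} are replaced by their square roots. I would therefore rewrite Eq.~\eqref{eq:gf} with $q_{it}\mapsto\sqrt{q_{it}}$, obtaining $\overline{\pi}_i=\vartheta_i^{-1}\,p_{i2}\,q_{i1}q_{i2}/(q_{i1}+q_{i2})$, and then plug in $\vartheta_i=1/p_{i2}$; the resulting weight is proportional (up to a common price factor absorbed into the normalization of $\widetilde{\boldsymbol{p}}$) to $q_{i1}q_{i2}/(q_{i1}+q_{i2})$, and substitution into Eq.~\eqref{eq:ffrt0g} delivers Eq.~\eqref{eq:gt4b}.

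Structurally, the proof is therefore a table of five elementary substitutions, and no new machinery beyond Corollary~\ref{cor:1g} is needed. I would present it as a single display for each index in turn, preceded by one sentence explaining that the strategy is merely to invert Eq.~\eqref{eq:gf} so as to read off the $\vartheta_i$ that produces a chosen classical weight, and followed by a short remark clarifying the square-root convention adopted in the Geary--Khamis derivation.
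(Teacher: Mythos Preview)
Your approach is correct and is exactly what the paper intends: the authors themselves write ``The proof is simple and follows straight forward. Therefore, it is omitted,'' so the direct substitution of each $\vartheta_i$ into Eq.~\eqref{eq:gf} and cancellation is the whole argument. One small caution on the Geary--Khamis case: after replacing $q_{it}\mapsto\sqrt{q_{it}}$ and setting $\vartheta_i=1/p_{i2}$ you actually obtain $\overline{\pi}_i=p_{i2}^{2}\,q_{i1}q_{i2}/(q_{i1}+q_{i2})$, and the factor $p_{i2}^{2}$ is commodity-specific, not a common scalar that can be ``absorbed into the normalization of $\tilde{\boldsymbol{p}}$''; the paper's statement is loose on this point, so simply record the substitution as written rather than trying to justify the extra price factor away.
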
 
\begin{proof}
The proof is simple and follows straight forward. Therefore,  it is omitted.
\end{proof}
\noindent The other consideration is that the index $\widehat{{\lambda}}_{GLS}$ can be obtained as solution of an optimization problem specified as in the following corollary. 
%\noindent The following corollary shows how the index $\widehat{{\lambda}}_{GLS}$ in Eq.~\eqref{eq:ffrt0g} is solution of an optimization problem. 
\begin{corollary}\label{cor:51g}
\textit{ With reference to the following model
\begin{equation}\label{eq:mlu}
\boldsymbol{p}_{2}=\lambda \boldsymbol{p}_{1} + \boldsymbol{\eta}
\end{equation}
where $\boldsymbol{\eta}$ is a vector of random terms, the index $\widehat{{\lambda}}_{GLS}$ is solution of the optimization problem
\begin{equation}
\min_{\lambda}\left\| \boldsymbol{p}_{2}-\lambda \boldsymbol{p}_{1}\right\|=\min_{\lambda}\left| \boldsymbol{e}\right\|
\end{equation} 
where $\left\| \boldsymbol{e}\right\|=(\boldsymbol{e}' \boldsymbol{A}\boldsymbol{e})^{1/2}$ is a (semi)norm of $\boldsymbol{e}$
and $\boldsymbol{A}$ is given by
\begin{equation}
\boldsymbol{A}=\overline{\boldsymbol{\pi}}\overline{\boldsymbol{\pi}}'
\end{equation}
with $\boldsymbol{\overline{\pi}}$ is  a vector whose the $i$-th entry, $\overline{\pi}_i$, is specified as in Eq.~\eqref{eq:gf}.
}
\end{corollary}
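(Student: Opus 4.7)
The plan is to exploit the rank-one structure of $\boldsymbol{A}=\overline{\boldsymbol{\pi}}\overline{\boldsymbol{\pi}}'$ to collapse the quadratic form into a scalar square, then solve a one-dimensional least-squares problem and match the resulting minimizer with the expression for $\widehat{\lambda}_{GLS}$ given in Corollary~\ref{cor:1g}.

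First, I would observe that for any vector $\boldsymbol{e}$,
\begin{equation*}
\boldsymbol{e}'\boldsymbol{A}\boldsymbol{e}\;=\;\boldsymbol{e}'\overline{\boldsymbol{\pi}}\,\overline{\boldsymbol{\pi}}'\boldsymbol{e}\;=\;(\overline{\boldsymbol{\pi}}'\boldsymbol{e})^{2}\,,
\end{equation*}
so that $\|\boldsymbol{e}\|=|\overline{\boldsymbol{\pi}}'\boldsymbol{e}|$. Since $\boldsymbol{A}$ is positive semi-definite of rank one, this is indeed only a semi-norm (it vanishes on the hyperplane $\overline{\boldsymbol{\pi}}'\boldsymbol{e}=0$), which accounts for the prefix ``semi'' in the statement. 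Substituting $\boldsymbol{e}=\boldsymbol{p}_{2}-\lambda\boldsymbol{p}_{1}$ from Eq.~\eqref{eq:mlu}, the optimization problem reduces to minimizing the scalar quantity $\big(\overline{\boldsymbol{\pi}}'\boldsymbol{p}_{2}-\lambda\,\overline{\boldsymbol{\pi}}'\boldsymbol{p}_{1}\big)^{2}$ in the single variable $\lambda$.

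Next, I would take the first-order condition with respect to $\lambda$:
\begin{equation*}
\frac{\partial}{\partial\lambda}\big(\overline{\boldsymbol{\pi}}'\boldsymbol{p}_{2}-\lambda\,\overline{\boldsymbol{\pi}}'\boldsymbol{p}_{1}\big)^{2}
\;=\;-2\,(\overline{\boldsymbol{\pi}}'\boldsymbol{p}_{1})\,\big(\overline{\boldsymbol{\pi}}'\boldsymbol{p}_{2}-\lambda\,\overline{\boldsymbol{\pi}}'\boldsymbol{p}_{1}\big)\;=\;0\,.
\end{equation*}
Provided $\overline{\boldsymbol{\pi}}'\boldsymbol{p}_{1}\neq 0$ (which holds under the mild assumption that the weights and base-period prices are strictly positive on a non-empty common basket), this yields the unique stationary point
\begin{equation*}
\widehat{\lambda}\;=\;\frac{\overline{\boldsymbol{\pi}}'\boldsymbol{p}_{2}}{\overline{\boldsymbol{\pi}}'\boldsymbol{p}_{1}}\,,
\end{equation*}
which coincides exactly with $\widehat{\lambda}_{GLS}$ as given in Eq.~\eqref{eq:ffrt0g}. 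Since the objective is a nonnegative quadratic in $\lambda$ with leading coefficient $(\overline{\boldsymbol{\pi}}'\boldsymbol{p}_{1})^{2}\geq 0$, the stationary point is a global minimum (attaining the value zero whenever $\boldsymbol{p}_{2}$ and $\boldsymbol{p}_{1}$ are co-linear modulo $\overline{\boldsymbol{\pi}}^{\perp}$).

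There is essentially no analytical obstacle: the only subtle point is recognizing that $\boldsymbol{A}$ is rank-deficient, so $\|\cdot\|$ is a semi-norm rather than a norm, and that the minimizer is uniquely identified only on the complement of $\ker(\boldsymbol{A})$. I would briefly flag this in the write-up, noting that $\widehat{\lambda}$ is characterized by the orthogonality relation $\overline{\boldsymbol{\pi}}'(\boldsymbol{p}_{2}-\widehat{\lambda}\boldsymbol{p}_{1})=0$, which is precisely the normal equation of a weighted least-squares problem with weight vector $\overline{\boldsymbol{\pi}}$, thereby linking the minimum-semi-norm interpretation back to the GLS derivation underlying Theorem~\ref{th:2gls}.
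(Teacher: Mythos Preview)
Your proposal is correct and follows essentially the same route as the paper: both derive the minimizer from the first-order condition of the quadratic $\boldsymbol{e}'\boldsymbol{A}\boldsymbol{e}$ with $\boldsymbol{e}=\boldsymbol{p}_2-\lambda\boldsymbol{p}_1$, obtaining $\lambda=\dfrac{\boldsymbol{p}_2'\boldsymbol{A}\boldsymbol{p}_1}{\boldsymbol{p}_1'\boldsymbol{A}\boldsymbol{p}_1}$ and then identifying it with $\widehat{\lambda}_{GLS}$. The only cosmetic difference is that the paper keeps $\boldsymbol{A}$ generic until the last step and then substitutes $\boldsymbol{A}=\overline{\boldsymbol{\pi}}\,\overline{\boldsymbol{\pi}}'$, whereas you exploit the rank-one structure upfront to reduce the problem to a scalar square; your added remarks on the semi-norm and the non-degeneracy condition $\overline{\boldsymbol{\pi}}'\boldsymbol{p}_1\neq 0$ are accurate and make explicit what the paper leaves implicit.
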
 
\begin{proof}
See Appendix~\ref{app:proofs1}.
\end{proof}

\noindent Eq.~\eqref{eq:mlu}, together with Figure~\ref{fig:1}, is useful to show that the index basket is the union of the intersection of the commodities that are present in at least two periods/countries and, accordingly, it does not include commodities which are not present in at least two periods/countries. In fact, simple computations prove that weights pertaining commodities not fulfilling this condition turn out to be null. Consequently, the value of the index does not change if computed leaving out them, as shown by the following example.     

\begin{example}\label{ex:2}
Let us consider the case of three commodities in two periods and assume that the first commodity is missing in both periods (i.e.  $q_{11}=q_{12}=0$). 
%\begin{itemize}\item $\boldsymbol{p}_1=[p_{11},p_{21},p_{31}]'=[0,1, 3]'$, $\boldsymbol{p}_2=[p_{12},p_{22},p_{32}]'=[0,1, 2]'$;
%\item $\boldsymbol{q}_1=[q_{11},q_{21},q_{31}]=[0,1,2]'$, $\boldsymbol{q}_2=[q_{12},q_{22},q_{32}]=[0,4,2]'$.
%\end{itemize} 
Then, let us assume that $\vartheta_i=1$ for each $i$ for simplicity. Simple computations prove that the weight, as defined in Eq.~\eqref{eq:ght}, associated to the first coefficient turns out to be null 
\[\begin{split}
&\overline{\pi}_1=p_{12}\frac{q_{11}^2q_{12}^2}{q_{11}^2+q_{12}^2}=p_{i2}\left(\frac{1}{\frac{1}{q_{i1}^2}}+\frac{1}{\frac{1}{q_{i2}^2}} \right)=p_{12}\frac{1}{\frac{1}{0}+\frac{1}{0} }=0; \\
%&\overline{\pi}_2=p_{22}\frac{q_{21}^2q_{22}^2}{q_{21}^2+q_{22}^2}=p_{22}\frac{16}{17};\\
%&\overline{\pi}_3=p_{32}\frac{q_{31}^2q_{32}^2}{q_{31}^2+q_{32}^2}=2\frac{4\cdot 4}{4+4}=\frac{32}{8}=4.
\end{split}\]
The same happens if the first commodity is missing in just one period, for instance the second one ($q_{11}=0$)
%\begin{itemize}\item $\boldsymbol{p}_1=[p_{11},p_{21},p_{31}]'=[0,1, 3]'$, $\boldsymbol{p}_2=[p_{12},p_{22},p_{32}]'=[2,1, 2]'$;
%\item $\boldsymbol{q}_1=[q_{11},q_{21},q_{31}]=[0,1,2]'$, $\boldsymbol{q}_2=[q_{12},q_{22},q_{32}]=[3,4,2]'$.\end{itemize} 
As in the previous case, the coefficient $\overline{\pi}_1$ vanishes
\[\begin{split}
&\overline{\pi}_1=p_{12}\frac{q_{11}^2q_{12}^2}{q_{11}^2+q_{12}^2}=0;\\
%&\overline{\pi}_2=p_{22}\frac{q_{21}^2q_{22}^2}{q_{21}^2+q_{22}^2}=1\frac{1\cdot 16}{1+16}=\frac{16}{17};\\
%&\overline{\pi}_3=p_{32}\frac{q_{31}^2q_{32}^2}{q_{31}^2+q_{32}^2}=2\frac{4\cdot 4}{4+4}=\frac{32}{8}=4.\\
\end{split}\]
Accordingly, in both cases
%the index turns out to be 
%\begin{equation}
%\[
%\widehat{{\lambda}}=\frac{\sum_{i=1}^3p_{i2}\pi_i}{\sum_{i=1}^3p_{i1}\pi_i}=\frac{2\cdot 0+ 1\cdot \frac{16}{17}+2 \cdot 4}{0\cdot 0 +1\cdot\frac{16}{17}+3\cdot4 }=1.35\,.
%\end{equation}
%\]
%Consequently, in both cases, 
the first commodity can be left out from the computation of the index. In fact, the value of the latter does not change if computed by considering only the second and the third commodities.
%that is considered the following vectors of prices and quantities 
%\begin{itemize}\item $\boldsymbol{p}_1=[p_{21},p_{31}]'=[1, 3]'$, $\boldsymbol{p}_2=[p_{22},p_{32}]'=[1, 2]'$;
%\item $\boldsymbol{q}_1=[q_{21},q_{31}]=[1,2]'$, $\boldsymbol{q}_2=[q_{22},q_{32}]=[4,2]'$.\end{itemize} Simple computations prove that 
%\[\begin{split}
%&\overline{\pi}_1=p_{22}\frac{q_{21}^2q_{22}^2}{q_{21}^2+q_{22}^2}=1\frac{1\cdot 16}{1+16}=\frac{16}{17};\\
%&\overline{\pi}_2=p_{32}\frac{q_{31}^2q_{32}^2}{q_{31}^2+q_{32}^2}=2\frac{4\cdot 4}{4+4}=\frac{32}{8}=4.\\
%\end{split}\]
%Accordingly, in this case, the index $\widehat{{\lambda}}$ turns out to be 
%\begin{equation}
%\[
%\widehat{{\lambda}}=\frac{\sum_{i=1}^2p_{i2}\pi_i}{\sum_{i=1}^2p_{i1}\pi_i}=\frac{1\cdot \frac{16}{17}+2 \cdot 4}{1\cdot\frac{16}{17}+3\cdot4 }=1.35\,.
%\]
%\end{equation}
\end{example}

%%%%%%%%%%%%%
\noindent As a by-product of Theorem~\ref{th:2gls}, we state the following result.

\begin{corollary}\label{cor:3}
\textit{The variance-covariance matrix of the deflator vector, $\widehat{\boldsymbol{\delta}}$, given in Theorem~\ref{th:2gls} is 
\begin{equation}
Var(\widehat{\boldsymbol{\delta}})=\sigma^2\,\boldsymbol{\kappa}^{-1}
\end{equation}
\noindent where 
\[\boldsymbol{\kappa}=\boldsymbol{I}_{T-1}\ast\boldsymbol{\widetilde{V}}_1'\boldsymbol{V}_1-(\boldsymbol{Q}'_1\ast\boldsymbol{\widetilde{V}}_1')\left[\sum_{j=1}^{T}\boldsymbol{q}_j\boldsymbol{q}_j'\ast{\boldsymbol{\Omega}^{-1}_{j,j}}\right]^{-1}(\boldsymbol{Q}_1\ast\boldsymbol{\widetilde{V}}_1).\]
The $t$-th diagonal entry of the above matrix provides the variance of the deflator in the $t$-th period/country, given by
\begin{equation}
var({\widehat{\delta}_{t}})=\sigma^2\left\{
\boldsymbol{\widetilde{v}}_t'\boldsymbol{v}_t - (\boldsymbol{q}'_t\ast\boldsymbol{\widetilde{v}}_t')\left[\sum_{j=1}^{T}\boldsymbol{q}_j\boldsymbol{q}_j'\ast{\boldsymbol{\Omega}^{-1}_{j,j}}\right]^{-1}(\boldsymbol{q}_t\ast\boldsymbol{\widetilde{v}}_t)
\right\}^{-1}= \sigma^2 \kappa^{-1}_t
\end{equation}
where $\widetilde{\boldsymbol{v}}_t$, $\boldsymbol{v}_t$ and  $\boldsymbol{q}_t$ denote the $t$-th column of the matrix $\boldsymbol{\widetilde{V}}_1$, $\boldsymbol{V}_1$ and $\boldsymbol{Q}_1$, respectively.
}
\end{corollary}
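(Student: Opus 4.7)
The plan is to recognize that Theorem~\ref{th:2gls} has already recast the MPL problem as a single GLS regression, so the desired covariance expression follows by applying the standard GLS variance formula $\sigma^2(\boldsymbol{X}'\boldsymbol{\Omega}^{-1}\boldsymbol{X})^{-1}$, combined with a block-partition argument that isolates the $\boldsymbol{\delta}$ coefficients from the nuisance parameters $\tilde{\boldsymbol{p}}$.

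First, I would stack the two equations of Eq.~\eqref{eq:sist1} into a single linear model $\boldsymbol{y}=\boldsymbol{X}\boldsymbol{\theta}+\boldsymbol{\mu}$ with joint parameter vector $\boldsymbol{\theta}=(\boldsymbol{\delta}',\tilde{\boldsymbol{p}}')'$ and block-design $\boldsymbol{X}$ assembled from the Kronecker/Hadamard regressors already appearing in Eq.~\eqref{eq:sist1}. Under $\mathds{E}(\boldsymbol{\mu}\boldsymbol{\mu}')=\sigma^{2}\boldsymbol{\Omega}$, the GLS estimator has exact variance $\sigma^{2}(\boldsymbol{X}'\boldsymbol{\Omega}^{-1}\boldsymbol{X})^{-1}$, and the $\boldsymbol{\delta}$-block of this inverse is the object we need.

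Second, I would apply block inversion (equivalently the Frisch--Waugh--Lovell projection) to $\boldsymbol{X}'\boldsymbol{\Omega}^{-1}\boldsymbol{X}$ partitioned according to the split $(\boldsymbol{\delta},\tilde{\boldsymbol{p}})$. The upper-left block of the inverse equals the inverse of the Schur complement of the $\tilde{\boldsymbol{p}}$-block. Re-tracing the same Hadamard-product simplifications invoked in the proof of Theorem~\ref{th:2gls} (namely the identities $vec(\boldsymbol{ABC})=(\boldsymbol{C}'\otimes\boldsymbol{A})vec(\boldsymbol{A})$ and $vec(\boldsymbol{D}_{\boldsymbol{a}})=\boldsymbol{R}_N'\boldsymbol{a}$), one checks that this Schur complement reproduces exactly the matrix $\boldsymbol{\kappa}$ displayed in the corollary: the term $\boldsymbol{I}_{T-1}\ast\widetilde{\boldsymbol{V}}_1'\boldsymbol{V}_1$ comes from the $(\boldsymbol{\delta},\boldsymbol{\delta})$-block, while the subtracted piece is the "projection" correction generated by the cross-block and the inverse $[\sum_j\boldsymbol{q}_j\boldsymbol{q}_j'\ast\boldsymbol{\Omega}^{-1}_{j,j}]^{-1}$ of the $(\tilde{\boldsymbol{p}},\tilde{\boldsymbol{p}})$-block. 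A shortcut that avoids redoing the Schur computation is to observe that Theorem~\ref{th:2gls} already writes $\widehat{\boldsymbol{\delta}}_{GLS}=\boldsymbol{\kappa}^{-1}\boldsymbol{b}$ for a linear combination $\boldsymbol{b}$ of the response vectors, and $\boldsymbol{\kappa}$ must therefore be the concentrated information matrix for $\boldsymbol{\delta}$; standard Gauss--Markov theory then yields $Var(\widehat{\boldsymbol{\delta}})=\sigma^{2}\boldsymbol{\kappa}^{-1}$.

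Finally, for the scalar variance $var(\widehat{\delta}_t)$, I would extract the $(t,t)$ entry of each summand of $\boldsymbol{\kappa}$: the Hadamard product with $\boldsymbol{I}_{T-1}$ in the first term selects $\widetilde{\boldsymbol{v}}_t'\boldsymbol{v}_t$, while the $t$-th diagonal entry of the bilinear form in the second term is $(\boldsymbol{q}_t'\ast\widetilde{\boldsymbol{v}}_t')[\sum_{j}\boldsymbol{q}_j\boldsymbol{q}_j'\ast\boldsymbol{\Omega}^{-1}_{j,j}]^{-1}(\boldsymbol{q}_t\ast\widetilde{\boldsymbol{v}}_t)$. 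Under the (approximate) diagonality of $\boldsymbol{\kappa}$ targeted by the corollary, $(\boldsymbol{\kappa}^{-1})_{tt}=\kappa_t^{-1}$ and the stated expression follows. The principal obstacle is the bookkeeping in the Schur-complement step: verifying that the cross-block products collapse precisely onto $(\boldsymbol{Q}_1'\ast\widetilde{\boldsymbol{V}}_1')$ requires careful use of the $\boldsymbol{R}_N$ transition-matrix identity, but this is exactly the manipulation already performed to prove Theorem~\ref{th:2gls}, so no new technology is needed.
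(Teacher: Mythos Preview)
Your proposal is correct and follows essentially the same route as the paper: write the stacked model, invoke the GLS covariance formula $\sigma^{2}(\boldsymbol{X}'\boldsymbol{\Omega}^{-1}\boldsymbol{X})^{-1}$, and then extract the $\boldsymbol{\delta}$-block via the Schur complement (partitioned inversion), reusing the block identifications already carried out in the proof of Theorem~\ref{th:2gls}. The paper's proof is in fact terser than yours: it simply selects $\boldsymbol{\Lambda}_{11}$ with $[\boldsymbol{I}_{T-1},\boldsymbol{0}]$ and reads off the Schur-complement formula, without redoing any of the Hadamard bookkeeping and without addressing the scalar expression for $var(\widehat{\delta}_t)$ at all; your explicit extraction of the $(t,t)$ entry and your flag that the passage from $(\boldsymbol{\kappa}^{-1})_{tt}$ to $\kappa_t^{-1}$ relies on (approximate) diagonality are additions rather than deviations.
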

\begin{proof}
See Appendix~\ref{app:proofs6}.
\end{proof}

\noindent As for the deflator vector $\widehat{\boldsymbol{\delta}}$, its moments and confidence intervals can be easily obtained within the theory of linear regression models, from the result given in Corollary~\ref{cor:3}. As the price index vector $\widehat{\boldsymbol{\lambda}}$ turns out to be the reciprocal of the said deflator (see Eq.~\ref{eq:deltagg12}), its  statistical behavior can be derived from the former, following the arguments put forward, for example, in \citet{geary1930frequency, curtiss1941distribution} and \citet{marsaglia1965ratios}, merely to quote a few, on ratios (in particular reciprocals) of random variables. 
The following corollary provides an approximation of the variance of $\widehat{\lambda}_t$, obtained by using the first Taylor expansion of the variance of a ratio of two random variables. 
\begin{corollary}
\textit{The variance of the MPL index ${\widehat{\lambda}_{t}}$ is
\begin{equation}
var({\widehat{\lambda}_{t}}) \approx \frac{var({\widehat{\delta}_{t}})}{\mathds{E}({\widehat{\delta}_{t}}^4)}
=\frac{\widehat{\sigma}^2}{\kappa_t\,\mathds{E}({\widehat{\delta}_{t}}^4)} \,\,\,\,\,\,\forall\,t=1,\dots,T.
\end{equation}
\noindent In the above equation $\widehat{\sigma}^2=\frac{\widehat{\boldsymbol\mu}' \widehat{\boldsymbol{\Omega}}^{-1}\widehat{\boldsymbol \mu}}{NT-(N+T-1)}$ where $\widehat{\boldsymbol\mu}$ are the GLS residuals of Eq.~\eqref{eq:sist1} and $\widehat{\boldsymbol{\Omega}}$ is an estimate of $\boldsymbol{\Omega}$.}
\end{corollary}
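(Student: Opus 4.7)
The plan is to derive the stated formula via a first-order delta-method (propagation-of-error) argument applied to the scalar transformation $\lambda_t = 1/\delta_t$, combined with the variance expression for $\widehat{\delta}_t$ already established in Corollary~\ref{cor:3}. First I would set $g(x) = 1/x$, so that $g'(x) = -1/x^2$, and write the first-order Taylor expansion of $\widehat{\lambda}_t = g(\widehat{\delta}_t)$ about $\mathds{E}(\widehat{\delta}_t)$,
\begin{equation*}
\widehat{\lambda}_t \,\approx\, g\bigl(\mathds{E}\widehat{\delta}_t\bigr) + g'\bigl(\mathds{E}\widehat{\delta}_t\bigr)\bigl(\widehat{\delta}_t - \mathds{E}\widehat{\delta}_t\bigr).
\end{equation*}
Squaring the derivative and taking the variance of both sides yields $var(\widehat{\lambda}_t) \approx [g'(\mathds{E}\widehat{\delta}_t)]^2\, var(\widehat{\delta}_t) = var(\widehat{\delta}_t)/(\mathds{E}\widehat{\delta}_t)^4$, which matches the stated ratio once $(\mathds{E}\widehat{\delta}_t)^4$ is identified with (or approximated to the same order by) $\mathds{E}(\widehat{\delta}_t^4)$.

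Next I would substitute $var(\widehat{\delta}_t) = \sigma^2 \kappa_t^{-1}$ from Corollary~\ref{cor:3} to obtain $var(\widehat{\lambda}_t) \approx \sigma^2/[\kappa_t\,\mathds{E}(\widehat{\delta}_t^4)]$, and then replace the scale parameter $\sigma^2$ with the usual unbiased GLS estimator $\widehat{\sigma}^2 = \widehat{\boldsymbol{\mu}}'\widehat{\boldsymbol{\Omega}}^{-1}\widehat{\boldsymbol{\mu}}/[NT - (N+T-1)]$. The degree-of-freedom correction $NT - (N + T - 1)$ follows directly from a dimension count on the stacked system in Eq.~\eqref{eq:sist1}: there are $NT$ stacked observations (the $N$ equations for $\boldsymbol{v}_1$ plus the $N(T-1)$ equations of the second block), while the free parameters to be estimated are the $N$ reference prices in $\widetilde{\boldsymbol{p}}$ together with the $T-1$ deflators $\delta_2,\ldots,\delta_T$, since $\delta_1 = 1$ is fixed as the base-period normalization. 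Standard GLS unbiasedness of the quadratic form in the residuals then closes the argument.

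The main issue to flag is conceptual rather than computational: the first-order delta method is reliable only when $\widehat{\delta}_t$ is concentrated away from zero, because the reciprocal map amplifies variability near the origin and, strictly speaking, the moments of $1/\widehat{\delta}_t$ may fail to exist. I would therefore present the result as a local, first-order approximation, valid on the event $\{\widehat{\delta}_t \neq 0\}$, which is consistent with the convention of Eq.~\eqref{eq:666} where $\widehat{\lambda}_t$ is set to zero whenever the corresponding deflator vanishes; this is the one subtlety in what is otherwise a direct application of standard propagation-of-error reasoning.
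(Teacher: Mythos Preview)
Your proposal is correct and follows exactly the route the paper intends: the paper's own ``proof'' merely cites \citet[p.~351]{stuard1994kendall} and \citet[p.~69]{elandt1980survival} for the first-order Taylor (delta-method) approximation to the variance of a reciprocal, which is precisely what you spell out. Your additional remarks---the degrees-of-freedom count for $\widehat{\sigma}^2$, the identification of $(\mathds{E}\widehat{\delta}_t)^4$ with the paper's $\mathds{E}(\widehat{\delta}_t^{\,4})$ at the same order of approximation, and the caveat about moments of reciprocals---go beyond what the paper supplies and are all appropriate.
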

\begin{proof}
See \citet[p. 351]{stuard1994kendall} and \citet[p. 69]{elandt1980survival}. 
\end{proof}

%HO PROVATO A SCRIVERE ALCUNE RIGHE COME ANTICIPATO NELLA MAIL NON SO SE SIANO OPPORTUNE
%\noindent It is worth noting that the comparison of the approximated MPL variance  with the one of other price indexes could be of interest to assess the efficiency of the MPL. However, considering for instance the CPD/TPD, it is not possible, to the best of our knowledge to obtain a direct comparison for multiple reasons. First, a different estimation approach based on values and quantities in the first case and on prices in the second one. Second, the MPL take advantage of the GLS differently from other indexes, introducing in the estimation $\boldsymbol{\Omega}$. Third, a non linear relation considered for the CPD/TPD. Therefore, for these reasons we consider the empirical applications more accessible  to study the MPL efficiency.   

%%%%%%%%%%%%%%
\subsection{The MPL update}
%%%%%%%%%%%%%%
\noindent  The following two theorems provide updating formulas for the  price index $\boldsymbol{\widehat{\lambda}}$. The former proves suitable when the index is used as a multilateral price index,while the latter is appropriate when it is employed as a multi-period index. In the former case,  values and quantities of the commodities included in the reference basket are assumed available for an additional $T+1$ country. In the latter case, it is supposed that values and quantities of the commodities included in the reference basket become available at time $T+1$. It is worth noting that the approach used to update the index guarantees the temporal fixity issue requiring that its historical values must not be affected by the inclusion of values and quantities pertaining a new period. Spatial fixity, demanding that results for a core set of countries must be unaffected by the inclusion of new countries, is not preserved by the updating method here proposed. This property can be easily fulfilled by updating the index with the same approach used for the multi-period index.
%%%%%%%%%%%%%%%%%%
\begin{theorem}\label{th:4}
\textit{
Should the values and quantities of $N$ commodities of a reference basket become available for a new additional country, say the $T+1$-th, then, the updated multilateral version of the MPL index, $\boldsymbol{\widehat{\lambda}}$, turns out to be the vector of the reciprocals, as defined in Eq.~\eqref{eq:deltagg12}, of the following deflator vector
%\begin{footnotesize}
\begin{equation}\label{eq:th4}
\begin{split}
\underset{(T,1)}{\widehat{\boldsymbol{\delta}}}
=&\left\{
\begin{bmatrix} 
\boldsymbol{I}_{T-1}\ast\boldsymbol{\widetilde{V}}'_1\boldsymbol{V}_1 & \boldsymbol{0}  \\ 
\boldsymbol{0} & \boldsymbol{\widetilde{v}}'_{T+1}\boldsymbol{v}_{T+1}  \end{bmatrix}-
\begin{bmatrix} 
\boldsymbol{Q}'_1\ast\boldsymbol{\widetilde{V}}'_1  \\ 
\boldsymbol{\widetilde{v}}'_{T+1}\ast\boldsymbol{q}'_{T+1}  \end{bmatrix}
\left[\sum_{j=1}^{T+1}\boldsymbol{q}_j\boldsymbol{q}'_j \ast\boldsymbol{\Omega}_{jj}^{-1} \right]^{-1} \right.
\\&
\left.\begin{bmatrix} 
\boldsymbol{Q}_1\ast\boldsymbol{\widetilde{V}}_1  & 
\boldsymbol{\widetilde{v}}_{T+1}\ast\boldsymbol{q}_{T+1}  \end{bmatrix}
\right\}^{-1}
\begin{bmatrix} 
\boldsymbol{Q}'_1\ast\boldsymbol{\widetilde{V}}'_1  \\ 
\boldsymbol{\widetilde{v}}'_{T+1}\ast\boldsymbol{q}'_{T+1}  \end{bmatrix}
\left[
\sum_{j=1}^{T+1}\boldsymbol{q}_j\boldsymbol{q}'_j \ast\boldsymbol{\Omega}_{jj}^{-1} \right]^{-1}
\left(\boldsymbol{q}_1\ast\boldsymbol{\widetilde{v}}_1\right).
\end{split}
\end{equation}
%\end{footnotesize}
Here the symbols are defined as in Theorem~\ref{th:2gls}. The terms $\boldsymbol{v}_{T+1}$, $\boldsymbol{q}_{T+1}$ denote the vector of values and quantities of  $N$ commodities of the new $T+1$-th country, respectively and  $\boldsymbol{\widetilde{v}}_{T+1}=\boldsymbol{\Omega}_{T+1,T+1}^{-1}\boldsymbol{v}_{T+1}$ with $\boldsymbol{\Omega}_{T+1,T+1}$ variance-covariance matrix of the disturbances at time $T+1$.
}
\end{theorem}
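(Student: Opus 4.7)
The plan is to reduce Theorem~\ref{th:4} to a direct application of Theorem~\ref{th:2gls} to an enlarged system in which the $(T+1)$-th country is treated as just one more non-base country. First I would append to the regression system in Eq.~\eqref{eq:sist1} an extra block of $N$ equations coming from the new country, namely $\delta_{T+1}\boldsymbol{v}_{T+1}=\boldsymbol{D}_{\tilde{\boldsymbol{p}}}\boldsymbol{q}_{T+1}+\boldsymbol{\varepsilon}_{T+1}$. Applying the same Kronecker/Hadamard identities of footnote~\ref{ftn:matrix} recasts this block as $\boldsymbol{0}=-\delta_{T+1}\boldsymbol{v}_{T+1}+(\boldsymbol{q}'_{T+1}\otimes\boldsymbol{I}_N)\boldsymbol{R}_N'\tilde{\boldsymbol{p}}+\boldsymbol{\varepsilon}_{T+1}$. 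The unknown deflator vector thereby grows to dimension $T$, and the error covariance $\boldsymbol{\Omega}$ acquires an additional diagonal block $\boldsymbol{\Omega}_{T+1,T+1}$ in its bottom-right corner.

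Next, I would rerun verbatim the GLS derivation that produced Eq.~\eqref{eq:deltagg12}: writing the normal equations for $(\tilde{\boldsymbol{p}},\boldsymbol{\delta})$ jointly and concentrating out $\tilde{\boldsymbol{p}}$ yields a Schur-complement expression for $\widehat{\boldsymbol{\delta}}$ that has exactly the same shape as in Theorem~\ref{th:2gls}, but with the summation in the central bracket extended from $T$ to $T+1$ and with the design matrices augmented by one column (or row) corresponding to the new country. No new algebra is required here because the derivation underlying Eq.~\eqref{eq:deltagg12} is already valid for an arbitrary number of non-base periods/countries.

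The remaining step is to partition the augmented design objects along the boundary between the original $T-1$ non-base countries and the new country, which produces the explicit block form of Eq.~\eqref{eq:th4}. The piece $\boldsymbol{I}_{T-1}\ast\boldsymbol{\widetilde{V}}'_1\boldsymbol{V}_1$ is already diagonal because the Hadamard product with the identity extracts only diagonal entries, so its extension to the $T$-country system becomes the block-diagonal concatenation $\mathrm{diag}\bigl(\boldsymbol{I}_{T-1}\ast\boldsymbol{\widetilde{V}}'_1\boldsymbol{V}_1,\ \boldsymbol{\widetilde{v}}'_{T+1}\boldsymbol{v}_{T+1}\bigr)$ displayed in the theorem; the factors $\boldsymbol{Q}'_1\ast\boldsymbol{\widetilde{V}}'_1$ and $\boldsymbol{Q}_1\ast\boldsymbol{\widetilde{V}}_1$ each pick up one extra stacked row/column, given respectively by $\boldsymbol{\widetilde{v}}'_{T+1}\ast\boldsymbol{q}'_{T+1}$ and its transpose; while the rightmost factor $\boldsymbol{q}_1\ast\boldsymbol{\widetilde{v}}_1$ is unchanged since the base country is untouched. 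The main obstacle I expect is purely bookkeeping: keeping the Kronecker/Hadamard rearrangements consistent so that the new block slots into the exact structural position that Eq.~\eqref{eq:deltagg12} provides, and verifying that the off-diagonal zeros of the leading matrix in Eq.~\eqref{eq:th4} arise because the deflators of distinct countries enter the loss only through the shared reference price, not through one another. Once this re-partitioning is checked, the identity with Eq.~\eqref{eq:th4} is immediate.
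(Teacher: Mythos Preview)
Your proposal is correct and follows essentially the same route as the paper: append the block $\delta_{T+1}\boldsymbol{v}_{T+1}=\boldsymbol{D}_{\tilde{\boldsymbol{p}}}\boldsymbol{q}_{T+1}+\boldsymbol{\varepsilon}_{T+1}$, enlarge $\boldsymbol{\Omega}$ by the diagonal block $\boldsymbol{\Omega}_{T+1,T+1}$, and rerun the GLS/Schur-complement computation of Theorem~\ref{th:2gls} on the augmented system, then read off the block structure of Eq.~\eqref{eq:th4}. The only cosmetic difference is that the paper rebuilds $\boldsymbol{X}_u'\boldsymbol{\Omega}_u^{-1}\boldsymbol{X}_u$ explicitly as a $3\times 3$ block array (base / old non-base / new) before applying partitioned inversion, whereas you invoke Theorem~\ref{th:2gls} with $T$ replaced by $T+1$ and then re-partition; both arrive at the same expression.
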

\begin{proof}
See Appendix~\ref{app:proofs7}.
\end{proof}
%%%%%%%%%%%
\begin{theorem}\label{th:5}
\textit{
Should the values and quantities of $N$ commodities of a reference basket become available for  time $T+1$,  then, the updated value $\widehat{\lambda}_{T+1}$ of the multi-period version of the MPL index at time $T+1$ turns out to be the reciprocal of the deflator value at time $T+1$
}
\begin{equation}\label{eq:update1}
\begin{split}
\underset{(1,1)}{\widehat{\delta}_{T+1}}=&\left\{
\boldsymbol{\widetilde{v}}_{T+1}'\boldsymbol{v}_{T+1}-\left(\boldsymbol{q}_{T+1}'\ast\boldsymbol{\widetilde{v}}_{T+1}'\right)\left[
\sum_{j=1}^{T+1}\boldsymbol{q}_{j}\boldsymbol{q}_{j}'\ast \boldsymbol{\Omega}_{j,j}
\right]^{-1}\left(\boldsymbol{q}_{T+1}\ast\boldsymbol{\widetilde{v}}_{T+1}\right)
\right\}^{-1} \\&\left(\boldsymbol{q}_{T+1}'\ast\boldsymbol{\widetilde{v}}_{T+1}'\right)\left[
\sum_{j=1}^{T+1}\boldsymbol{q}_{j}\boldsymbol{q}_{j}'\ast \boldsymbol{\Omega}_{j,j}
\right]^{-1}\left(\boldsymbol{Q}\ast\boldsymbol{\widetilde{V}}\right)\tilde{{\boldsymbol{\delta}}}
\end{split}
\end{equation}
\noindent \textit{where $\tilde{{\boldsymbol{\delta}}}'=[1, \widehat{{\boldsymbol{\delta}}}]'$ and $\widehat{{\boldsymbol{\delta}}}$ is defined as in Eq.~\eqref{eq:deltagg12}.}
\end{theorem}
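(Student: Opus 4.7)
The strategy is to derive the formula from the joint GLS normal equations for the system obtained by augmenting Eq.~\eqref{eq:sist1} with a new block for period $T+1$, treating the previously estimated deflators as fixed in accordance with temporal fixity, and profiling out the reference price vector $\tilde{\boldsymbol{p}}$ from the scalar normal equation for $\delta_{T+1}$.

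Concretely, I would first append to Eq.~\eqref{eq:sist1} the block
\[\boldsymbol{0}=(-\boldsymbol{v}_{T+1})\delta_{T+1}+(\boldsymbol{q}'_{T+1}\otimes\boldsymbol{I}_N)\boldsymbol{R}'_N\,\tilde{\boldsymbol{p}}+\boldsymbol{\eta}_{T+1},\]
and enlarge $\boldsymbol{\Omega}$ with the diagonal block $\boldsymbol{\Omega}_{T+1,T+1}$. Differentiating the augmented GLS objective with respect to $\tilde{\boldsymbol{p}}$, and using the identities $\boldsymbol{D}_{\boldsymbol{q}_t}\boldsymbol{\Omega}^{-1}_{tt}\boldsymbol{D}_{\boldsymbol{q}_t}=\boldsymbol{q}_t\boldsymbol{q}'_t\ast\boldsymbol{\Omega}^{-1}_{tt}$ and $\boldsymbol{D}_{\boldsymbol{q}_t}\boldsymbol{\Omega}^{-1}_{tt}\boldsymbol{v}_t=\boldsymbol{q}_t\ast\boldsymbol{\widetilde{v}}_t$, yields
\[\Bigl[\sum_{j=1}^{T+1}\boldsymbol{q}_j\boldsymbol{q}'_j\ast\boldsymbol{\Omega}^{-1}_{jj}\Bigr]\tilde{\boldsymbol{p}}=\sum_{j=1}^{T+1}\delta_j\,(\boldsymbol{q}_j\ast\boldsymbol{\widetilde{v}}_j),\]
with $\delta_1=1$, while the normal equation for $\delta_{T+1}$ reads $\boldsymbol{\widetilde{v}}'_{T+1}\boldsymbol{v}_{T+1}\,\delta_{T+1}=(\boldsymbol{q}'_{T+1}\ast\boldsymbol{\widetilde{v}}'_{T+1})\tilde{\boldsymbol{p}}$. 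Solving the first for $\tilde{\boldsymbol{p}}$, splitting the right-hand side as $(\boldsymbol{Q}\ast\boldsymbol{\widetilde{V}})\tilde{\boldsymbol{\delta}}+\delta_{T+1}(\boldsymbol{q}_{T+1}\ast\boldsymbol{\widetilde{v}}_{T+1})$ by temporal fixity, substituting into the scalar equation, and rearranging moves the $\delta_{T+1}$ contribution onto the left, producing precisely the factor in curly braces of Eq.~\eqref{eq:update1}; isolating $\delta_{T+1}$ gives the stated formula, and $\widehat{\lambda}_{T+1}$ follows by reciprocating.

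The main obstacle is justifying the temporal-fixity substitution itself: one must argue that retaining the old estimates $\widehat{\delta}_2,\ldots,\widehat{\delta}_T$ inside the $\tilde{\boldsymbol{p}}$ normal equation is consistent with the update being a genuine conditional GLS step. This rests on the observation that the first-order conditions for $\delta_2,\ldots,\delta_T$ in the augmented system have the same form as in Theorem~\ref{th:2gls} and do not depend on $\delta_{T+1}$ once $\tilde{\boldsymbol{p}}$ is profiled out through the old data, so the old estimates continue to satisfy those equations and only $\delta_{T+1}$ needs to be recomputed. The remaining difficulty is purely notational, namely keeping the Hadamard--Kronecker identities consistent through the profiling step so that the final expression aligns with Eq.~\eqref{eq:update1}.
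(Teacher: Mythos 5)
Your derivation is essentially the paper's own proof in a different algebraic dress. The paper augments the system with the block for period $T+1$, declares the old deflators $\widehat{\delta}_2,\dots,\widehat{\delta}_T$ to be given (so that the only unknowns are $\delta_{T+1}$ and $\tilde{\boldsymbol{p}}$), and obtains Eq.~\eqref{eq:update1} by partitioned inversion of the $(N+1)\times(N+1)$ GLS cross-product matrix. Your two normal equations for $\tilde{\boldsymbol{p}}$ and $\delta_{T+1}$, with the old estimates substituted into the right-hand side of the former, are exactly the normal equations of that conditional GLS problem, and eliminating $\tilde{\boldsymbol{p}}$ by substitution is the same Schur-complement computation; the resulting formula matches. (A minor point in your favour: you write $\boldsymbol{\Omega}^{-1}_{j,j}$ in the bracketed sums, consistent with Theorem~\ref{th:2gls} and Theorem~\ref{th:4}, whereas the statement of Theorem~\ref{th:5} drops the inverse, which appears to be a typo in the paper.)

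The one place where your argument goes wrong is the closing justification. It is not true that the old estimates $\widehat{\delta}_2,\dots,\widehat{\delta}_T$ continue to satisfy the first-order conditions of the fully joint GLS on the augmented system: the normal equation for $\tilde{\boldsymbol{p}}$ now involves the period-$(T+1)$ data, so the profiled-out $\tilde{\boldsymbol{p}}$ changes, and with it the minimizers in $\delta_2,\dots,\delta_T$. This is precisely why the multilateral update of Theorem~\ref{th:4}, which re-estimates all deflators jointly, does not preserve fixity of the earlier indices, as the paper itself observes. No such consistency argument is needed (or available): temporal fixity here is imposed by construction, i.e.\ the update is \emph{defined} as the GLS step conditional on the old deflators being held at their previous values, which is exactly what the footnote in the paper's proof states and what your algebra actually implements. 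If you drop the false consistency claim and simply state that the old deflators are treated as known, your proof is complete and coincides with the paper's.
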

\begin{proof}
See Appendix~\ref{app:proofs8}.
\end{proof}
\noindent  Figure~\ref{fig:2} highlights the difference between the updating process of the deflator, and thus of the price index, depending on whether it is used in the multilateral or in the multi-period case.    

\begin{figure}[htbp]
\begin{center}
\includegraphics[scale=0.3]{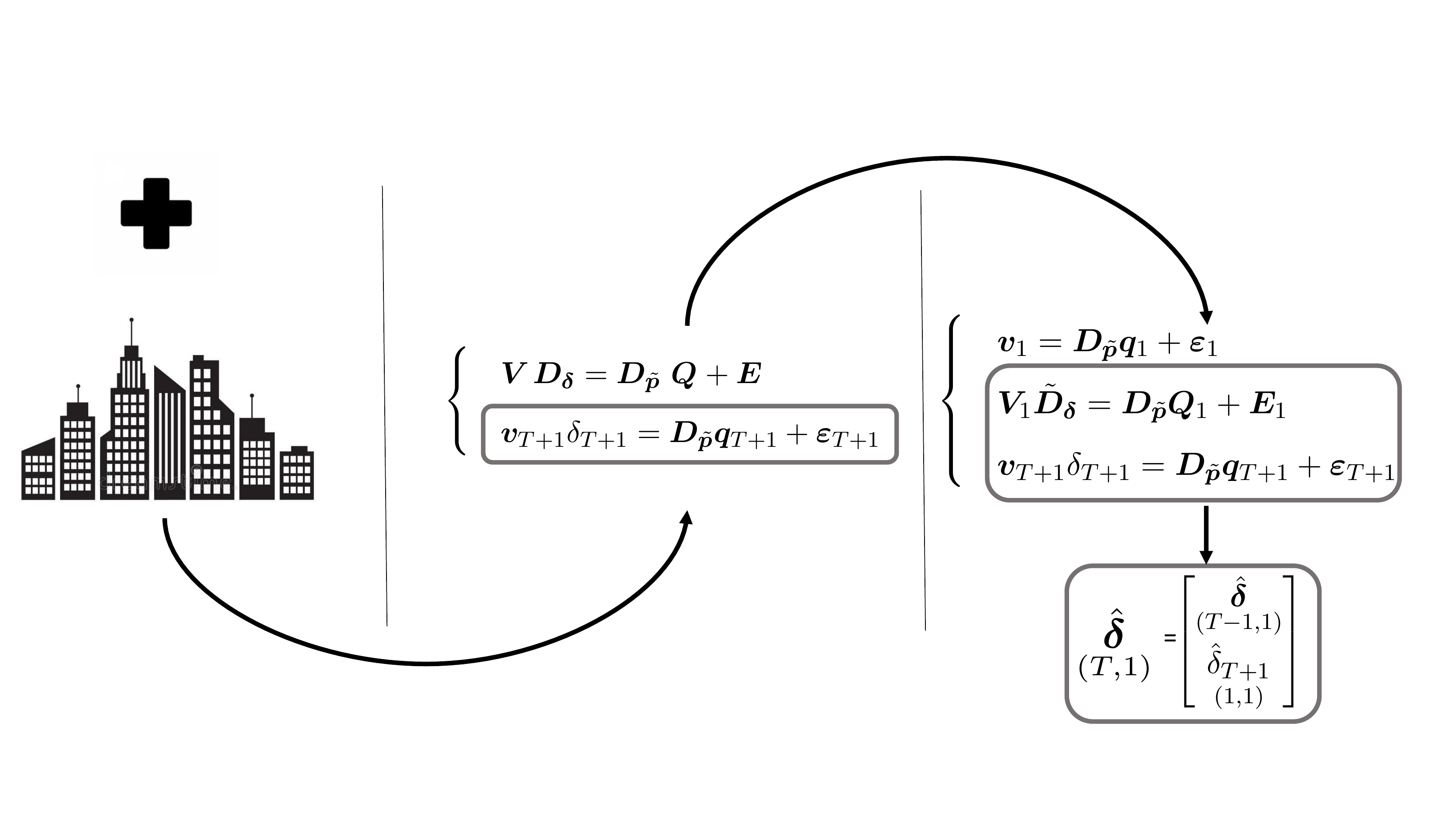}
\includegraphics[scale=0.3]{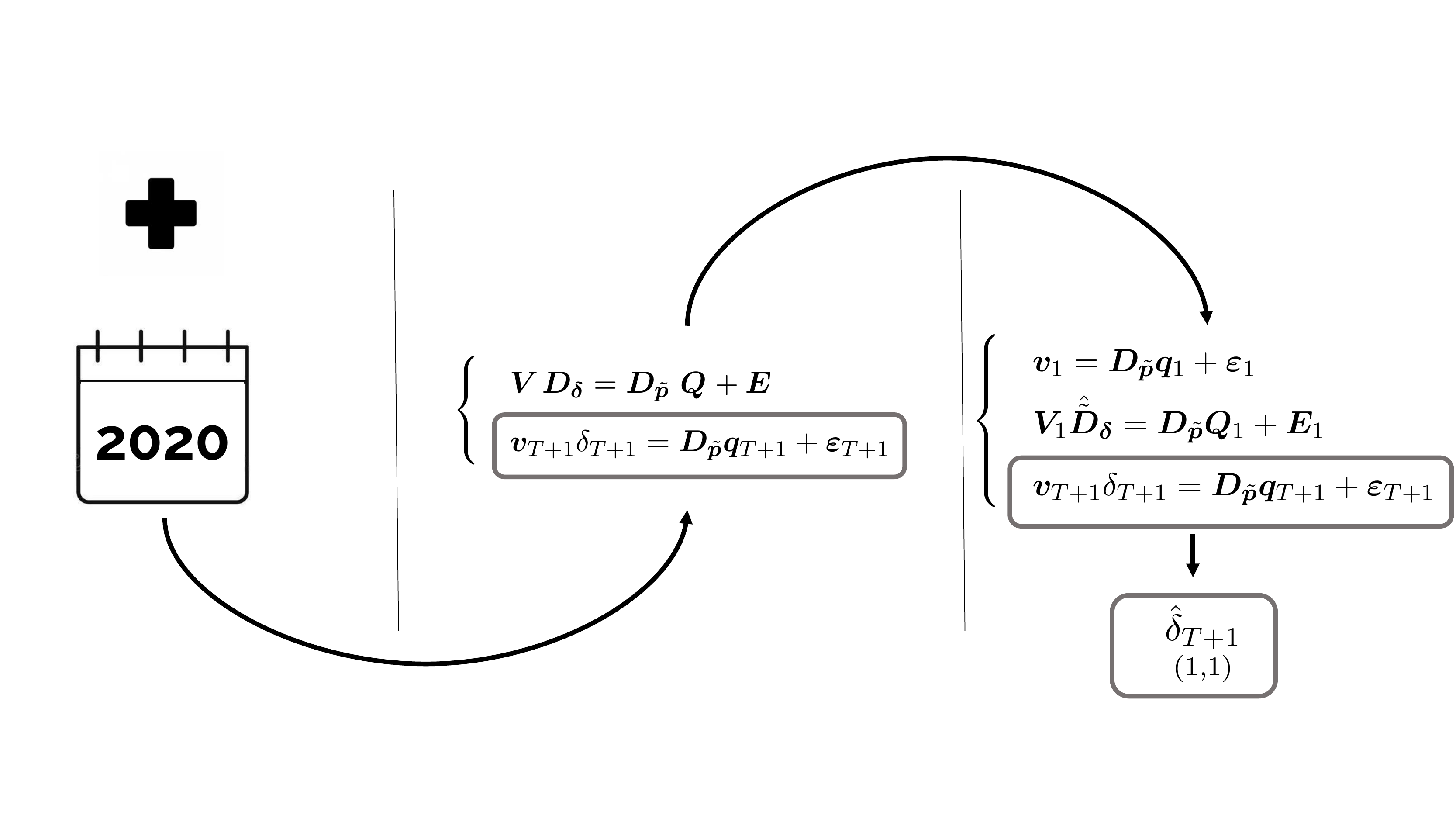} 
\caption{
The top panel shows the ratio of the updating formula for the multilateral version of the MPL index (see Theorem~\ref{th:4}); the bottom panel shows the ratio of the updating formula of the the multi-period version of the MPL index (see Theorem~\ref{th:5}).  
} 
\label{fig:2}
\end{center}
\end{figure}

%%%%%%%%%%%%%%%%%%%%%%%%%%%%%%%%%%%%%%%%%%%
\section{Properties of the MPL index}\label{subsec:prop}
%%%%%%%%%%%%%%%%%%%%%%%%%%%%%%%%%%%%%%%%%%%
Let us assume for simplicity $T=2$ and denote with $\widehat{\lambda}(\boldsymbol{p}_{1},\boldsymbol{p}_{2},\boldsymbol{q}_{1},\boldsymbol{q}_{2})$ a generic index number where $\boldsymbol{p}_{t}$ and $\boldsymbol{q}_{t}$ are prices and quantities at time $t$. Without lack of generality, $t=1$ is assumed to be the base period. 
Following \citet{predetti2006numeri}, \citet{martini1992numeri} and \citet{fattore2010axiomatic}, the main properties of an index number can be summarized as follows:
\begin{enumerate}
\item[P.1] \textit{Strong identity}: $\widehat{\lambda}(\boldsymbol{p}_{2},\boldsymbol{p}_{2},\boldsymbol{q}_{1},\boldsymbol{q}_{2})=1$.
\item[P.2] \textit{Commensurability}: $\widehat{\lambda}(\boldsymbol{\gamma}\ast\boldsymbol{p}_{1},\boldsymbol{\gamma}\ast\boldsymbol{p}_{2},\boldsymbol{\gamma}^{-1}\ast\boldsymbol{q}_{1},\boldsymbol{\gamma}^{-1}\ast\boldsymbol{q}_{2})=\widehat{\lambda}(\boldsymbol{p}_{1},\boldsymbol{p}_{2},\boldsymbol{q}_{1},\boldsymbol{q}_{2})$ where $\underset{(N,1)}{\boldsymbol{\gamma}}$ is a vector with non-null entries and $\boldsymbol{\gamma}^{-1}$ is the vector of reciprocals of the entries of $\boldsymbol{\gamma}$. 
\item[P.3] \textit{Proportionality}:  $\widehat{\lambda}(\boldsymbol{p}_{1},\alpha\boldsymbol{p}_{2},\boldsymbol{q}_{1},\boldsymbol{q}_{2})=\alpha\,\widehat{\lambda}(\boldsymbol{p}_{1},\boldsymbol{p}_{2},\boldsymbol{q}_{1},\boldsymbol{q}_{2})$ with $\alpha>0$.
\item[P.4] \textit{Dimensionality}:   $\widehat{\lambda}(\alpha\boldsymbol{p}_{1},\alpha\boldsymbol{p}_{2},\boldsymbol{q}_{1},\boldsymbol{q}_{2})=\widehat{\lambda}(\boldsymbol{p}_{1},\boldsymbol{p}_{2},\boldsymbol{q}_{1},\boldsymbol{q}_{2})$ with $\alpha>0$.
\item[P.5] \textit{Monotonicity}: $\widehat{\lambda}(\boldsymbol{p}_{1},\boldsymbol{k}\ast\boldsymbol{p}_{2},\boldsymbol{q}_{1},\boldsymbol{q}_{2})>\widehat{\lambda}(\boldsymbol{p}_{1},\boldsymbol{p}_{2},\boldsymbol{q}_{1},\boldsymbol{q}_{2})$ and $\widehat{\lambda}(\boldsymbol{k}\ast\boldsymbol{p}_{1},\boldsymbol{p}_{2},\boldsymbol{q}_{1},\boldsymbol{q}_{2})<\widehat{\lambda}(\boldsymbol{p}_{1},\boldsymbol{p}_{2},\boldsymbol{q}_{1},\boldsymbol{q}_{2})$ with $\underset{(N,1)}{\boldsymbol{k}}>\boldsymbol{u}$ where $\boldsymbol{u}$ is the unit vector.
\end{enumerate}
Moreover, also the following properties are worth mentioning:
\begin{itemize}
\item[P.6] \textit{Positivity}: $\widehat{\lambda}(\alpha\boldsymbol{p}_{1},\boldsymbol{p}_{2},\boldsymbol{q}_{1},\boldsymbol{q}_{2})\geq0$, with $\alpha$ positive constant;
\item[P.7] \textit{Inverse proportionality in the base period}: $\widehat{\lambda}(\alpha\boldsymbol{p}_{1},\boldsymbol{p}_{2},\boldsymbol{q}_{1},\boldsymbol{q}_{2})=\frac{1}{\alpha} \widehat{\lambda}(\boldsymbol{p}_{1},\boldsymbol{p}_{2},\boldsymbol{q}_{1},\boldsymbol{q}_{2})$;
\item[P.8] \textit{Commodity reversal property}: invariance of the index with respect to any commodity permutation;
\item[P.9] \textit{Quantity reversal test}:  a change in the quantity order does not affects  $\pi_i$ that remains invariant $\forall\, i=1,\dots,N$. Therefore the index price $\widehat{\lambda}$ does not change. 
\item[P.10] \textit{Base reversibility (symmetric treatment of time)} $\widehat{\lambda}(\boldsymbol{p}_{1},\boldsymbol{p}_{2},\boldsymbol{q}_{1},\boldsymbol{q}_{2})=\widehat{\lambda}(\boldsymbol{p}_{2},\boldsymbol{p}_{1},\boldsymbol{q}_{2},\boldsymbol{q}_{1})^{-1}$;
\item[P.11] \textit{Transitivity} $\widehat{\lambda}(\boldsymbol{p}_{1},\boldsymbol{p}_{2},\boldsymbol{q}_{1},\boldsymbol{q}_{2})\widehat{\lambda}(\boldsymbol{p}_{2},\boldsymbol{p}_{3},\boldsymbol{q}_{2},\boldsymbol{q}_{3})=\widehat{\lambda}(\boldsymbol{p}_{1},\boldsymbol{p}_{3},\boldsymbol{q}_{1},\boldsymbol{q}_{3})$;
\item[P.12] \textit{Monotonicity}: If $\boldsymbol{p}_{2}=\beta \boldsymbol{p}_{1}$, then $\widehat{\lambda}(\boldsymbol{p}_{1},\beta\boldsymbol{p}_{1},\boldsymbol{q}_{1},\boldsymbol{q}_{2})= \beta$.
%When $\boldsymbol{p}_{2}=\beta \boldsymbol{p}_{1}$ 
\end{itemize}

\begin{proposition}
\textit{The MPL index satisfies all properties. }
\end{proposition}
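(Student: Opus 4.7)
The plan is to work in the bilateral ($T=2$) case and verify each of P.1--P.12 by direct substitution into the closed form provided by Corollary~3,
\[
\widehat{\lambda}_{GLS} = \frac{\sum_{i=1}^N p_{i2}\,\overline{\pi}_i}{\sum_{i=1}^N p_{i1}\,\overline{\pi}_i}, \qquad \overline{\pi}_i = \frac{1}{\vartheta_i}\, p_{i2}\, \frac{q_{i1}^2 q_{i2}^2}{q_{i1}^2 + q_{i2}^2}.
\]
Before verifying the properties, I would collect four structural features of the weight $\overline{\pi}_i$ that drive most of the arguments: (i) the quantity kernel $q_{i1}^2 q_{i2}^2/(q_{i1}^2+q_{i2}^2)$ is symmetric under $q_{i1}\leftrightarrow q_{i2}$ and homogeneous of degree two in $(q_{i1},q_{i2})$; (ii) $\overline{\pi}_i$ is linear in $p_{i2}$ and independent of $p_{i1}$; (iii) the index is separable and permutation invariant across the commodity label $i$; (iv) all factors are positive under positive prices, positive quantities and positive $\vartheta_i$.

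From these features most of the properties reduce to elementary cancellations between numerator and denominator. Strong identity (P.1) follows by equating $\boldsymbol{p}_1=\boldsymbol{p}_2$ termwise; positivity (P.6) is immediate from (iv); dimensionality (P.4) and inverse proportionality in the base period (P.7) follow by scaling $\boldsymbol{p}_1$ and $\boldsymbol{p}_2$ jointly or $\boldsymbol{p}_1$ alone by $\alpha$, exploiting (ii); the commodity and quantity reversal tests (P.8, P.9) follow from (iii) and the symmetry in (i); and P.12 specialises the proportionality argument to $\boldsymbol{p}_2=\beta\boldsymbol{p}_1$. For commensurability (P.2) the substitution $\tilde p_{it}=\gamma_i p_{it}$ and $\tilde q_{it}=\gamma_i^{-1} q_{it}$ gives $\overline{\pi}_i^{\text{new}}=\gamma_i^{-1}\overline{\pi}_i$, so each summand in both sums picks up a common factor $\gamma_i\cdot\gamma_i^{-1}=1$; and for proportionality (P.3) replacing $\boldsymbol{p}_2$ by $\alpha\boldsymbol{p}_2$ scales $\overline{\pi}_i$ by $\alpha$, giving $\alpha^2$ in the numerator against $\alpha$ in the denominator.

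The delicate properties are P.5 (strict monotonicity), P.10 (base reversibility) and P.11 (transitivity), none of which reduces to a cancellation of common factors. For P.5, after substituting $\boldsymbol{k}\ast\boldsymbol{p}_2$ in the second argument one has $\overline{\pi}_i^{\text{new}}=k_i\overline{\pi}_i$, and the claim reduces to a strict inequality of the form
\[
\Bigl(\sum_i k_i^2\, a_i\Bigr)\Bigl(\sum_j b_j\Bigr) > \Bigl(\sum_i a_i\Bigr)\Bigl(\sum_j k_j\, b_j\Bigr), \quad a_i=p_{i2}\overline{\pi}_i,\ b_i=p_{i1}\overline{\pi}_i,
\]
which I would tackle by pairing terms $(i,j)$ with $(j,i)$ and exploiting $k_i^2>k_i>1$ in a Chebyshev-type rearrangement estimate. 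For P.10 the central step is to track how the weight transforms under the swap $(\boldsymbol{p}_1,\boldsymbol{p}_2,\boldsymbol{q}_1,\boldsymbol{q}_2)\mapsto(\boldsymbol{p}_2,\boldsymbol{p}_1,\boldsymbol{q}_2,\boldsymbol{q}_1)$ and to show that the product $\widehat\lambda\cdot\widehat\lambda^{\text{swap}}$ collapses to unity via a Cauchy--Schwarz-type identity in the weighted inner products. For P.11 the chaining $\widehat\lambda_{12}\widehat\lambda_{23}=\widehat\lambda_{13}$ has to be obtained by cancelling the shared period-2 weighted sums using the separability in (iii). These three properties, where the weights have to be handled as more than opaque multipliers, are where I anticipate the main obstacle and where the bulk of the algebraic effort will concentrate.
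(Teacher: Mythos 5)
Your treatment of P.1--P.4, P.6--P.9 and P.12 coincides with the paper's: everything reduces to the transformation law of the weight $\overline{\pi}_i$ and a cancellation between numerator and denominator. The gap lies in P.5, P.10 and P.11, which you rightly single out as delicate but then attack with general arguments that cannot succeed for arbitrary $\vartheta_i$. For base reversibility (P.10), write $\overline{\pi}_i=p_{i2}w_i$ with $w_i=\vartheta_i^{-1}q_{i1}^2q_{i2}^2/(q_{i1}^2+q_{i2}^2)$; since the quantity kernel is symmetric, the swapped weight is $p_{i1}w_i$, and the product $\widehat{\lambda}\cdot\widehat{\lambda}^{\mathrm{swap}}$ equals $\bigl(\sum_i p_{i2}^2w_i\bigr)\bigl(\sum_i p_{i1}^2w_i\bigr)/\bigl(\sum_i p_{i1}p_{i2}w_i\bigr)^2$, which by Cauchy--Schwarz is $\geq 1$ with equality only when $\boldsymbol{p}_1\propto\boldsymbol{p}_2$. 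The ``Cauchy--Schwarz-type identity'' you hope for is therefore a strict inequality in general, and P.10 fails for the generic weight; the paper proves it only after choosing $\vartheta_i$ proportional to $p_{i2}$, which strips all prices out of the weight and makes reversibility immediate. Likewise P.11 is not established for the general bilateral index: the $(1,2)$ weights involve $p_{i2}$ and $(q_{i1},q_{i2})$ while the $(2,3)$ weights involve $p_{i3}$ and $(q_{i2},q_{i3})$, so no shared period-2 sums telescope; the paper obtains transitivity only for the specialization in which the MPL index collapses to the Geary--Khamis index. Your plan omits the specialization of $\vartheta_i$ that is the actual content of these two items.

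For P.5 your reduction to $\bigl(\sum_i k_i^2a_i\bigr)\bigl(\sum_j b_j\bigr)>\bigl(\sum_i a_i\bigr)\bigl(\sum_j k_jb_j\bigr)$ with $a_i=p_{i2}\overline{\pi}_i$, $b_i=p_{i1}\overline{\pi}_i$ is correct, but the Chebyshev pairing cannot close it because the inequality is false for arbitrary positive $a,b$ and $\boldsymbol{k}>\boldsymbol{u}$: take $N=2$, $a=(1,10)$, $b=(10,1)$, $k=(2,1.1)$, so the left side is $16.1\times 11=177.1$ while the right side is $11\times 21.1=232.1$. Such configurations are realizable with positive prices and quantities (they only require $p_{12}/p_{11}=0.1$ and $p_{22}/p_{21}=10$ with suitable magnitudes of $\overline{\pi}_i$), so no rearrangement estimate can deliver strict monotonicity in general. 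To be fair, the paper's own proof of P.5 merely asserts the inequality after computing $\overline{\pi}_i^{\ast}=k_i\overline{\pi}_i$; but your proposal should not promise a proof route that provably fails. A complete argument has to state under which restrictions on $\vartheta_i$ (or on the data) P.5, P.10 and P.11 actually hold.
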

\begin{proof}
See Appendix~\ref{app:propgmpl}.
\end{proof}

%%%%%%%%%%%%%%%%%%%%%%
\section{The Italian cultural supply: an application of the MPL index }\label{sec:empirical}
%%%%%%%%%%%%%%%%%%%%%%
\noindent In this section, we provide an application of the MPL  index to Italian cultural supply data, such as revenues and the number of visitors to museums (i.e. monuments, archeological sites, museum circuits, $\dots$).
The availability of  temporal and  geographical data on Italian culture provides a stimulating basis for ascertaining the potential of the MPL price-index methodology set forth in this paper. The flexibility of the MPL index paves the way to moving beyond ISTAT (and similar) analyses, which are confined to price indices on the supply of data on Italian culture like access to museums and entertainment sectors, aggregated at the national level \citep{istatr2020}. 
In addition, to evaluate the performance of the MPL index, we have made a comparison with the CPD/TPD price indices \citep{diewert2005weighted, rao2004country}, using both real and simulated data. Reference has been made to this approach because, under the log-normality assumption of the error term, the maximum likelihood estimator of the said price index tallies with the least square one, likewise the MPL index.
As for the nature of the data, note that Italian cultural heritage is at the top of various world-class lists and plays a key role in the Italian economy \citep[see, e.g.,][]{di2019io}.\footnote{The Italian heritage supply chain accounts for 4,889 museums and the like; it generated almost 200 million euro of revenues in 2017 and employs 38.300 people \citep{istatr2019}. }
Lately, local cultural supply has evolved significantly. Indeed, most of the Italian museum circuits were founded relatively recently.\footnote{Approximately 2,300 sites (45.5\%)  of the Italian cultural supply chain were opened between 1960 and 1999, while 2,200 sites (38.6\%) were opened in 2000, taking advantage of the investments for economic recovery and infrastructure enhancement made for Italian cultural heritage sites \citep{istatr2016}.} 
In the following analysis, we have considered the ranking of the top 30 Italian cultural institutions (museums and the like) according to the highest number of annual visitors since 2004 (data source: \href{www.statistica.beniculturali.it}{www.statistica.beniculturali.it}). Among these, only 20 of the internationally renowned institutions remained ranked in the top 30 on a yearly basis. The other 10 positions were held by museums and institutions which only temporarily experienced an outstanding flow of visitors. These observations led to a twofold issue. First, the need to set-up a price index finalized at registering changes in the period under consideration. Second, a dynamic updating method of the index in order to preserve the information associated with the 10 positions not consistently present in the top 30 ranking (aspects which are not considered in the approaches usually adopted by most statistical agencies). 
The multi-period version of the MPL price index proves suitable for this scope. Hence, it has been applied to the data set which collects the number of visitors and revenues of the 30 leading Italian cultural institutions which, from 2004 to 2017, were ranked at least two times in the top 30.\footnote{All analyses in this investigation have been made with our own codes, written in R.} 

Figure~\ref{fig:5} shows the MPL price index together with its annual percentage variations for the period 2004--2017: 2004 being the base year and 2017 the year used for updating the index. Here the computation of the index has been done by assuming non spherical, and in particular heteroschedastic and uncorrelated (GLS-d) error terms.
Looking at the graph, we can note that in the early years of the new Millennium, when important investments started being made in the Italian cultural sector, the prices of museums (and the like) tickets grew (Figure~\ref{fig:5}). Thereafter, the price dynamic became more moderate and then tapered in 2009 and 2014 when, the so called ``W'' recession, namely the international financial and debt crisis in European peripheral countries, hit Italy. 

The MPL has been also computed under other specifications of the error terms and by taking into account the possibility of missing commodities. In particular, it has been also worked out by assuming heteroschedastic and correlated (GLS-f), stationary  (GLS-s) and spherical error terms (OLS), both in the case when the commodities are present in all period (case of complete price tableaux) and when some of them are missing (case of incomplete price tableaux). In Appendix~\ref{app:mplapp} the graphs of the MPL index in all these cases are shown.  
%See Appendices Appendix~\ref{app:mpl_m} and Appendix~\ref{app:mpl_nom} for the remaining cases, respectively with incomplete and complete price tableau.
%
\begin{figure}[htbp]
\begin{center}
\includegraphics[scale=0.42]{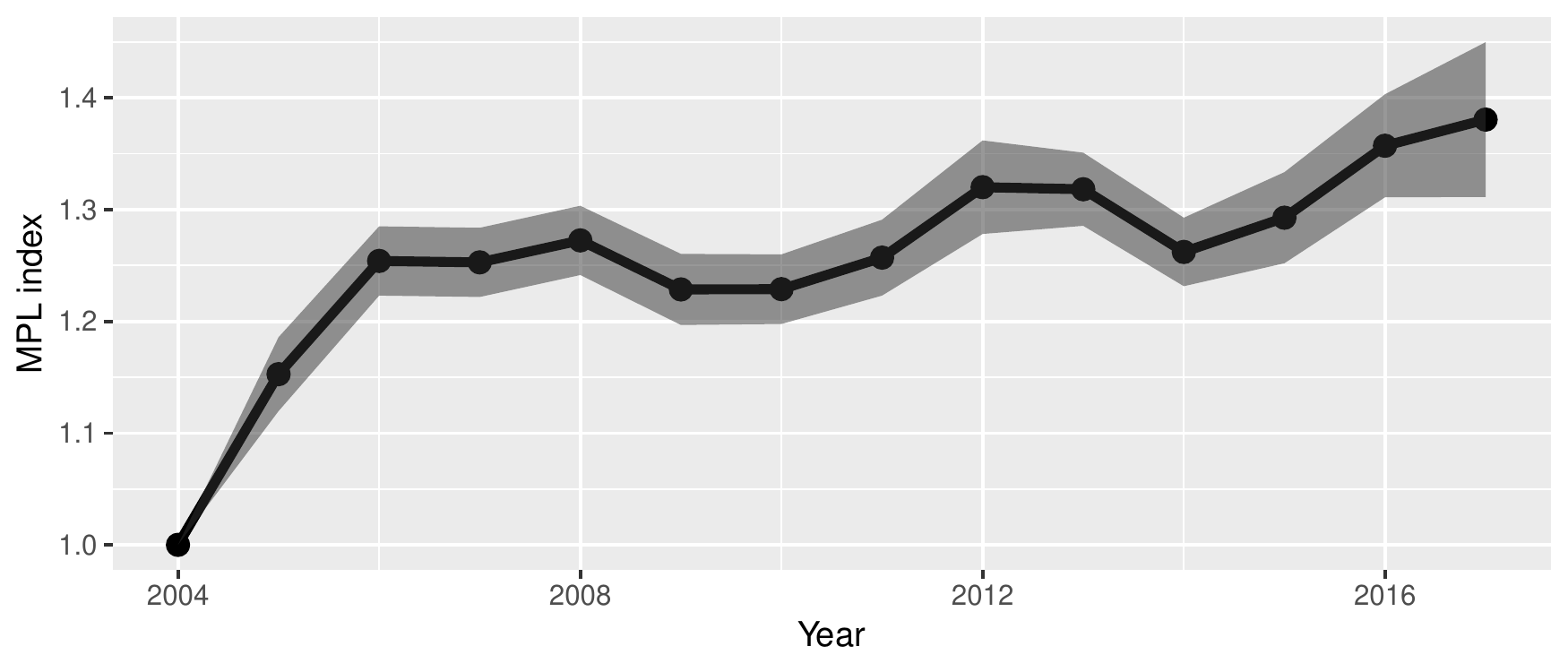}
\includegraphics[scale=0.42]{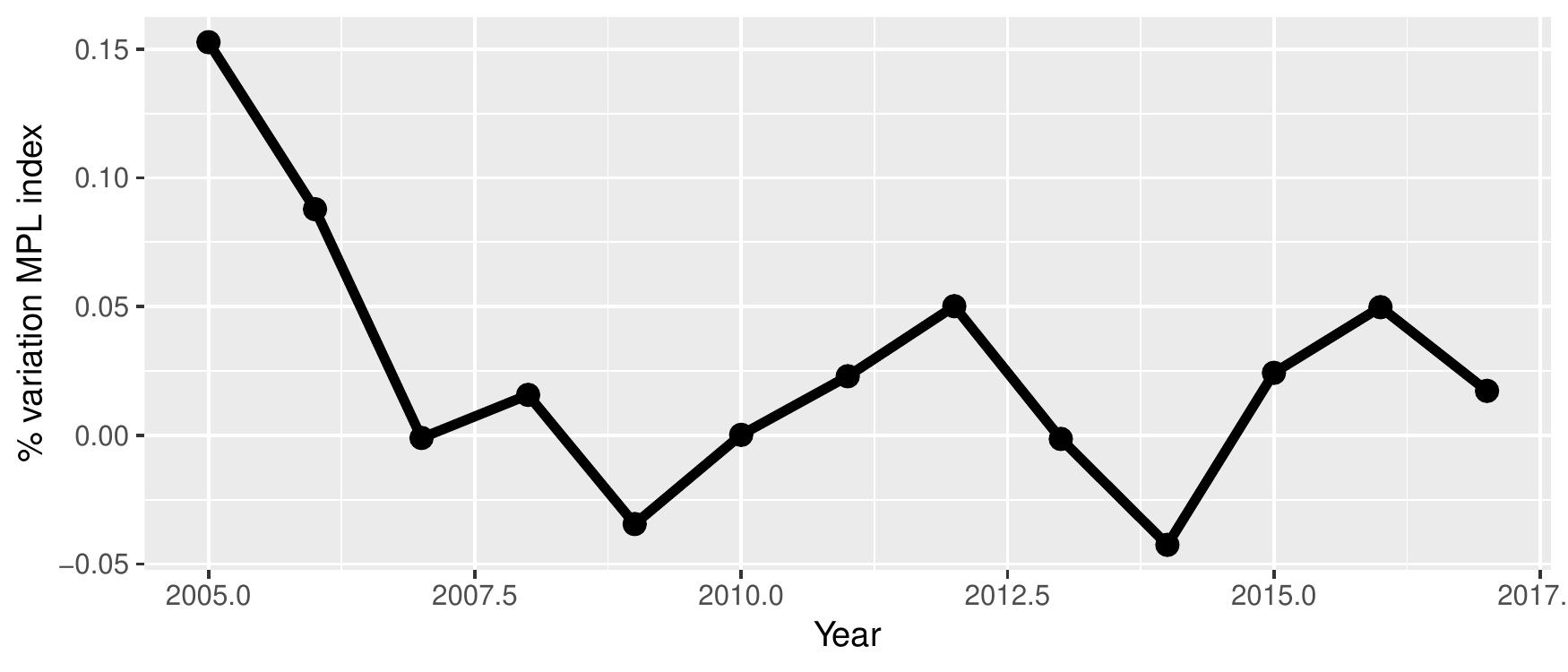}
\caption{MPL index and percentage annual change assuming heteroschedastic and uncorrelated error terms (GLS-d). }
\label{fig:5}
\end{center}
\end{figure}
%
%\newpage
 For the sake of further evidence from an empirical standpoint, the MPL has been also compared to the TPD index. 
%
%First TPD price indices have been computed only for those museums whose prices are available at all times. This has led to a drop in the number of museums/monuments/archeological sites from 36 to 17.  

Figure~\ref{fig:5g} shows both the MPL, computed under the assumption of heteroschedastic and uncorrelated error terms, and TPD price indices together with their $2\sigma$ confidence bounds. In the left panel the price indices have been computed only for those museums whose prices are available at all times. This has led to a drop in the number of museums/monuments/archaeological sites from 36 to 17 (note that this case corresponds to the ``standard'' reference basket). The right panel shows these indices computed with data from 36 museums ranked in the top 30 at least twice together with their $2\sigma$ confidence bounds. In this case, when not all items (museums) are priced in all periods, TPD estimates have been obtained by using the time version of the weighted CPD \citep[][pp. 420-421]{rao2004country}.  
%\textcolor[rgb]{1,0,0}{Here the MPL indices has been computed by assuming spherical (OLS) and non-spherical error terms. In particular, in the latter case, the error terms have been assumed to be heteroschedastic but uncorrelated (GLS-d). 
Looking at both panels we see that the two indices are aligned, but that the MPL one always fall within the confidence bounds of TPD indices. This result provides evidence of the MPL greater efficiency, due to its lower standard errors. % (see also Figure~\ref{fig:varsim1}). 
For completeness, in Appendix~\ref{app:mplapp} %the estimated index variance and 
the estimated reference prices, under the different error specifications, have been displayed. 
%In the case when not all items (museums) are priced in all periods, TPD estimates have been obtained by using the time version of the weighted CPD \citep[][pp. 420-421]{rao2004country}. Figure~\ref{fig:5g} (second panel) shows both MPL and TPD indices together with their $2\sigma$ confidence bounds. The same comments of the complete price tableau case apply here. Comparisons of these two indices for other specification of the error terms are provided in Appendix} 
%In Figure~\ref{fig:5g} we are considering an MPL with a diagonal specification of $\boldsymbol{\Omega}$ (see Appendix Appendix~\ref{app:cfr} for the remaining cases.)

\begin{figure}[htbp]
\begin{center}
\includegraphics[scale=0.42]{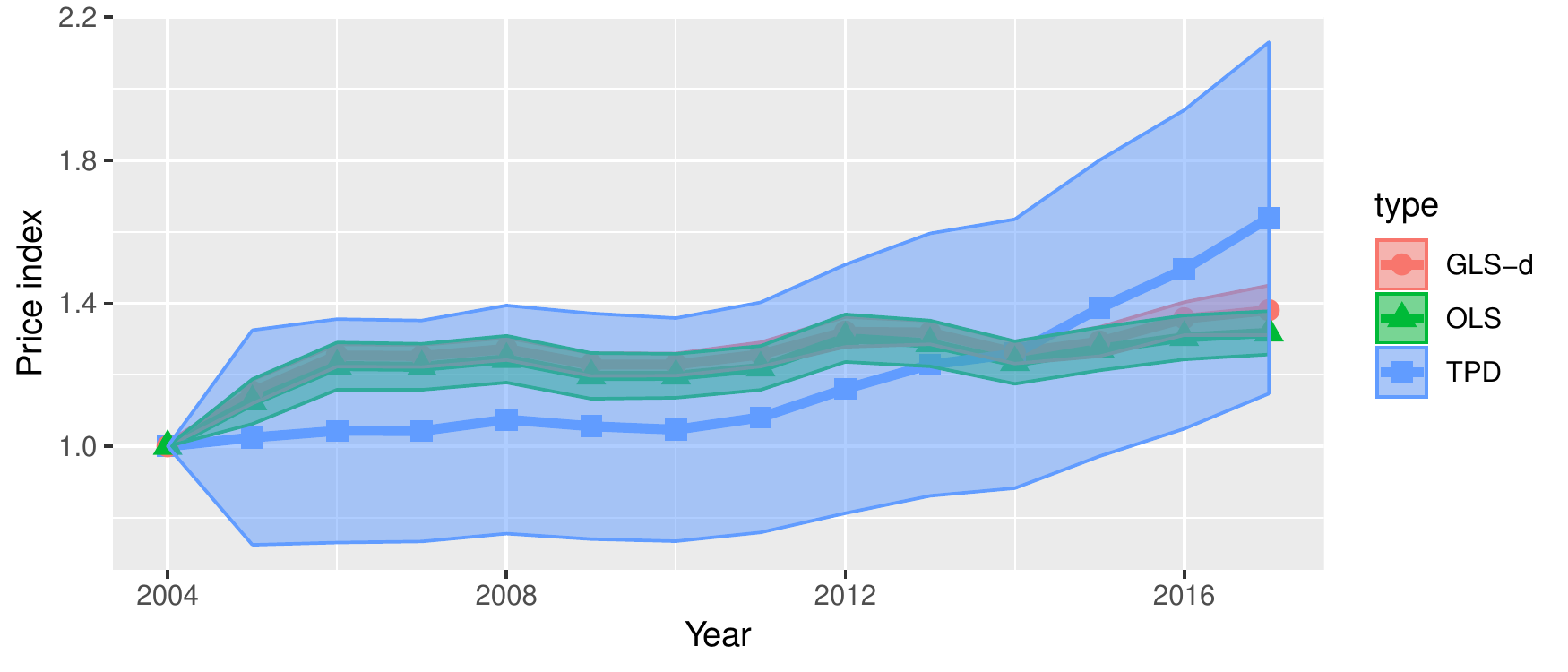} \includegraphics[scale=0.42]{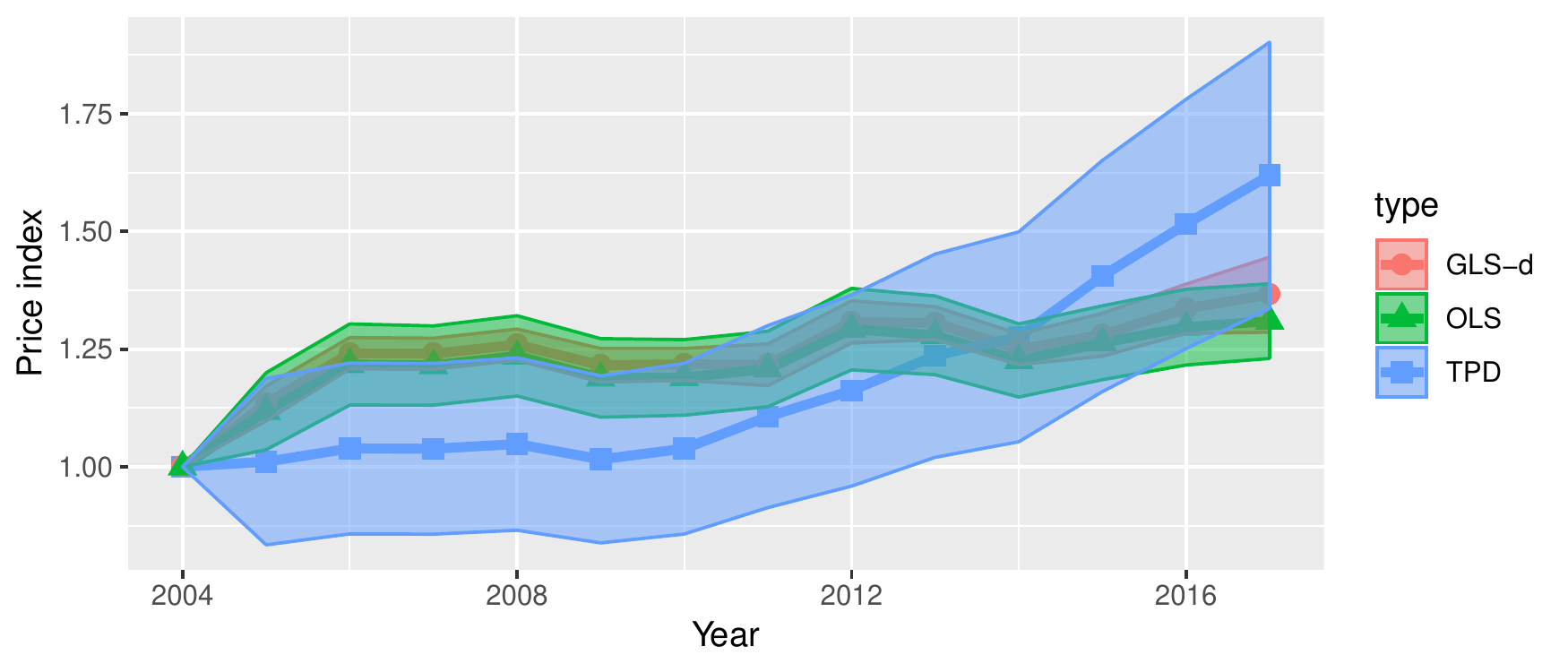}
\caption{The left panel compares the MPL index, computed for the GLS-d and OLS specification for the error terms, to the TPD index. Both the indices have been built by using data from 17 museums always ranked in the top 30; The right panel compares the MPL index, computed for the GLS-d and OLS specification for the error terms, to weighted TPD index. Here data from 36 museums ranked in the top 30 at least twice have been used to compute both indices.
}
\label{fig:5g}
\end{center}
\end{figure}

 The availability of data on visitors and revenues in 2017 for museums, monuments, archaeological sites, and museum circuits in the North-West, North-East, Centre and South (which includes the two islands Sicily and Sardinia) has allowed the computation of the multilateral version of the MPL index. Looking at the data, we see that almost half (46.3\%) are located in the North, while 28.5\% in the Centre, and 25.2\% in the South and Islands. The Regions with the highest number of cultural institutions are Tuscany (11\%), followed by Emilia-Romagna (9.6\%), Piedmont (8.6\%) and Lombardy (8.2\%) \citep{istatr2016}. However, alongside the more famous attractions, Italy is home to a wide and rich array of notable locations of cultural interest. A considerable percentage of these places (17.5\%) are found in municipalities with less than 2,000 inhabitants, but which can have up to four or five cultural sites in their small area. Almost a third (30.7\%) are distributed in 1,027 municipalities with a population varying from 2,000 to 10,000, and a bit more than half (51.8\%) are situated in 712 municipalities with a population of 10,000 to 50,000. Italy is, therefore, characterized by a strongly polycentric cultural supply distributed throughout its territory, even in areas considered as marginal from a geographic stance.

Table~\ref{tab:4b} reports, in the first three rows, the MPL index, computed under the assumption of stationary (GLS-s), non-spherical (GLS-d) and spherical (OLS) error terms, for three areas (North-West, North-East and Centre) considering the Centre as base area. The following three rows show updated values of the MPL index, computed for the  GLS-s, GLS-d and OLS specifications of the error terms, when the South-Islands are added to the data-set. As for the multi-period case, a comparison of the MPL estimates with the CPD ones is provided. The last row of Table~\ref{tab:4b} shows CPD estimates in the case of full price tableau, as all commodities are priced in the four geographic areas. Figure~\ref{fig:5c} shows both the MPL and CPD indices together with their $2\sigma$ confidence bounds. The comparison of the CPD and MPL indices computed under other specifications for the error terms (GLS-f and OLS) is provided in Appendix Appendix~\ref{app:mpllat}.  Once again, the estimates of the MPL index turn out to be more accurate than those provided by the CPD approach, as the former have standard errors lower than the latter. As in the comparison with the TPD index, the confidence bounds of CPD indices always include MPL estimates, thus suggesting the compatibility of the MPL index with the estimates provided with the CPD one. It is worth noting that in 2017, access to cultural sites in Southern Italy cost the most: almost twice as much as in the North-Eastern area. While the disparity could be ascribed to several factors, such as different costs of managing museums and similar institutions, tourism flows, etc: that type of analysis goes beyond the scope of the current investigation. 
\begin{table}[htbp]
\center
\caption{Updated MPL index assuming spherical (OLS) and non-spherical (GLS-d and GLS-s) error terms compared to the CPD index (standard error in parentheses).}
\label{tab:4b}
\begin{tabular}{lllll}
\hline\hline
                                     		& North West & North East  & Centre  & South    \\
\hline
MPL-GLS-s  	&1.070  (0.021) &0.634  (0.018)   &1.000    &   \\
MPL-GLS-d  	&1.162  (0.165) &0.632  (0.011)   &1.000    &   \\
MPL-OLS     &1.072  (0.200) &0.621  (0.194)   &1.000    &   \\
Updated MPL-GLS-f	&1.048 (0.078)  &0.665  (0.087)   &1.000    &1.107 (0.036) \\
Updated MPL-GLS-d		&1.148 (0.174)  &0.631  (0.008)   &1.000    &1.143 (0.001) \\
Updated MPL-OLS		&1.070 (0.176)  &0.622  (0.174)   &1.000    &1.142 (0.153) \\
CPD 					&1.524 (0.337)  &1.283  (0.284)	  &1.000 	  &1.021 (0.226)\\
\hline\hline
\end{tabular}
\end{table}
\begin{figure}[htbp]
\begin{center}
\includegraphics[scale=0.42]{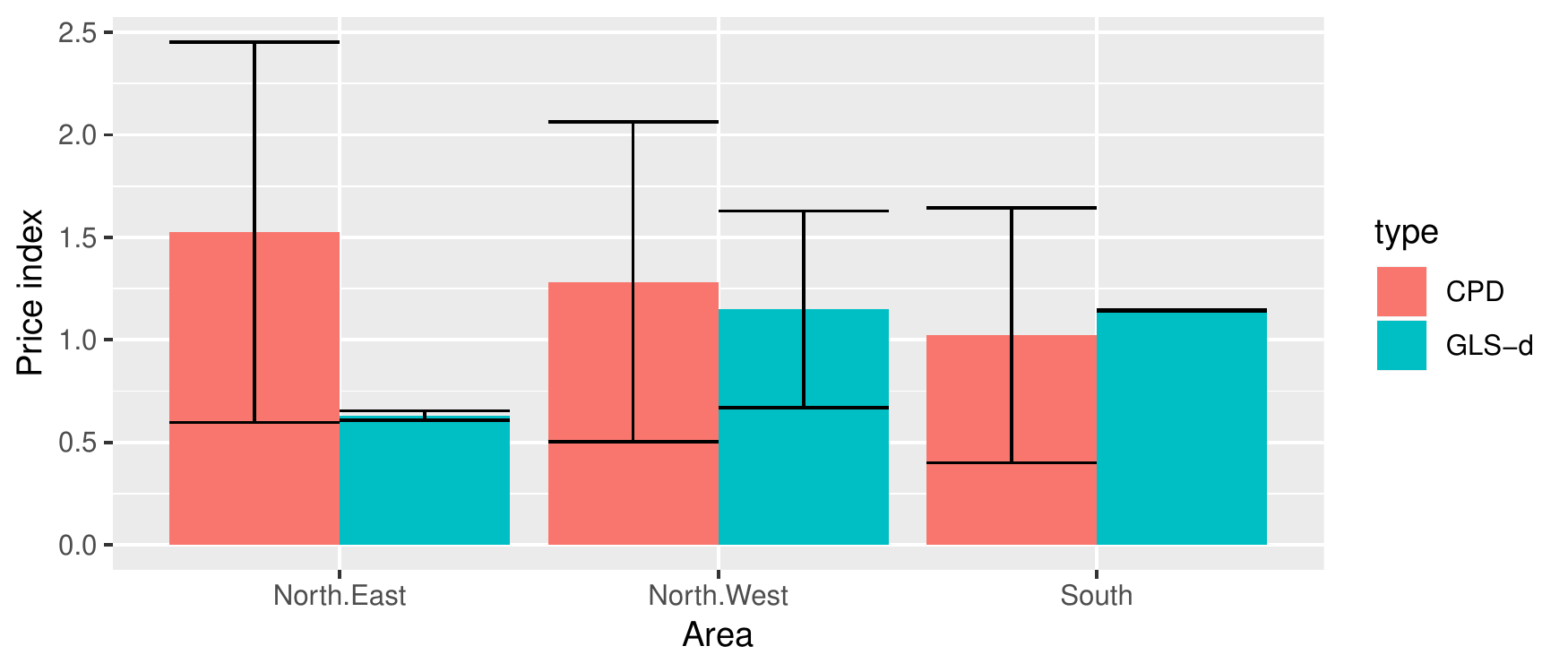}
\caption{GLS-d and CPD indices with their $2\sigma$ confidence bounds.}
\label{fig:5c}
\end{center}
\end{figure}

 Finally, in order to investigate more thoroughly the performance of the MPL index as compared to the TPD one, a simulation analysis has been performed   based on the perturbation of the original value matrix $\boldsymbol{V}_1$. In particular, one thousand simulations have been carried out by using perturbed values (and prices as a by-product) and assuming fixed quantities (i.e. equal to the original ones). Next, the simulated values (and prices) have been used to compute MPL and TPD indices in different settings: with and without missing quantities (and accordingly prices in the TPD model). The final MPL and TPD indices have been obtained as averages of all indices computed on simulated values (and prices). Two types of simulations have been carried out. First, simulated values from the $2$-nd to the $T$-th period (base period values, $\boldsymbol{v}_1$, being kept fixed) have been obtained by adding random perturbations, drawn from Normal laws with different means and variances, to the original values of $\boldsymbol{V}_1$. The plots in Figure~\ref{fig:sim1} show both the MPL and the TPD indices obtained by using these simulated data together with the associated $2\sigma$ confidence bounds. As before the MPL index  has been computed under the assumption of heteroschedastic and uncorrelated error terms (GLS-d).
% In particular, the MPL index has been computed under the assumption of heteroschedastic and uncorrelated error terms (GLS-d).
\begin{figure}[htbp]
\begin{center}
\includegraphics[scale=0.42]{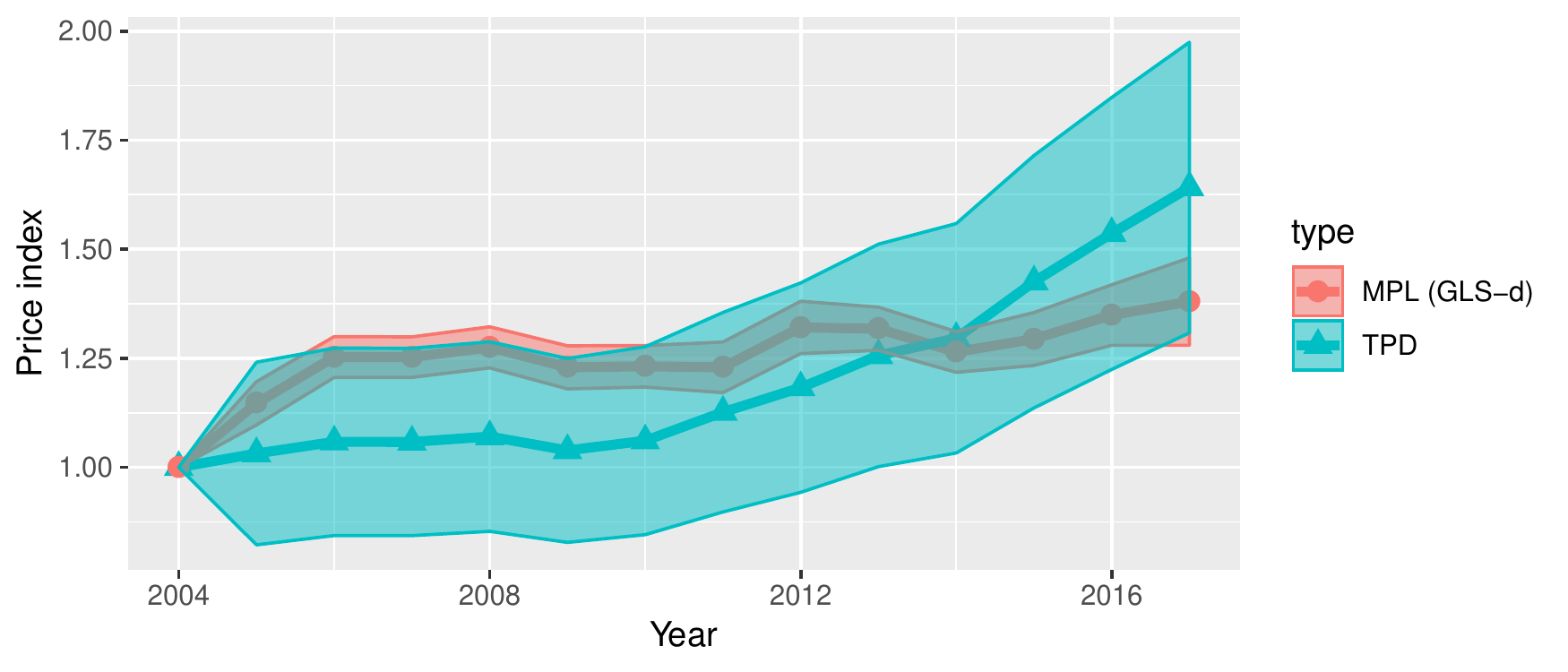} 
\includegraphics[scale=0.42]{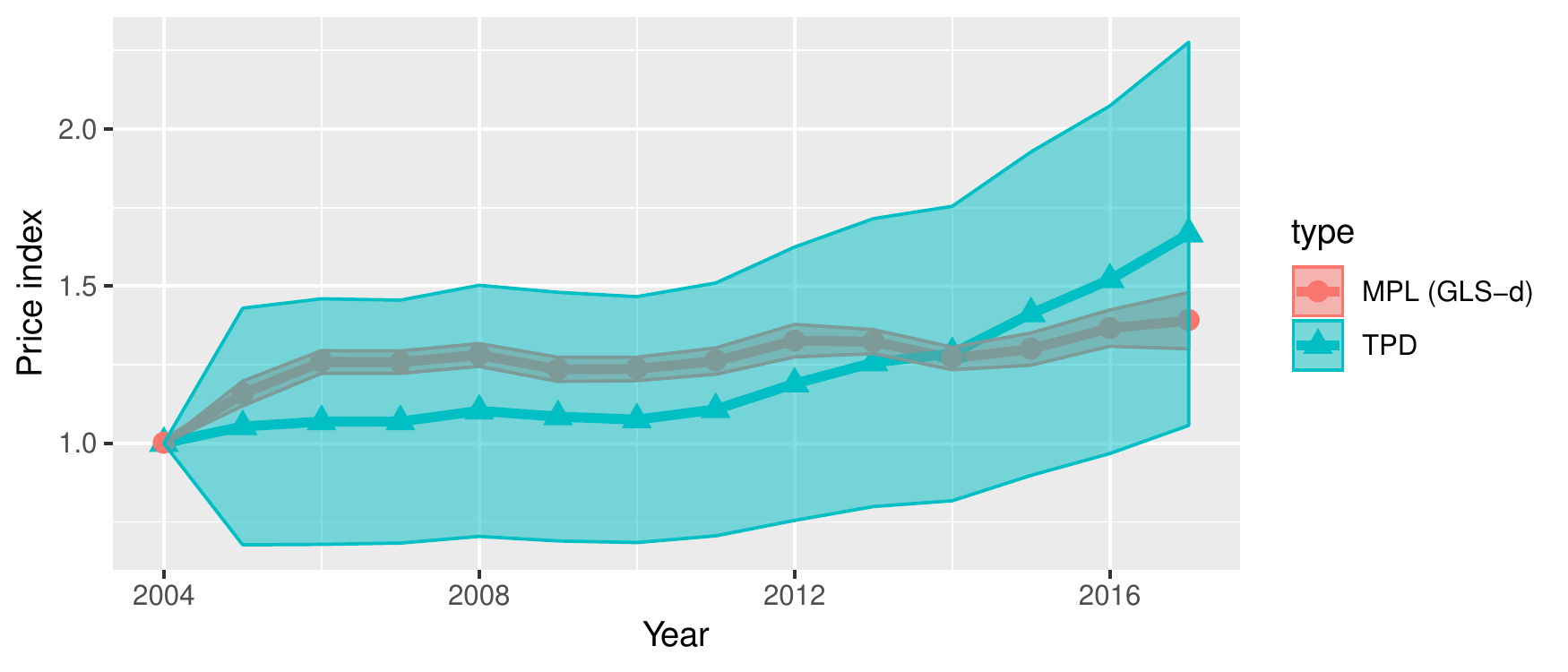}
\caption{MPL and TPD indices obtained by using simulated data obtained by adding random variables, generated from a Normal law (with a mean equal to 20000 and a standard error varying randomly from 0 to 1000), to the values of $\boldsymbol{V}_1$. The left and right panels refer to the complete and incomplete price tableau scenario, respectively.  
}
\label{fig:sim1}
\end{center}
\end{figure}
Then, following another approach, simulated values, from the $2$-nd to the $T$-th period (base period values, $\boldsymbol{v}_1$, being kept fixed), have been obtained from simulated values of the previous period with the addition of perturbation terms, drawn from Normal laws with given means and variances. Plots in Figure~\ref{fig:sim2} show both the MPL, and the TPD indices, for the case of complete and incomplete price tableau, together with the associated $2\sigma$ confidence bounds. %Here simulated values $\boldsymbol{\upsilon}_t$ have been obtained by adding stochastic terms drawn from a Normal law with a mean equal to -5000 and a standard error varying randomly from 0 to 800 to $\boldsymbol{\upsilon}_{t-1}$, for $t=2,\dots,T$. 
As before the MPL index  has been computed under the assumption of heteroschedastic and uncorrelated error terms (GLS-d). Looking at these figures, we see that in both cases, the MPL estimates are in line with the TPD ones, but are more accurate than the latter as their tighter confidence bounds show. 
\begin{figure}[htbp]
\begin{center}
\includegraphics[scale=0.42]{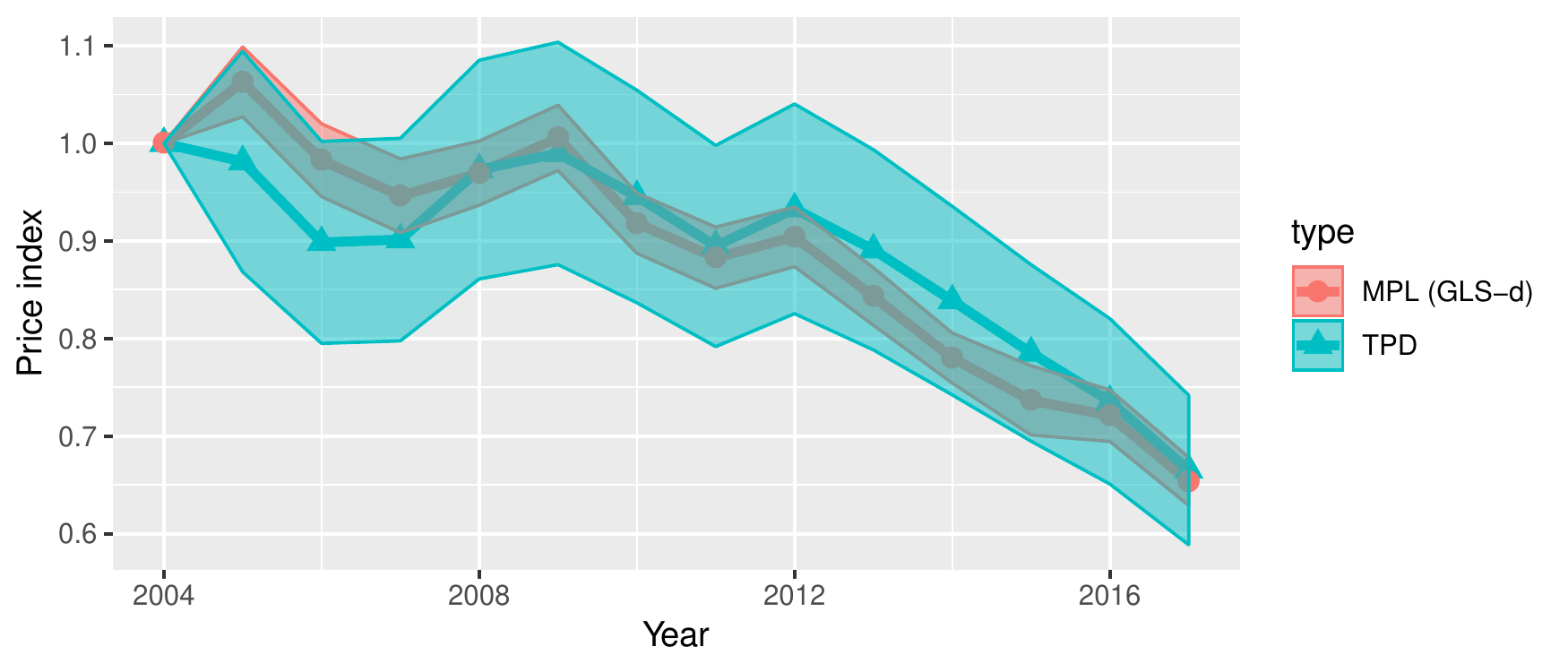} 
\includegraphics[scale=0.42]{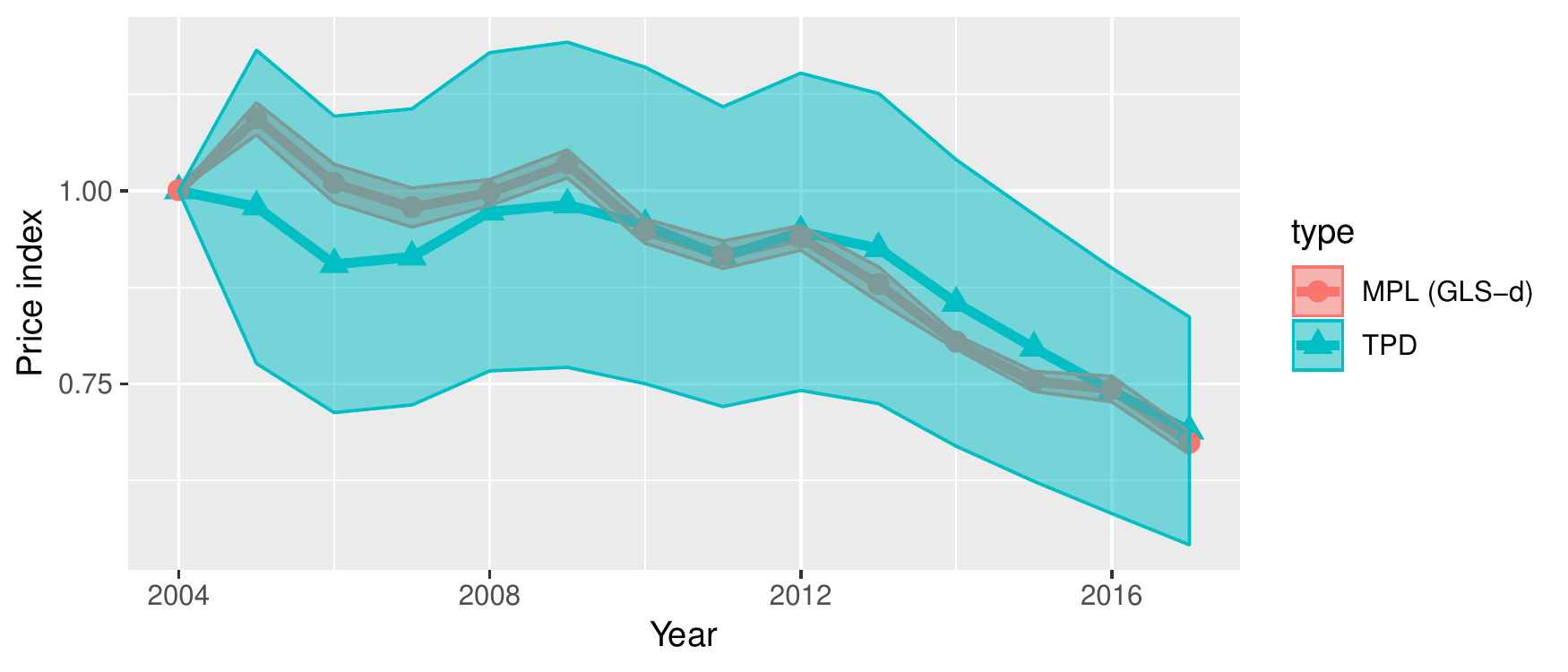}
\caption{MPL and TPD indices obtained, at each time $t$, for $t=2,\dots,T$, by adding perturbation terms (drawn from a Normal law with a mean equal to -5000 and a standard error varying randomly from 0 to 800) to the simulated values $\boldsymbol{\upsilon}_{t-1}$. The left and right panels refer to the complete and incomplete price tableau scenario respectively.}
\label{fig:sim2}
\end{center}
\end{figure}
In Appendix~\ref{app:mplsim}, a comparison of the TPD and the MPL indices, always computed on data simulated and by assuming other specifications for the error terms, (GLS-f, GLS-s and OLS), is provided.  

%%%%%%%%%%%%%%%%%%%%%%
\section{A simulation example}\label{sec:simulation}
%%%%%%%%%%%%%%%%%%%%%%
To further illustrate the potentialities of the MPL index, the latter has been computed by using a simulated data set built as follows. For given  
$\boldsymbol{q}_t$, $\tilde{\boldsymbol{p}}_t$ and $\lambda_t$, the values $\boldsymbol{v}_t$ have been computed according to Eq.~\ref{eq:kjk}. The random terms $\epsilon_t$ of this equation have been generated from a standard Normal distribution. These elements represent the ingredients of the system in Eq.~\ref{eq:sist1} used to work out the MPL index. The aim of this simulation is to see the capability of the MPL and TPD indices  to reproduce the values of the ``true'' index $\boldsymbol{\lambda}$.
 %Then, both the MPL and the TPD index have been computed on this simulated data set including four commodities from 2015 to 2020. We have focused on a multi-period perspective of the two approaches. In particular, the MPL index has been computed under several specification of the error terms and, more precisely, non spherical (GLS-f), stationary (GLS-s), heteroschedastic and uncorrelated (GLS-d), spherical (OLS) error terms. 
%
%%A simulated dataset as follows: first we define the matrix $\boldsymbol{Q}$, then $\tilde{\boldsymbol{p}}$ and ${\boldsymbol{\lambda}}$. Finally, considering random errors following a standard normal distribution, we define the value matrix $\boldsymbol{V}$ according to Eq.~\eqref{eq:multi33}. In such a way we are able to apply both the MPL and the TPD. For illustration purposes, here we concentrate only on the multi-period perspective of the two approaches. 
%
%First, we have considered the case of a complete price tableau.
 %has been considered 
%defined accordingly to the following objects, assuming to observe four commodities from 2015 to 2020 (here we report for simplicity only approximated data with two decimals):
Let's assume that the quantities, $\boldsymbol{Q}$, the reference prices, $\tilde{\boldsymbol{p}}$ and the values of the price index, $\boldsymbol{\lambda}$, of four commodities from 2015 to 2020 are specified as follows
\[\begin{split}
&\boldsymbol{Q}=
\begin{bmatrix}
5    & 8   & 10    &10   & 15   & 20\\
15   & 18 &   20&    10 &   15  &  10\\
25   & 27  &  30 &   35 &   30 &   20\\
5    & 5    & 5   & 10    &15  &  20
\end{bmatrix},\,\,\,
\boldsymbol{\lambda}=
\begin{bmatrix}
1.00\\
1.11\\
1.18\\
1.15\\
1.25\\
1.27
\end{bmatrix}
,\,\,\,
\tilde{\boldsymbol{p}}=
\begin{bmatrix}
2.1\\
1.5\\
0.9\\
1.9
\end{bmatrix}.
\end{split}
\] 
\noindent Then, with these data at hand, the values computed as in Eq.~\ref{eq:kjk} result to be
\[\begin{split}
&\boldsymbol{V}=
\begin{bmatrix}
 9.29 &19.10 &24.14 &23.36 &38.17 &53.59\\
22.78  &30.51 &34.40 &17.31 &28.40 &19.49\\
19.08 &26.43 &31.29 &37.17 &34.83 &22.21\\
7.15 &10.01 &10.18 &21.73 &33.28 &47.55
\end{bmatrix}
\end{split}
\] 
while the price matrix, needed to compute the TPD index, $\lambda_{TPD}$ hereafter, is
\[\begin{split}
&\boldsymbol{P}=
\begin{bmatrix}
1.86 &2.39 &2.41 &2.34 &2.54 &2.68\\
1.52 &1.69 &1.72 &1.73 &1.89 &1.95\\
0.94 &0.98 &1.04 &1.06 &1.16 &1.11\\
1.43 &2.00 &2.04 &2.17 &2.22 &2.38
\end{bmatrix}.
\end{split}
\] 

\noindent The matrices $\boldsymbol{Q}$ and $\boldsymbol{V}$ have been used to compute both the MPL and the TPD indixes, in a multi-period perspective. In particular, the MPL index has been computed under several specification of the error terms and, more precisely, %non spherical (GLS-f), 
stationary (GLS-s), heteroschedastic and uncorrelated (GLS-d), spherical (OLS) error terms.

We propose three different examples in which the MPL is compared with the TPD in the following cases
\begin{enumerate}
    \item complete price tableau, implying a reference basket including the complete set of the four commodities;
    \item incomplete price tableau, assuming missing the second and forth commodity, (that is $q_{41}=v_{41}=0$ and $q_{22}=v_{22}=0$), with a   ``standard'' reference basket, (see the left-hand side of Figure~\ref{fig:1}) that, accordingly includes only the first and the third commodities;
    \item incomplete price tableau assuming missing the second and forth commodity, (that is $q_{41}=v_{41}=0$ and $q_{22}=v_{22}=0$), with the   ``MPL'' reference basket, (see the right-hand side of Figure~\ref{fig:1}), that includes  commodities present in at least two periods, namely all the four commodities.
\end{enumerate}
The rows of Table~\ref{tab:perform} provides the sum of the squares of the differences between the estimated indices (GLS-s, GLS-d, OLS, and TPD) $\hat{\boldsymbol{\lambda}}$ and the index $\boldsymbol{\lambda}$ for the three cases . Looking at this table, we see that the MPL index, whatever is the specification assumed for the error terms, provides always the best fit to the index $\boldsymbol{\lambda}$.
%In Appendix Appendix~\ref{app:sim1} the graphical representations of the MPL and TPD indices is provided together with those of the same when the MPL one is computed under other specifications for the error terms (GLS-f, GLS-s; OLS). 
Thus, for any specification of the error terms, the MPL index exhibits an higher performance than the TPD, except for the GLS-d in the third case.
%, in estimating $\boldsymbol{\lambda}$ (see Table~\ref{tab:perform}).
%See Appendix Appendix~\ref{app:sim1} for graphical results associated with the MPL index and compared with the TPD index. In particular, here we assume a full specification of $\boldsymbol{\Omega}$ (GLS-f), a stationarity assumption on $\boldsymbol{\Omega}$ (GLS-s), a diagonal specification of  $\boldsymbol{\Omega}$ (GLS-d), and $\boldsymbol{\Omega}$ (GLS-d) to be an identity matrix leading to the OSL estimator (OLS).  
 In Appendix~\ref{app:sim}, the graphs of both the TPD and MPL indixes are provided under different specifications for the error terms (GLS-s, GLS-d; OLS).  In all cases the MPL estimates turn out to be more accurate, as they have lower variances and, consequently, they are always included in a $2\sigma$ confidence band of the TPD (see Figure~\ref{fig:5sim2}). 
%
%
%As for the incomplete price tableaux, the elements $q_{41}=v_{41}=p_{41}=0$ have been assumed missing. The last row of Table~\ref{tab:perform} shows the sum of the squared of the differences between the estimated values of each of the MPL and TPD indices and the $\boldsymbol{\lambda}$ one. Also with missing prices, the MPL performs better than the TDP in estimating the index $\boldsymbol{\lambda}$. Furthermore, the graphs reported in Appendix show that the former index has always a lower variances, meaning that it is also more efficient than the TPD. 

These examples are also particularly interesting to highlight the role played by the reference prices $\tilde{\boldsymbol{p}}$, which are the prices that consumers are expected to pay for the commodities in the base period/country.
Reference prices prove useful 
to obtain estimates of the prices of those commodities which, being missing in the basket, can not be determined. 
This case occurs in the third example with incomplete price tableau and reference basket including the complete set of commodities. In this case, if a commodity, say $j$, is missing in a period, say $i$, then its price, even if different from zero, turns out to be undetectable. 
However, in the  MPL approach, it can be determined, through the estimates of both its associated reference price $\hat{\tilde{p}}_j$ and price index $\hat{\lambda}_i$, as follows 
$\hat{\tilde{p}}_j\, \hat{\lambda}_i$.
This strategy has been used to estimate the prices of the second and  fourth commodity in the third case. \\To assess  the goodness of the estimates
$\hat{\tilde{p}}_j\, \hat{\lambda}_i$, $j=2,4$,  $i=1,...,5$ in reproducing 
the real prices $\tilde{p}_j\, \lambda_i$, $j=2,4, \enspace i=1,...,5$,
the sum of the squares between observed and estimated prices have been computed for all commodities, either included in the basket or missing. The results, reported in Table~\ref{tab:perform_ref} in Appendix~\ref{app:sim}, are very satisfactory.
% In fact, we could compare  the ``real'' value given by the product $\tilde{p}_j\, \lambda_i$ with the MPL estimate $\hat{\tilde{p}}_j\, \hat{\lambda}_i$. In the third example, the sum of the squares between observed and estimated prices take the following values: with the MPL-s is equal to 0.100, in the diagonal case to 0.083, and, in the OLS one to 0.079.  
%introduced to the best of our knowledge for the first time in the price index literature. 
%We are, indeed, able to compare the $\tilde{\boldsymbol{p}}$ with its MPL estimate $\hat{\tilde{\boldsymbol{p}}}$  and take advantage of them to retrieve prices.  However, with the MPL approach, we are able to estimate this unobserved price via the associated reference price $\hat{\tilde{p}}_j$ and the corresponding price index $\hat{\lambda}_i$. In fact, we could compare  the ``real'' value given by the product $\tilde{p}_j\, \lambda_i$ with the MPL estimate $\hat{\tilde{p}}_j\, \hat{\lambda}_i$. 
%the sum of the squares between observed and estimated prices have been computed either for all commodities, either included in the basket or missing. The results, reported in (Table~\ref{tab:perform_ref} in Appendix~\ref{app:sim}), are very satisfactory. 
%and these are very small: 0.100 in the MPL-s case, 0.083 in the diagonal case, and 0.079 in the OLS one.
%Moreover, in all the three examples we compute the sum of the squares between the real $\tilde{\boldsymbol{p}}$ and the estimated $\hat{\tilde{\boldsymbol{p}}}$ reference prices (Table~\ref{tab:perform_ref} in Appendix~\ref{app:sim}).  
Looking at Figure~\ref{fig:6sim2} in Appendix~\ref{app:sim}, which compares the estimates of the reference prices with the ``real'' ones,  it is clear that the MPL proves able to suitably estimate the reference prices for all commodities, also for the missing ones.
 Figure~\ref{fig:6sim3} compares the values of the MPL and TPD indexes $\hat{\boldsymbol{\lambda}}$ to the real price index $\boldsymbol{\lambda}$ of this experiment. Interestingly, differently from the TPD, the MPL  better captures the trend of the ``real'' price index over time, avoiding a TPD overestimation issue (see Table~\ref{tab:perform}) in all cases (except for the GLS-d incomplete price tableau with the novel basket). 
 
Finally, in Table~\ref{tab:perform_b} in Appendix~\ref{app:sim} ``real'' prices $\boldsymbol{P}$ have been compared with their estimates obtained by using the MPL and the TPD index, given by $\hat{\boldsymbol{P}}_{MPL}=\tilde{\boldsymbol{p}}\hat{\boldsymbol{\lambda}'}$ and $\widehat{\boldsymbol{P}}_{TPD}=\tilde{\boldsymbol{p}}\hat{\boldsymbol{\lambda}}_{TPD}'$ for the former and the latter,  respectively.  
The better performance of the MPL compared to the TPD one in reproducing the sequences of prices emerges from Table~\ref{tab:perform_b}, providing the sum of the squares of the differences between the real prices and the ones estimated by the two indexes.
%and by Figure~\ref{fig:varsim} which shows the variances of the said indexes. 

%Thumple proposeds, tsaking int the hancesigher of the MPL against the perforoe exam acc TPD. ount both values and quantities in the estimation of a price index -- instead of only prices, as in the TPD case, -- allows to estimate more efficiently the price index. These simulations offer a controlled study in which the ability of the MPL to estimate $\boldsymbol{\lambda}$ is enhanced in the presence of complete and incomplete price/quantity tableau (see Figure~\ref{fig:5sim3} and Table~\ref{tab:perform}). Similarly, the MPL ability in estimating the reference prices is empirically shown (see Table~\ref{tab:perform_ref} in Appendix~\ref{app:sim}).

\begin{table}[htb]
\center
\caption{Sum of square of the differences between the estimated indices and the real one.}
\label{tab:perform}
\begin{tabular}{lrrrrr}
\hline
\hline
Data                                                               &  & GLS-s   & GLS-d   & OLS     & TPD     \\
\hline
\hline
\begin{tabular}[c]{@{}l@{}}1. Complete price \\ tableau\end{tabular}  &       & 0.00308 & 0.00311 & 0.00333 & 0.04327 \\
\begin{tabular}[c]{@{}l@{}}2. Incomplete price\\ tableau (``classical basket'')\end{tabular} &       & 0.00053        & 0.00327        & 0.00126        &   0.00520     \\
\begin{tabular}[c]{@{}l@{}}3. Incomplete price \\ tableau (novel basket)\end{tabular} &       & 0.00219        & 0.00352        & 0.00212        &   0.00301     \\
\hline
\hline
\end{tabular}
\end{table}

\begin{figure}[htbp]
\begin{center}
\includegraphics[scale=0.42]{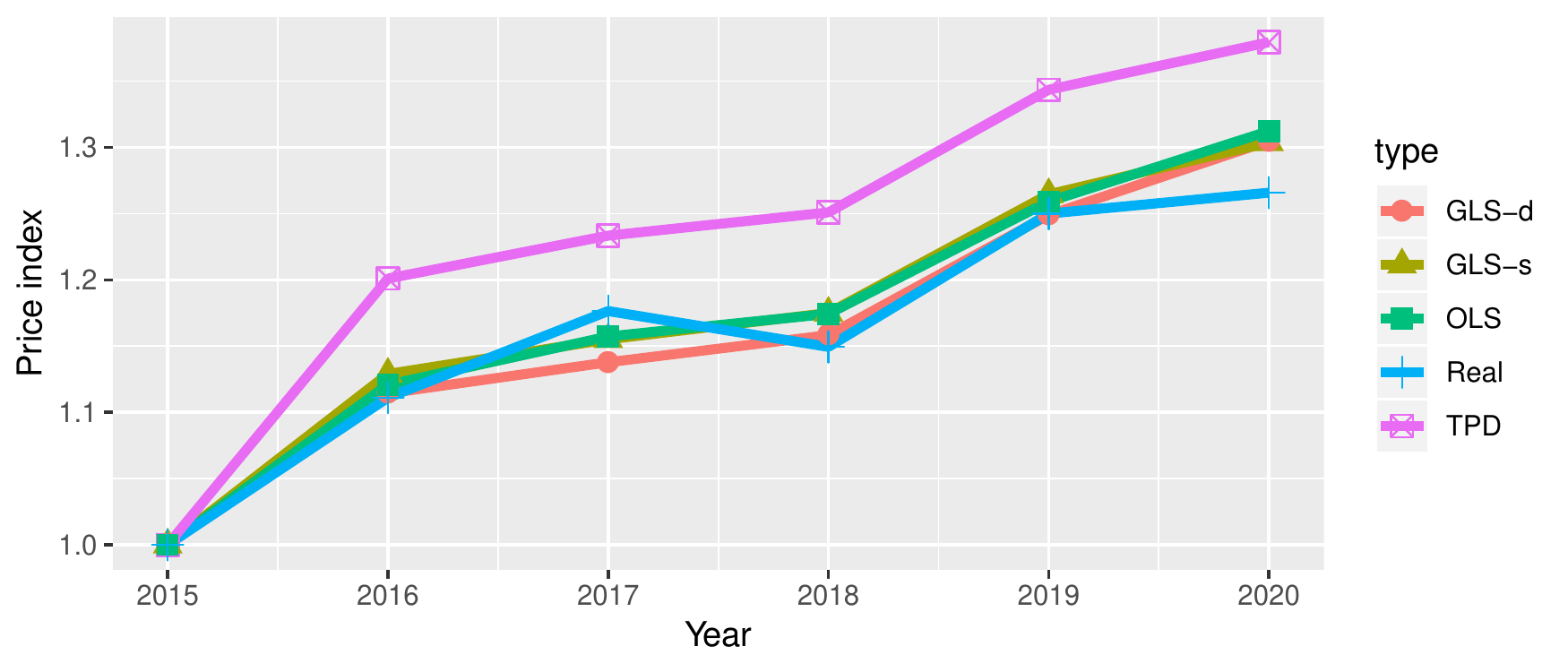} \includegraphics[scale=0.42]{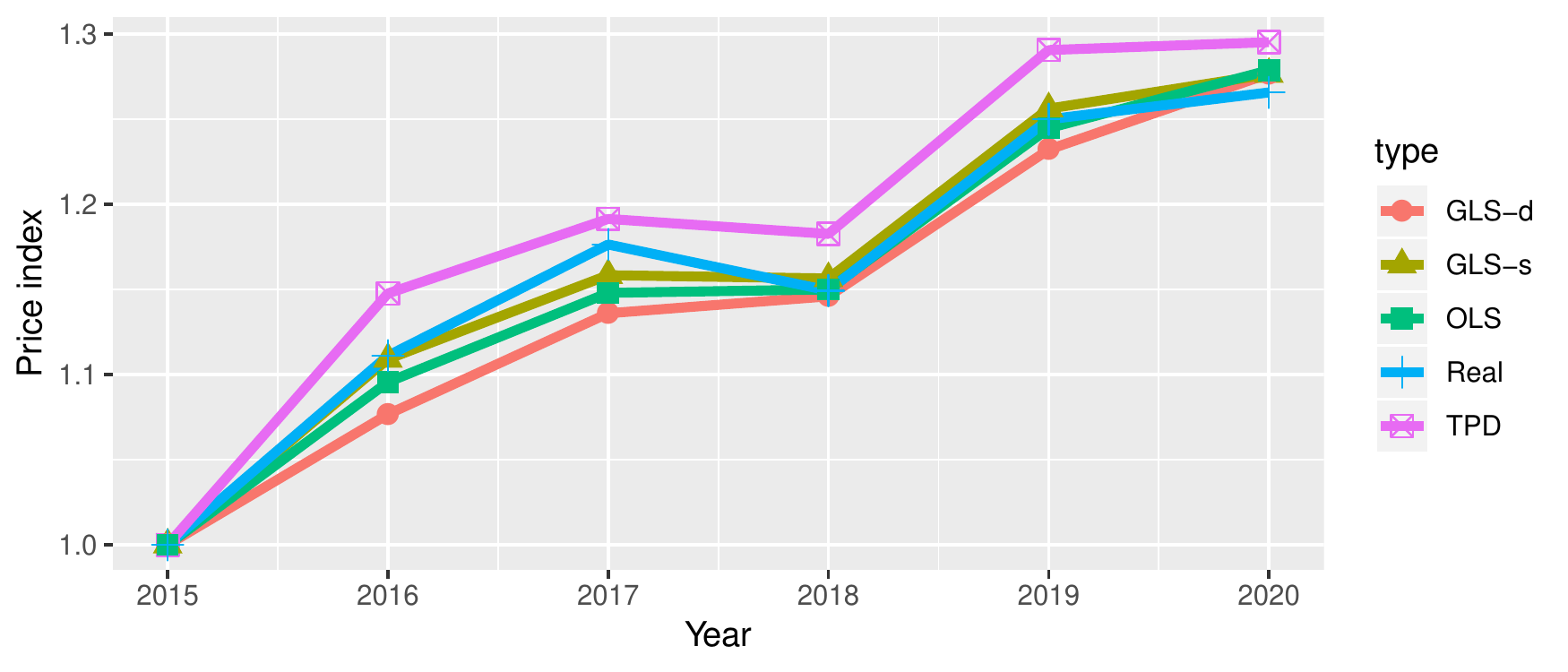}\\ \includegraphics[scale=0.42]{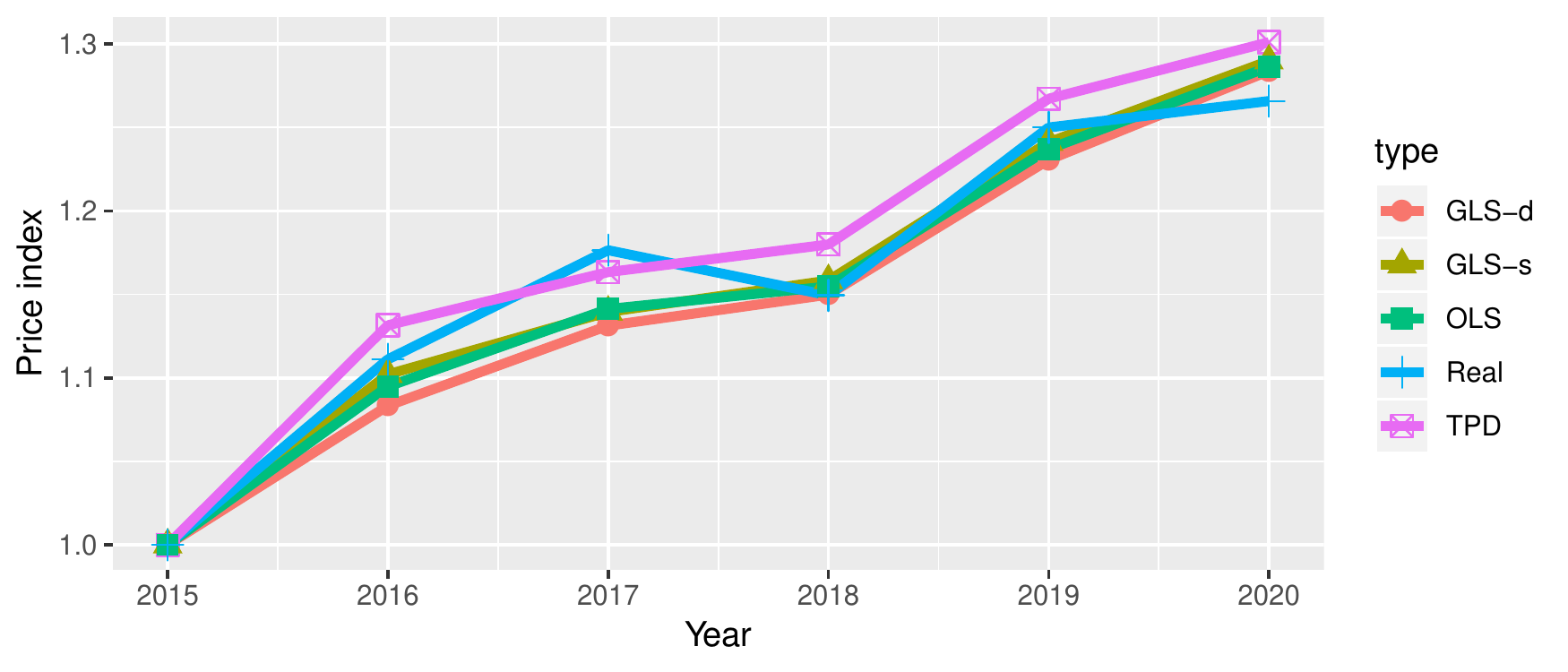}
\caption{
Comparison across different version of the MPL and TPD without confidence bands are proposed for the three examples: complete price tableau, incomplete price tableau ``standard'' reference basket, incomplete price tableau novel reference basket, respectively.
 %panel assumes a full specification of  $\boldsymbol{\Omega}$, the top right panel a stationarity assumption, the bottom left a diagonal one, while the bottom right a identity matrix leading to an OSL estimator. Confidence bands are also plotted.  
}
\label{fig:5sim3}
\end{center}
\end{figure}

%\begin{figure}[htbp]
%\begin{center}

%\includegraphics[scale=0.42]{}
%\includegraphics[scale=0.42]{}
%\includegraphics[scale=0.42]{}
%\caption{Comparison across different version of the MPL and TPD of the estimates price index variances for the three examples: complete price tableau, incomplete price tableau ``standard'' reference basket, incomplete price tableau novel reference basket, respectively.
%}
%\label{fig:varsim}
%\end{center}
%\end{figure}

%\clearpage
%%%%%%%%%%%%%%%%%%%%%%
\section{Conclusion}\label{sec:conclusion}
%%%%%%%%%%%%%%%%%%%%%%
 The paper works out a novel price index that can be used either as a multi-period or as a multilateral index. This index, called MPL index, is obtained as a solution to an ``ad hoc'' minimum-norm criterion, within the framework of the stochastic approach. The computation of the  MPL index does not require the knowledge of commodity prices, but only their quantities and values. The reference basket of the MPL index, over periods or across countries, is more informative and complete than the ones commonly used by statistical agencies, and easy to update. The updating process is twofold depending on the multi-period or the multilateral use of the index. An application of the MPL index to the Italian cultural supply data provides proof of its positive performance. A comparison between the MPL and the CPD/TPD index on both real and simulated data provides evidence of the greater efficiency of the MPL estimates.
 
 Thus, the MPL index is very promising  both in the multi-period and the multilateral perspective, also considering the simple and efficient way of updating the index series  \citep{diewert2020substitution}. 
  
 The approach here proposed can be extended along several paths.
 For instance, the application of the MPL index for a multilateral comparison across countries using different currencies would require a suitable adjustment to ensure comparability among countries. This could be done by using a set of purchasing power parities to convert the different currencies into a common one. In this case, the MPL index could be built by employing  ``international'' quantities and ``country volumes'' as suggested by \citep{balk1996comparison}.
 Furthermore, the MPL approach could be 
 employed to construct price indexes across both space and time as in \citep{hill2004constructing}. Both these research lines are being investigated.

\section*{Acknowledgements}
 \noindent We sincerely thank Prof. E. Diewert for his valuable and constructive suggestions, as well as precious comments during the development of this article.
 
All remaining errors, typos or inconsistencies of this work are of our own.
 
\clearpage
%\newpage
%\bibliographystyle{agsm}
\bibliographystyle{apalike}
\bibliography{FNZ_literature.bib}
%\newpage
%
\clearpage % ensure all floats are processed
\processdelayedfloats
\clearpage

%\processdelayedfloats
\newpage
%%%%%%%%%%%%
%\let\normalsize\small
%
%\onehalfspacing
%%%%%%%%%%%%
%\begin{singlespace}
%%%%%%%%%%%%
%%%%%%%%%%%%
%\section*{Supplementary material}
\section*{Appendix}\label{app:proofs}
%\appendix\label{app}
\small

\setcounter{equation}{0} \renewcommand{\theequation}{A.\arabic{equation}}
\setcounter{section}{0}
%\setcounter{figure}{0} \renewcommand{\thefigure}{A.\arabic{figure}}
%\setcounter{table}{0} \renewcommand{\thetable}{A.\arabic{table}}
%%%%%%%%%%%%

%%%%%%%%%%%%
\section{Proofs of Theorems and Corollaries}
%%%%%%%%%%%%

%%%%%%%%%%%%
%\newpage
\subsection{Proof of Theorem~\ref{th:2gls}}\label{app:proofs2gls}
%%%%%%%%%%%%
\begin{proof} 
In compact form, the model in Eq.~\eqref{eq:sist1} can be written as
\begin{equation}\label{eq:reg0}
\underset{(NT, 1)}{\boldsymbol{y}}=\underset{(NT, N+T-1)}{\boldsymbol{X}_{}}\underset{(N+T-1, 1)}{\boldsymbol{\beta}} + \underset{(NT, 1)}{\boldsymbol{\mu}}
\end{equation} 
where 
%\begin{equation}\label{eq:reg1}
\[
\underset{(NT,1)}{\boldsymbol{y}}=\begin{bmatrix} \underset{(N,1)}{\boldsymbol{v}_1}\\\underset{(N(T-1),1)}{\boldsymbol{0}}
\end{bmatrix}, \,\,\underset{(NT,N+T-1)}{\boldsymbol{X}_{}}=\begin{bmatrix} \underset{(N,T-1)}{\boldsymbol{0}} &  \underset{(N,N)}{(\boldsymbol{q}'_1\otimes \boldsymbol{I}_N)\boldsymbol{R}_N'}\\ \underset{(N(T-1),T-1)}{(\boldsymbol{I}_{T-1}\otimes (-\boldsymbol{V}_1))\boldsymbol{R}_{T-1}'} & \underset{(N(T-1),N)}{(\boldsymbol{Q}'_1\otimes \boldsymbol{I}_N)\boldsymbol{R}_N'}
\end{bmatrix}
\]
%\end{equation}
and
\[
\underset{(N+T-1,1)}{\boldsymbol{\beta}}=\begin{bmatrix} \underset{(T-1,1)}{{\boldsymbol{\delta}}}\\\underset{(N,1)}{\tilde{\boldsymbol{p}}}
\end{bmatrix}, \,\,\underset{(NT,1)}{\boldsymbol{\mu}}=\begin{bmatrix} \underset{(N,1)}{\boldsymbol{\varepsilon}_1} \\ \underset{(N(T-1),1)}{\boldsymbol{\eta}}
\end{bmatrix}.
\]
\noindent The generalized least square estimator of the vector $\boldsymbol{\beta}$ is given by
\begin{equation}
\label{eq:beta}
\widehat{\boldsymbol{\beta}}_{GLS}=(\boldsymbol{X}'\boldsymbol{\Omega}^{-1}\boldsymbol{X})^{-1}\boldsymbol{X}'\boldsymbol{\Omega}^{-1}\boldsymbol{y}
\end{equation}
where, $\boldsymbol{\Omega}$ is as defined in Eq.~\eqref{eq:11}.

Some computation prove that, 
%\begin{equation}
\[
\begin{split}
&\underset{(N+T-1,N+T-1)}{\boldsymbol{X}'\boldsymbol{\Omega}^{-1}\boldsymbol{X}}=\\
&
\begin{bmatrix} 
\boldsymbol{R}_{T-1}\left[(\boldsymbol{I}_{T-1}\otimes\boldsymbol{V}_1'){\boldsymbol{\Omega}^{*}}^{-1}(\boldsymbol{I}_{T-1}\otimes\boldsymbol{V}_1)\right] \boldsymbol{R}_{T-1}' 
&- \boldsymbol{R}_{T-1}\left[(\boldsymbol{I}_{T-1}\otimes\boldsymbol{V}_1'){\boldsymbol{\Omega}^{*}}^{-1}(\boldsymbol{Q}'_1\otimes\boldsymbol{I}_{N})\right] \boldsymbol{R}'_{N}
\\ -\boldsymbol{R}_{N}\left[(\boldsymbol{Q}_1\otimes\boldsymbol{I}_{N}){\boldsymbol{\Omega}^{*}}^{-1}(\bm{I}_{T-1}\otimes\boldsymbol{V}_1)\right]\boldsymbol{R}_{T-1}' 
& \boldsymbol{R}_{N} (\boldsymbol{q}_1\boldsymbol{q}_1'\otimes \boldsymbol{\Omega}_{11}^{-1})\bm{R}'_{N}+\bm{R}_N(\boldsymbol{Q}_1\otimes\boldsymbol{I_N}) {\boldsymbol{\Omega}^{*}}^{-1}(\boldsymbol{Q}_1'\otimes\boldsymbol{I_N})\boldsymbol{R}_N'
\end{bmatrix}=
\\
\\
&\begin{bmatrix}
\boldsymbol{R}_{T-1}\left[\sum_{j=1}^{T-1}(\boldsymbol{e}_{j}\boldsymbol{e}_{j}'\otimes \boldsymbol{V}_1'{\boldsymbol{\Omega}^{-1}_{j+1,j+1}}\boldsymbol{V}_1)\right] 
\boldsymbol{R}_{T-1}'
&-\boldsymbol{R}_{T-1}\left[\sum_{j=1}^{T-1}(\boldsymbol{e}_{j}\otimes\boldsymbol{V}_1'{\boldsymbol{\Omega}^{-1}_{j+1,j+1}})(\boldsymbol{q}_j'\otimes \boldsymbol{I}_{N} )\right]\boldsymbol{R}'_{N}
\\ -\boldsymbol{R}_{N}\left[\sum_{j=1}^{T-1}(\boldsymbol{e}_{j}'\otimes{\boldsymbol{\Omega}^{-1}_{j+1,j+1}}\boldsymbol{V}_1)(\boldsymbol{q}_j\otimes \boldsymbol{I}_{N} )\right]\boldsymbol{R}_{T-1}'
& \boldsymbol{R}_{N}\left[\sum_{j=1}^{T}\boldsymbol{q}_j\boldsymbol{q}_j'\otimes{\boldsymbol{\Omega}^{-1}_{j,j}}\right]\boldsymbol{R}_N'
\end{bmatrix}=\\
\\
%\end{split}
%\end{equation}
%\begin{equation}\label{eq:xpg}
&\begin{bmatrix}
\left[\sum_{j=1}^{T-1}(\boldsymbol{e}_{j}\boldsymbol{e}_{j}'\ast \boldsymbol{V}_1'{\boldsymbol{\Omega}^{-1}_{j+1,j+1}}\boldsymbol{V}_1)\right] 
&-\left[\sum_{j=1}^{T-1}(\boldsymbol{e}_{j}\boldsymbol{q}_{j+1}'\ast\boldsymbol{V}_1'{\boldsymbol{\Omega}^{-1}_{j+1,j+1}})\right] \\
-\left[\sum_{j=1}^{T-1}(\boldsymbol{e}_{j}'\boldsymbol{q}_{j+1}\ast{\boldsymbol{\Omega}^{-1}_{j+1,j+1}}\boldsymbol{V}_1)\right]
&\left[\sum_{j=1}^{T}\boldsymbol{q}_j\boldsymbol{q}_j'\ast{\boldsymbol{\Omega}^{-1}_{j,j}}\right]
\end{bmatrix}
=\\
\\
&\begin{bmatrix}
\boldsymbol{I}_{T-1}\ast \boldsymbol{\widetilde{V}}_1'\boldsymbol{V}_1
& - \boldsymbol{Q}_1'\ast \boldsymbol{\widetilde{V}}_1' \\
-\boldsymbol{Q}_1 \ast \boldsymbol{\widetilde{V}}_1
& \sum_{j=1}^{T}\boldsymbol{q}_j\boldsymbol{q}_j'\ast{\boldsymbol{\Omega}^{-1}_{j,j}}\\
\end{bmatrix}
%\end{split}
%\end{equation}
%\begin{equation}
=\begin{bmatrix}
\bm{A}&\bm{B}\\\bm{C}&\bm{D}
\end{bmatrix}
\end{split}
\]
%\end{equation} 
where $\boldsymbol{e}_{j}$ is the $T-1$ dimensional $j$-th elementary vector and $\boldsymbol{\widetilde{V}}_1$ is a $N \times (T-1)$ matrix whose $j$-th column is ${\boldsymbol{\Omega}^{-1}_{j,j}}\boldsymbol{v_j}$.\\
Furthermore\footnote{Note that 
\[\boldsymbol{R}_N(\boldsymbol{q}_1\otimes \boldsymbol{\Omega}_{11}^{-1})\boldsymbol{v}_1=\boldsymbol{R}_N(\boldsymbol{q}_1\otimes \boldsymbol{\Omega}_{11}^{-1})(\boldsymbol{I}_1\otimes\boldsymbol{v}_1)\boldsymbol{R}_1'=\boldsymbol{R}_N(\boldsymbol{q}_1\otimes \boldsymbol{\Omega}_{11}^{-1}\boldsymbol{v}_1)\boldsymbol{R}_1'=\boldsymbol{q}_1* \boldsymbol{\Omega}_{11}^{-1}\boldsymbol{v}_1\]
where $\boldsymbol{R}_1=\underset{(1,1)}{\boldsymbol{e}'_1}\otimes \underset{(1,1)}{\boldsymbol{e}'_1}=1$.} 
\begin{equation}
\label{eq:xy11}
\underset{(N+T-1,1)}{\boldsymbol{X}'\boldsymbol{\Omega}^{-1}\boldsymbol{y}}=
\begin{bmatrix}
\underset{(T-1,1)}{\boldsymbol{0}}\\\underset{(N,1)}{\boldsymbol{R}_N(\boldsymbol{q}_1\otimes \bm{I}_N)(\boldsymbol{\Omega}_{11}^{-1}\boldsymbol{v}_1)}\end{bmatrix}
= 
\begin{bmatrix}
\underset{(T-1,1)}{\boldsymbol{0}}\\\underset{(N,1)}{\boldsymbol{q}_1 * \boldsymbol{\Omega}_{11}^{-1}\boldsymbol{v}_1}
\end{bmatrix}\footnote{Use has been made of the following relationship between the Kronecker and the Hadamard product \citep{faliva1996hadamard} \[\underset{(N,M)}{\boldsymbol{A}} * \underset{(N,M)}{\boldsymbol{B}}=\boldsymbol{R}_N(\boldsymbol{A}\otimes\boldsymbol{B})\boldsymbol{R}'_M\] to obtain the right-hand sides of Eq.~\eqref{eq:xy11}.}.
\end{equation}

\noindent Accordingly,
\begin{equation}
\widehat{{\boldsymbol{\delta}}}=\begin{bmatrix}\boldsymbol{I}_{T-1}\,\,\,\underset{(T-1,N)}{\boldsymbol{0}}
\end{bmatrix}\underset{(N+T-1,1)}{\widehat{\boldsymbol{\beta}}}=
\boldsymbol{\Lambda}_{12}(\boldsymbol{q}_1* \boldsymbol{\widetilde{v}}_1)
\label{eq:deltt1}
\end{equation}
where $\boldsymbol{\widetilde{v}}_1=\boldsymbol{\Omega}_{11}^{-1}\boldsymbol{v}_1$ and $\boldsymbol{\Lambda}_{12}$  is the upper off diagonal block of the inverse matrix and 
\begin{equation}\label{eq:invpart}
\underset{(N+T-1,N+T-1)}{(\boldsymbol{X}'\boldsymbol{\Omega}^{-1}\boldsymbol{X})^{-1}}
=\boldsymbol{\Lambda}= \begin{bmatrix}\underset{(T-1,T-1)}{\boldsymbol{\Lambda}_{11}}&\underset{(T-1,N)}{\boldsymbol{\Lambda}_{12}}\\
\underset{(N,T-1)}{\boldsymbol{\Lambda}_{21}} & \underset{(N,N)}{\boldsymbol{\Lambda}_{22}}
\end{bmatrix}.
\end{equation}
Partitioned inversion \citep[see][]{faliva2008dynamic} leads to 
\[\begin{split}\boldsymbol{\Lambda}_{12}=&
-\left(\bm{A}-\bm{B}\bm{D}^{-1}\bm{C}\right)^{-1}\bm{B}\bm{D}^{-1}
\\=&  
\left\{\boldsymbol{I}_{T-1}\ast\boldsymbol{\widetilde{V}}_1'\boldsymbol{V}_1-(\boldsymbol{Q}'_1\ast\boldsymbol{\widetilde{V}}_1')\left[\sum_{j=1}^{T}\boldsymbol{q}_j\boldsymbol{q}_j'\ast{\boldsymbol{\Omega}^{-1}_{j,j}}\right]^{-1}(\boldsymbol{Q}_1\ast\boldsymbol{\widetilde{V}}_1)\right\}^{-1}\cdot\\&(\boldsymbol{Q}'_1\ast\boldsymbol{\widetilde{V}}_1')\left[\sum_{j=1}^{T}\boldsymbol{q}_j\boldsymbol{q}_j'\ast{\boldsymbol{\Omega}^{-1}_{j,j}}\right]^{-1}.
\end{split}\]
This, together with Eq.~\eqref{eq:deltt1}, yields Eq.~\eqref{eq:deltagg12}.
\end{proof}

%%%%%%%%%%%%%%%%
\subsection{Proof of Corollary~\ref{cor:glsstat1}}\label{app:proofs2g}
\begin{proof}
The proof follows the same lines of Theorem~\ref{th:2gls}. Let us assume 
\[\underset{(NT,NT)}{\boldsymbol{\Omega}}
=\mathds{E}(\boldsymbol{\mu}\boldsymbol{\mu}')= \begin{bmatrix}\underset{(N,N)}{\widetilde{\boldsymbol{\Omega}}}&\underset{(N,N(T-1))}{\boldsymbol{0}}\\
\underset{(N(T-1),N)}{\boldsymbol{0}} & \underset{(N(T-1),N(T-1))}{\boldsymbol{I}_{T-1}\otimes\widetilde{\boldsymbol{\Omega}}}
\end{bmatrix}.
\]
Then, some computations prove that 
%\begin{equation}
%\label{eq:xpg}
\[\begin{split}
\underset{(N+T-1,N+T-1)}{\boldsymbol{X}'\boldsymbol{\Omega}^{-1}\boldsymbol{X}}&=
\begin{bmatrix} \boldsymbol{R}_{T-1}(\boldsymbol{I}_{T-1}\otimes\boldsymbol{V}_1'\widetilde{\boldsymbol{\Omega}}^{-1}\boldsymbol{V}_1) \boldsymbol{R}_{T-1}' & -\boldsymbol{R}_{T-1}(\boldsymbol{Q}_1'\otimes (\boldsymbol{V}_1'\widetilde{\boldsymbol{\Omega}}^{-1}))\boldsymbol{R}'_{N}
\\ -\boldsymbol{R}_{N}(\boldsymbol{Q}_1\otimes (\widetilde{\boldsymbol{\Omega}}^{-1}\boldsymbol{V}_1))\boldsymbol{R}_{T-1}' & \boldsymbol{R}_{N} ((\boldsymbol{q}_1\boldsymbol{q}_1'+\boldsymbol{Q}_1\boldsymbol{Q}_1')\otimes \widetilde{\boldsymbol{\Omega}}^{-1})\boldsymbol{R}_N'
\end{bmatrix}
\\
&=\begin{bmatrix}\boldsymbol{I}_{T-1}*\boldsymbol{V}_1'\widetilde{\boldsymbol{\Omega}}^{-1}\boldsymbol{V}_1 & -\boldsymbol{Q}_1'*\boldsymbol{V}_1'\widetilde{\boldsymbol{\Omega}}^{-1}\\
-\boldsymbol{Q}_1*\widetilde{\boldsymbol{\Omega}}^{-1}\boldsymbol{V}_1 &  (\boldsymbol{q}_1\boldsymbol{q}_1'+\boldsymbol{Q}_1\boldsymbol{Q}_1')* \widetilde{\boldsymbol{\Omega}}^{-1}
\end{bmatrix}
\end{split}
\]
%\end{equation}
and
%\begin{equation}
%\label{eq:xy1}
\[
\underset{(N+T-1,1)}{\boldsymbol{X}'\boldsymbol{\Omega}^{-1}\boldsymbol{y}}=
\begin{bmatrix}
\underset{(T-1,1)}{\boldsymbol{0}}\\\underset{(N,1)}{\boldsymbol{R}_N(\boldsymbol{q}_1\otimes \widetilde{\boldsymbol{\Omega}}^{-1})\boldsymbol{R}_1\boldsymbol{v}_1}\end{bmatrix}
= 
\begin{bmatrix}
\underset{(T-1,1)}{\boldsymbol{0}}\\\underset{(N,1)}{\boldsymbol{q}_1 * \widetilde{\boldsymbol{\Omega}}^{-1}\boldsymbol{v}_1}
\end{bmatrix}.
\]
%\end{equation}
Accordingly,
\begin{equation}
\widehat{{\boldsymbol{\delta}}}_{GLS}=\begin{bmatrix}\boldsymbol{I}_{T-1}\,\,\,\underset{(T-1,N)}{\boldsymbol{0}}
\end{bmatrix}\underset{(N+T-1,1)}{\widehat{\boldsymbol{\beta}}_{GLS}}=
\boldsymbol{\Lambda}_{12}(\boldsymbol{q}_1* \boldsymbol{\widetilde{\Omega}}^{-1}\boldsymbol{v}_1)
\label{eq:deltt}
\end{equation}
where $\boldsymbol{\Lambda}_{12}$  is 
\[\begin{split}\boldsymbol{\Lambda}_{12}=&\left\{
(\boldsymbol{I}_{T-1}*\boldsymbol{V}_1'\widetilde{\boldsymbol{\Omega}}^{-1}\boldsymbol{V}_1)-(\boldsymbol{Q}_1'*\boldsymbol{V}_1'\widetilde{\boldsymbol{\Omega}}^{-1})\left[
(\boldsymbol{q}_1\boldsymbol{q}_1'+\boldsymbol{Q}_1\boldsymbol{Q}_1')*\widetilde{\boldsymbol{\Omega}}^{-1}
\right]^{-1}(\boldsymbol{Q}_1*\widetilde{\boldsymbol{\Omega}}^{-1}\boldsymbol{V}_1)
\right\}^{-1}\cdot
\\
&(\boldsymbol{Q}_1'*\boldsymbol{V}_1'\widetilde{\boldsymbol{\Omega}}^{-1})\left[
(\boldsymbol{q}_1\boldsymbol{q}_1'+\boldsymbol{Q}_1\boldsymbol{Q}_1')*\widetilde{\boldsymbol{\Omega}}^{-1}
\right]^{-1}
\end{split}\]
and this yields Eq.~\eqref{eq:cc1}.
\end{proof}

%%%%%%%%%%%%
%\newpage
\subsection{Proof of Corollary~\ref{cor:1g}}\label{app:proofs3g}
%%%%%%%%%%%%
\begin{proof} When $T=2$, $\boldsymbol{Q}_1=\underset{(N,1)}{\boldsymbol{q}_2}$ and $\boldsymbol{V}_1=\underset{(N,1)}{\boldsymbol{v}_2}$. Then, by assuming stationary disturbances, i.e $\boldsymbol{\Omega}_{j,j}=\boldsymbol{\widetilde{\Omega}}$ for $j=1,2$, the following holds
\begin{equation}
\label{eq:xp}
\begin{split}
\underset{(N+1,N+1)}{\boldsymbol{X}' \boldsymbol{\Omega}^{-1}\boldsymbol{X}}
&=\begin{bmatrix}\boldsymbol{v}_2'\widetilde{\boldsymbol{\Omega}}^{-1} \boldsymbol{v}_2 & -\boldsymbol{q}_2'*\boldsymbol{v}_2'\widetilde{\boldsymbol{\Omega}}^{-1}\\
-\boldsymbol{q}_2*\widetilde{\boldsymbol{\Omega}}^{-1}\boldsymbol{v}_2 &  (\boldsymbol{q}_1\boldsymbol{q}_1'+\boldsymbol{q}_2\boldsymbol{q}_2')* \widetilde{\boldsymbol{\Omega}}^{-1}
\end{bmatrix}
\end{split}
\end{equation}
and
%\begin{equation}
%\label{eq:xy1}
\[
\underset{(N+1,1)}{\boldsymbol{X}'\boldsymbol{\Omega}^{-1}\boldsymbol{y}}=
\begin{bmatrix}
\underset{(1,1)}{\boldsymbol{0}}\\\underset{(N,1)}{\boldsymbol{q}_1 *\widetilde{\boldsymbol{\Omega}}^{-1}\boldsymbol{v}_1}
\end{bmatrix}.
%\end{equation} 
\]
Accordingly, the GLS estimate of the deflator $\delta_{GLS}$ turns out to be
%\begin{equation}
\[
\begin{split}
\widehat{{\delta}}_{GLS-S}=&\left\{(
\boldsymbol{v}_2'\widetilde{\boldsymbol{\Omega}}^{-1}\boldsymbol{v}_2)-(\boldsymbol{q}_2'*\boldsymbol{v}_2'\boldsymbol{\widetilde{\Omega}}^{-1})\left[
(\boldsymbol{q}_1\boldsymbol{q}_1'+\boldsymbol{q}_2\boldsymbol{q}_2')*\boldsymbol{\widetilde{\Omega}}^{-1}
\right]^{-1}(\boldsymbol{q}_2*\boldsymbol{\widetilde{\Omega}}^{-1}\boldsymbol{v}_2)
\right\}^{-1}\cdot\\
&\cdot(\boldsymbol{q}_2'*\boldsymbol{v}_2'\boldsymbol{\widetilde{\Omega}}^{-1})\left[
(\boldsymbol{q}_1\boldsymbol{q}_1'+\boldsymbol{q}_2\boldsymbol{q}_2')*\boldsymbol{\widetilde{\Omega}}^{-1}
\right]^{-1}(\boldsymbol{q}_1* \boldsymbol{\widetilde{\Omega}}^{-1}\boldsymbol{v}_1).
\end{split}
\]
%\end{equation}
\noindent Then, by denoting with $q_{it}$ and $v_{it}$ the quantity and the value of the $i$-th good at time $t$ and by assuming diagonal the matrix $\boldsymbol{\widetilde{\Omega}}$, with diagonal entries $\vartheta_i$, some computations yield
\begin{equation}\label{eq:tf}
\widehat{{\delta}}_{GLS}=\left\{\sum_{i=1}^N\frac{q^2_{i1}v^2_{i2}\frac{1}{\vartheta_i}}{q^2_{i1}+q^2_{i2}}
\right\}^{-1}
\left\{\sum_{i=1}^N\frac{q_{i2}v_{i2}q_{i1}v_{i1}\frac{1}{\vartheta_i}}{q^2_{i1}+q^2_{i2}}\right\}.
\end{equation}
The reciprocal of Eq.~\eqref{eq:tf} yields the GLS estimate of the index.
\end{proof}

%%%%%%%%%%%%%%%%%%%%%%%%%%%%%%%%%%
\subsection{Proof of Corollary~\ref{cor:51g}}\label{app:proofs1}
%%%%%%%%%%%%%%%%%%%%%%%%%%%%%%%%%%
\begin{proof}
The optimization problem is
%\begin{equation}\label{eq:min}
\[
 \min_{\lambda}\,\,\, \boldsymbol{e}'\boldsymbol{A}\boldsymbol{e}.
 \]
 %\end{equation}
 \noindent Thus, the first order conditions for a minimum are
 \[\begin{split}\frac{\partial  \boldsymbol{e}'\boldsymbol{A}\boldsymbol{e}}{\partial \lambda}=0\end{split}\]
which lead to the solution
 \begin{equation}\label{eq:000}
\lambda=\frac{\boldsymbol{p}_2'\boldsymbol{A}\boldsymbol{p}_1}{\boldsymbol{p}_1'\boldsymbol{A}\boldsymbol{p}_1}.
\end{equation}
Setting $\boldsymbol{A}=(\boldsymbol{\overline{\pi}}\,\boldsymbol{\overline{\pi}}')$, with $\boldsymbol{\overline{\pi}}$ as defined in Corollary~\ref{cor:1g}, in Eq.~\eqref{eq:000} yields the MPL index.
\end{proof} 

%In particular, when $\theta_i=1$, $\boldsymbol{\overline{\pi}}=\boldsymbol{\pi}$, whose $i$-th entry is equal to $\frac{q_{i1}^{2}q_{i2}^{2}}{q_{i1}^{2}+q_{i2}^{2}}$. Setting $\boldsymbol{A}=(\boldsymbol{\pi}\boldsymbol{\pi}')$ in Eq.~\eqref{eq:000} yields the ordinary-least squares (OLS) version of the MPL index.\\ 
%Furthermore,
%\begin{itemize}
%\item if $\theta_i$ is specified as in Eq.~\eqref{eq:gf1}, $\boldsymbol{\overline{\pi}}=\boldsymbol{q}_1$ and, accordingly, $\boldsymbol{A}=\boldsymbol{q}_{1}\boldsymbol{q}_{1}'$. Replacing the latter into Eq.~\eqref{eq:000} yields the Laspeyeres index given in Eq.~\eqref{eq:gt1};
%
%\item if $\theta_i$ is specified as in Eq.~\eqref{eq:gf2}, $\boldsymbol{\overline{\pi}}=\boldsymbol{q}_2$ and accordingly $\boldsymbol{A}=\boldsymbol{q}_{2}\boldsymbol{q}_{2}'$. Replacing the latter into Eq.~\eqref{eq:000} yields the Paasche index given in Eq.~\eqref{eq:gt2};
%
%\item if $\theta_i$ is specified as in Eq.~\eqref{eq:gf3}, $\boldsymbol{\overline{\pi}}=(\boldsymbol{q}_1+\boldsymbol{q}_2)$ and accordingly $\boldsymbol{A}=(\boldsymbol{q}_1+\boldsymbol{q}_2)(\boldsymbol{q}_1+\boldsymbol{q}_2)'$. Replacing the latter into Eq.~\eqref{eq:000} yields the Marshall-Edgeworth index given in Eq.~\eqref{eq:gt3};
%
%\item if $\theta_i$ is specified as in Eq.~\eqref{eq:gf4}, $\boldsymbol{\overline{\pi}}=\tilde{\boldsymbol{q}}$ and accordingly $\boldsymbol{A}=\tilde{\boldsymbol{q}}\tilde{\boldsymbol{q}}'$. Replacing the latter into Eq.~\eqref{eq:000} yields the Walsh index given in Eq.~\eqref{eq:gt4}.  
%\end{itemize}
%\end{proof}

%%%%%%%%%%%%%%%%
%\newpage
\subsection{Proof of Corollary~\ref{cor:3}}\label{app:proofs6} 
\begin{proof}
%The variance-covariance matrix of the estimator $\boldsymbol {\widehat{\beta}}_{GLS}$, given in Eq.~\eqref{eq:beta}, is 
%\begin{equation}
%\label{eq:var}
%Var(\boldsymbol{\widehat{\beta}}_{GLS})= \sigma^2[\boldsymbol{X}'\boldsymbol{\Omega}^{-1}\boldsymbol{X}]^{-1}
%\end{equation}
%where $(\sigma^2\boldsymbol{\Omega})$ is the variance-covariance matrix of the stochastic vector $\boldsymbol{\mu}$ given in Eq.~\eqref{eq:reg0}.
%According to Eq.~\eqref{eq:var},
The variance-covariance matrix of the vector $\boldsymbol{\hat {\delta}}_{GLS}$, as defined in Eq.~\eqref{eq:deltt}, turns out to be 
%\begin{equation}
%\label{eq:var1}
\[V(\boldsymbol {\widehat{\delta}}_{GLS})=\sigma^2 [\boldsymbol{I}_{T-1}, \boldsymbol{0}_{T-1,N}][\boldsymbol{X}'\boldsymbol{\Omega}^{-1}\boldsymbol{X}]^{-1}\begin{bmatrix}
\boldsymbol {I}_{T-1}\\
\boldsymbol{0}_{N,T-1}
\end{bmatrix}
=\boldsymbol{\Lambda}_{11}
\]
%\end{equation}
where $\boldsymbol{\Lambda}_{11}$ is specified as in Eq.~\eqref{eq:invpart}. Thus, partitioned inversion rules lead to  
%\begin{equation}
\[
%\label{eq:var2}
\boldsymbol{\Lambda}_{11}=
-\left(\bm{A}-\bm{B}\bm{D}^{-1}\bm{C}\right)^{-1}=
\left\{\boldsymbol{I}_{T-1}\ast\boldsymbol{\widetilde{V}}_1'\boldsymbol{V}_1-(\boldsymbol{Q}'_1\ast\boldsymbol{\widetilde{V}}_1')\left[\sum_{j=1}^{T}\boldsymbol{q}_j\boldsymbol{q}_j'\ast{\boldsymbol{\Omega}^{-1}_{j,j}}\right]^{-1}(\boldsymbol{Q}_1\ast\boldsymbol{\widetilde{V}}_1)\right\}^{-1}.
\]
%\end{equation}
\end{proof}

%%%%%%%%%%%%%%%%
%\newpage
\subsection{Proof of Theorem~\ref{th:4}}\label{app:proofs7} 
%%%%%%%%%%%%%%%%
\begin{proof}
When the values, $\boldsymbol{v}_{T+1}$, and the quantities, $\boldsymbol{q}_{T+1}$, of  $N$ commodities in a reference basket become available for the $(T+1)$-th additional country, the reference equation system for updating the MPL index becomes
%\begin{equation}
\[
\begin{cases}
&\boldsymbol{V}_{}\, \, \boldsymbol{D}_{\boldsymbol{\delta}}=  \boldsymbol{D}_{\tilde{\boldsymbol{p}}}\, \, \boldsymbol{Q}_{} + \boldsymbol{E}_{}
\\& \boldsymbol{v}_{T+1}{\delta}_{T+1}=\boldsymbol{D}_{\tilde{\boldsymbol{p}}}\boldsymbol{q}_{T+1}+\boldsymbol{\varepsilon}_{T+1}
\end{cases}
\rightarrow
\begin{cases}
& \boldsymbol{v}_1=\boldsymbol{D}_{\tilde{\boldsymbol{p}}}\boldsymbol{q}_1+\boldsymbol{\varepsilon}_1
\\& \boldsymbol{V}_1 \tilde{\boldsymbol{D}}_{\boldsymbol{\delta}}=\boldsymbol{D}_{\tilde{\boldsymbol{p}}}\boldsymbol{Q}_1+\boldsymbol{E}_1
\\& \boldsymbol{v}_{T+1}{\delta}_{T+1}=\boldsymbol{D}_{\tilde{\boldsymbol{p}}}\boldsymbol{q}_{T+1}+\boldsymbol{\varepsilon}_{T+1}
\end{cases}.
\]
%\end{equation}
After some computations, the above system can be also written as
%\begin{equation}
\[
\begin{cases}
&  \boldsymbol{v}_1=(\boldsymbol{q}_1' \otimes \boldsymbol{I}_N) \boldsymbol{R}_N' \tilde{\boldsymbol{p}}+\boldsymbol{\varepsilon}_1
\\
& \underset{(N(T-1),1)}{\boldsymbol{0}}=(\boldsymbol{I}_{T-1}\otimes (\boldsymbol{-V}_1))\boldsymbol{R}_{T-1}'\boldsymbol{\delta}+(\boldsymbol{Q}_1'\otimes \boldsymbol{I}_N)\boldsymbol{R}_N'\tilde{\boldsymbol{p}}+\boldsymbol{\eta}\\
& \boldsymbol{0}=-\boldsymbol{v}_{T+1}\delta_{T+1}+ (\boldsymbol{q}_{T+1}' \otimes \boldsymbol{I}_N) \boldsymbol{R}_N' \tilde{\boldsymbol{p}}+\boldsymbol{\varepsilon}_{T+1}
\end{cases}
\]
%\end{equation}
or, in compact form, as
%\begin{equation}\label{eq:regu}
\[
\underset{(NT+N, 1)}{\boldsymbol{y}_u}=\underset{(NT+N, N+T)}{\boldsymbol{X}_u}\underset{(N+T, 1)}{\boldsymbol{\beta}_u} + \underset{(NT+N, 1)}{\boldsymbol{\mu}_u}
%\end{equation} 
\] where 
%\begin{equation}\label{eq:reg1aa}
\[
\underset{(NT+N,1)}{\boldsymbol{y}_u}=\begin{bmatrix} \underset{(N,1)}{\boldsymbol{v}_1}\\\underset{(N(T-1),1)}{\boldsymbol{0}}\\\underset{(N,1)}{\boldsymbol{0}}
\end{bmatrix},
\underset{(N+T,1)}{\boldsymbol{\beta}_u}=\begin{bmatrix} \underset{(T-1,1)}{{\boldsymbol{\delta}}}\\\underset{(1,1)}{{\delta_{T+1}}}\\\underset{(N,1)}{\tilde{\boldsymbol{p}}}
\end{bmatrix}, \,\,\underset{(NT+N,1)}{\boldsymbol{\mu}_u}=\begin{bmatrix} \underset{(N,1)}{\boldsymbol{\varepsilon}_1} \\ \underset{(N(T-1),1)}{\boldsymbol{\eta}}\\ \underset{(N,1)}{\boldsymbol{\varepsilon}_{T+1}}
\end{bmatrix}
\]%\end{equation}
%\begin{equation}\label{eq:reg1bb}
\[
\underset{(NT+N,N+T)}{\boldsymbol{X}_u}
=\begin{bmatrix} 
\underset{(N,T-1)}{\boldsymbol{0}} &  \underset{(N,1)}{\boldsymbol{0}}&\underset{(N,N)}{(\boldsymbol{q}_1'\otimes \boldsymbol{I}_N)\boldsymbol{R}_N'}\\ 
\underset{(N(T-1),T-1)}{(\boldsymbol{I}_{T}\otimes - \boldsymbol{V}_1)\boldsymbol{R}_{T-1}'} & \underset{(N(T-1),1)}{\boldsymbol{0}}&\underset{(N(T-1),N)}{(\boldsymbol{Q}_1'\otimes \boldsymbol{I}_N)\boldsymbol{R}_N'}\\ 
\underset{(N,T-1)}{\boldsymbol{0}}&\underset{(N,1)}{-\boldsymbol{v}_{T+1}}  & \underset{(N,N)}{(\boldsymbol{q}_{T+1}'\otimes \boldsymbol{I}_N)\boldsymbol{R}_N'}
\end{bmatrix}\]
%\end{equation}
%and
%\[
%\underset{(N+T,1)}{\boldsymbol{\beta}_u}=\begin{bmatrix} \underset{(T-1,1)}{{\boldsymbol{\delta}}}\\\underset{(1,1)}{{\delta_{T+1}}}\\\underset{(N,1)}{\tilde{\boldsymbol{p}}}
%\end{bmatrix}, \,\,\underset{(NT+N,1)}{\boldsymbol{\mu}_u}=\begin{bmatrix} \underset{(N,1)}{\boldsymbol{\varepsilon}_1} \\ \underset{(N(T-1),1)}{\boldsymbol{\eta}}\\ \underset{(N,1)}{\boldsymbol{\varepsilon}_{T+1}}
%\end{bmatrix}
%\]
with
%\begin{equation}
\[
E(\boldsymbol{\mu}_u \boldsymbol{\mu}_u')=\underset{(N(T+1),N(T+1))}{\boldsymbol{\Omega}_u}
=\begin{bmatrix} \underset{(N,N)}{{\boldsymbol{\Omega_{1,1}}}} & \underset{(N,N(T-1))}{\boldsymbol{0}} &\underset{(N,N)}{\boldsymbol{0}} \\ 
\underset{(N(T-1),N(T-1))}{\boldsymbol{0}} &\underset{(N(T-1),N(T-1))}{\boldsymbol{\Omega}^{*}} &\underset{(N(T-1),N)}{\boldsymbol{0}} \\
\underset{(N,N)}{\boldsymbol{0}} &\underset{(N,N(T-1))}{\boldsymbol{0}} &\underset{(N,N)}{\boldsymbol{\Omega_{T+1,T+1}}}
\end{bmatrix}.
\]
%\end{equation}
Here $\boldsymbol{\Omega}_{1,1}$ and $\boldsymbol{\Omega}^{*}$ are as defined in Eq.~\eqref{eq:11} and $\boldsymbol{\Omega}_{T+1,T+1}=E(\boldsymbol{\epsilon}_{T+1}\boldsymbol{\epsilon}_{T+1}')$.

Following the same argument of Theorem~\ref{th:2gls}, we obtain that
%\begin{equation}
\[
\underset{(N+T,N+T)}{\boldsymbol{X}_u' \boldsymbol{\Omega^{-1}}_u\boldsymbol{X}_u}
=\begin{bmatrix} 
\boldsymbol{I}_{T-1}*\boldsymbol{\widetilde{V}}_1\boldsymbol{V}_1' & \boldsymbol{0} &-\boldsymbol{Q}_1'*\boldsymbol{\widetilde{V}}_1'
\\ 
\boldsymbol{0}&\boldsymbol{\widetilde{v}}_{T+1}'\boldsymbol{v}_{T+1}&-\boldsymbol{\widetilde{v}}_{T+1}'*\boldsymbol{q}_{T+1}'\\
-\boldsymbol{Q}_1*\boldsymbol{\widetilde{V}}_1&-\boldsymbol{\widetilde{v}}_{T+1}*\boldsymbol{q}_{T+1}&(\sum_{j=1}^{T+1}{\boldsymbol{q}}_j{\boldsymbol{q}}_j'*\boldsymbol{\Omega^{-1}}_{j,j})
\end{bmatrix}
%\end{equation}
\]
and
%\begin{equation}
\[
\underset{(N+T,1)}{\boldsymbol{X}_u'\boldsymbol{\Omega^{-1}_{u}}\boldsymbol{y}_u}
=\begin{bmatrix} 
\underset{(T,1)}{\boldsymbol{0}}\\\underset{(N,1)}{\boldsymbol{q}_1*\boldsymbol{\widetilde{v}}_1}
\end{bmatrix}
\]
%\end{equation}
where $\boldsymbol{\widetilde{V}}_1$ and $\boldsymbol{\widetilde{v}}_1$ are defined as in Theorem~\ref{th:2gls} and $\boldsymbol{\widetilde{v}}_{T+1}=\boldsymbol{\Omega^{-1}}_{T+1,T+1} \boldsymbol{v}_{T+1}$.

Then, upon nothing that 
%\begin{equation}
\[
\begin{bmatrix}
\underset{(T-1,1)}{\widehat{\boldsymbol{\delta}}}
\\\underset{(1,1)}{\widehat{\delta}_{T+1}}
\end{bmatrix}=
\begin{bmatrix}
\boldsymbol{I}_T&\underset{(T,N)}{\boldsymbol{0}}
\end{bmatrix} \underset{(N+T,1)}{\boldsymbol{\widehat{\beta}_u}} = \boldsymbol{\Lambda_{12}}\left(\boldsymbol{q}_1\ast\boldsymbol{\widetilde{v}}_1\right)
%\end{equation}
\]
where $\boldsymbol{\Lambda}_{12}$ is the $T \times N$ upper off diagonal block of the inverse matrix $(\boldsymbol{X}_u'\boldsymbol{\Omega^{-1}}_u\boldsymbol{X}_u)^{-1}$,
partitioned inversion leads to 
%\begin{footnotesize}
%\begin{equation}
\[
\begin{split}
\underset{(T,N)}{\boldsymbol{\Lambda}_{12}}=&\\
=&\left\{
\begin{bmatrix} 
\boldsymbol{I}_{T-1}\ast\boldsymbol{\widetilde{V}}'_1\boldsymbol{V}_1 & \boldsymbol{0}  \\ 
\boldsymbol{0} & \boldsymbol{\widetilde{v}}'_{T+1}\boldsymbol{v}_{T+1}  \end{bmatrix}-
\begin{bmatrix} 
\boldsymbol{Q}'_1\ast\boldsymbol{\widetilde{V}}'_1  \\ 
\boldsymbol{\widetilde{v}}'_{T+1}\ast\boldsymbol{q}'_{T+1}  \end{bmatrix}
\begin{bmatrix} 
\left(\sum_{j=1}^{T+1}\boldsymbol{q}_j\boldsymbol{q}'_j \ast \boldsymbol{\Omega^{-1}}_{j,j}\right)\end{bmatrix}^{-1} \right.
\\&
\left.\begin{bmatrix} 
\boldsymbol{Q}_1\ast\boldsymbol{\widetilde{V}}_1  & 
\boldsymbol{\widetilde{v}}_{T+1}\ast\boldsymbol{q}_{T+1}  \end{bmatrix}
\right\}^{-1}
\begin{bmatrix} 
\boldsymbol{Q}'_1\ast\boldsymbol{\widetilde{V}}'_1  \\ 
\boldsymbol{\widetilde{v}}'_{T+1}\ast\boldsymbol{q}'_{T+1}  \end{bmatrix}
\begin{bmatrix} 
\left(\sum_{j=1}^{T+1}\boldsymbol{q}_j\boldsymbol{q}'_j \ast \boldsymbol{\Omega^{-1}}_{j,j}\right)\end{bmatrix}^{-1}.
\end{split}
\]
%\end{equation}
%\end{footnotesize}
Then, pre-multiplying  $(\boldsymbol{q}_1\ast\boldsymbol{\widetilde{v}}_1)$ by $\boldsymbol{\Lambda}_{12}$ yields the estimator $\begin{bmatrix}
\underset{(T-1,1)}{\widehat{\boldsymbol{\delta}}}
\\\underset{(1,1)}{\widehat{\delta}_{T+1}}
\end{bmatrix}$. The reciprocal of the (non-null) elements of this estimator provides the values of the updated multilateral version of the MPL index. 
\end{proof}

%%%%%%%%%%%%%%%%
%\newpage
\subsection{Proof of Theorem~\ref{th:5}}\label{app:proofs8} 
%%%%%%%%%%%%%%%%
\begin{proof}
When the values, $\boldsymbol{v}_{T+1}$, and the quantities, $\boldsymbol{q}_{T+1}$, of $N$ commodities of a reference basket become available at time $T+1$, the updating of the multi-period version of the MPL index must not change its past values with meaningful computational advantages. In order to get the required updating formula, let us rewrite Eq.~\eqref{eq:1} as follows
\begin{equation}\label{eq:1proof8}
\left[\underset{(N, T)}{\boldsymbol{V}_{}}\, \, \underset{(N, 1)}{\boldsymbol{v}_{T+1}}\right]
\underset{(T,T)}{\boldsymbol{D}^*_{{{\boldsymbol{\delta}}}}} 
=  \underset{(N, N)}{\boldsymbol{D}_{\tilde{\boldsymbol{p}}}}\, \, \left[\underset{(N, T)}{\boldsymbol{Q}}\,\,\underset{(N,1)}{\boldsymbol{q}_{T+1}}\right] + \left[\underset{(N, T)}{\boldsymbol{E}_{}}\,\,\underset{(N,1)}{\varepsilon_{T+1}}
\right]
\end{equation}
where ${\boldsymbol{D}^*_{{{\boldsymbol{\delta}}}}}$ is specified as follows
%\begin{equation}
\[
\underset{(T+1,T+1)}{\boldsymbol{D}^{*}_{{{\boldsymbol{\delta}}}}}=
\begin{bmatrix}
\underset{(T,T)}{\widehat{\boldsymbol{D}}_{{{\boldsymbol{\delta}}}}}& \underset{(T,1)}{\boldsymbol{0}}
\\
\underset{(1,T)}{\boldsymbol{0}'}&\underset{(1,1)}{{\delta}_{T+1}}
\end{bmatrix}.
\]
%\end{equation}
Here ${\widehat{\boldsymbol{D}}_{{{\boldsymbol{\delta}}}}}$ denotes the estimate of ${\boldsymbol{D}}_{{{\boldsymbol{\delta}}}}$, defined as in Eq.~\eqref{eq:1b}, that is 
%\begin{equation}
\[
\underset{(T,T)}{\widehat{\boldsymbol{D}}_{{{\boldsymbol{\delta}}}}}=
\begin{bmatrix}
\underset{(1,1)}{1}&\underset{(1,T-1)}{\boldsymbol{0}'}\\
\underset{(T-1,1)}{\boldsymbol{0}}&
\begin{bmatrix}
\widehat{{\delta}}_2&0&\cdots&0\\
0&\widehat{{\delta}}_3&\cdots&0\\
0&0&\cdots&0\\
0&0&\cdots&\widehat{{\delta}}_T\\
\end{bmatrix}
\end{bmatrix}
=
\begin{bmatrix}
1&\boldsymbol{0}'\\
0&\widehat{\tilde{\boldsymbol{D}_{\boldsymbol{\delta}}}}
\end{bmatrix}
\]
%\end{equation}
where the entries $\widehat{{\delta}}_2$, $\widehat{{\delta}}_3, \dots, {\widehat{\delta}}_T$ are the elements of the vector $\widehat{{\boldsymbol{\delta}}}$ given in Eq.~\eqref{eq:deltagg12}. 
The system in Eq.~\eqref{eq:1proof8} can be also written as
\begin{equation}
\label{eq:2proof81}
\begin{cases}
& \boldsymbol{v}_1=\boldsymbol{D}_{\tilde{\boldsymbol{p}}}\boldsymbol{q}_1+\boldsymbol{\varepsilon}_1
\\& \boldsymbol{V}_1 \widehat{\tilde{\boldsymbol{D}}}_{\boldsymbol{\delta}}=\boldsymbol{D}_{\tilde{\boldsymbol{p}}}\boldsymbol{Q}_1+\boldsymbol{E}_1
\\& \boldsymbol{v}_{T+1}{\delta}_{T+1}=\boldsymbol{D}_{\tilde{\boldsymbol{p}}}\boldsymbol{q}_{T+1}+\boldsymbol{\varepsilon}_{T+1}
\end{cases}
\rightarrow\, 
\begin{cases} \boldsymbol{V}\,{\widehat{\boldsymbol{D}}}_{\boldsymbol{\delta}} = \boldsymbol{D}_{\tilde{\boldsymbol{p}}} \boldsymbol{Q}+\boldsymbol{E}
\\
\boldsymbol{v}_{T+1}{\delta}_{T+1}=\boldsymbol{D}_{\tilde{\boldsymbol{p}}}\boldsymbol{q}_{T+1}+\varepsilon_{T+1}
\end{cases}.
\end{equation}
The application of the $vec$ operator to the left hand-side block of equations in Eq.~\eqref{eq:2proof81} yields
\begin{equation}
\label{eq:gtt}
\begin{cases}
\boldsymbol{\overline{v}}_1=\left(\boldsymbol{Q}'\otimes \boldsymbol{I}_N\right)\boldsymbol{R}_N' \tilde{\boldsymbol{p}}+\boldsymbol{\eta}\\
\boldsymbol{0}=-\boldsymbol{v}_{T+1}{\delta}_{T+1}+\left(\boldsymbol{q}_{T+1}'\otimes \boldsymbol{I}_N\right)\boldsymbol{R}_N' \tilde{\boldsymbol{p}}+\varepsilon_{T+1}
\end{cases}
\end{equation}
where $\boldsymbol{\overline{v}}_1=vec(\boldsymbol{V}\,{\widehat{\boldsymbol{D}}}_{\boldsymbol{\delta}})=\left( \boldsymbol{I}_T \otimes \boldsymbol{V}\right)\boldsymbol{R}_T' \tilde{{\boldsymbol{\delta}}}$ with $\tilde{{\boldsymbol{\delta}}}'=[1,\widehat{{\boldsymbol{\delta}}}]'$ \footnote{Note that, differently from the proof of Theorem~\ref{th:4}, the vector $\boldsymbol{\delta}$ does not enter in the updating estimation process as it is considered given.} and $\boldsymbol{\eta}$ is equal to $vec(\boldsymbol{E})$. 

It is worth noting that  Eq.~\eqref{eq:gtt} can also be written in vector form as
%\begin{equation}\label{eq:reg}
\[
\underset{(N(T+1), 1)}{\boldsymbol{y}_u}=\underset{(N(T+1), N+1)}{\boldsymbol{X}_{u}}\underset{(N+1, 1)}{\boldsymbol{\beta}_u} + \underset{(N(T+1), 1)}{\boldsymbol{\mu}_u}
\]
%\end{equation} 
where 
%\begin{equation}\label{eq:reg1}
\[
\underset{(N(T+1),1)}{\boldsymbol{y}_u}=\begin{bmatrix} \boldsymbol{\overline{v}}_1\\\boldsymbol{0}
\end{bmatrix}, \,\,\underset{(N(T+1),N+1)}{\boldsymbol{X}_{u}}=\begin{bmatrix} \boldsymbol{0} &  (\boldsymbol{Q}'\otimes \boldsymbol{I}_N)\boldsymbol{R}_N'\\ -\boldsymbol{v}_{T+1} & (\boldsymbol{q}_{T+1}'\otimes \boldsymbol{I}_N)\boldsymbol{R}_N'
\end{bmatrix}
\]
%\end{equation}
and
%\begin{equation}
\[
\underset{(N+1,1)}{\boldsymbol{\beta}_u}=\begin{bmatrix} {\delta}_{T+1}\\\tilde{\boldsymbol{p}}
\end{bmatrix}, \,\,\underset{(N(T+1),1)}{\boldsymbol{\mu}_u}=\begin{bmatrix} \boldsymbol{\eta} \\ \boldsymbol{\varepsilon}_{T+1}
\end{bmatrix}
\]
%\end{equation}
with
%\begin{equation}
\[
E(\boldsymbol{\mu}_u\boldsymbol{\mu}_u')=\boldsymbol{\Omega}_u=
\begin{bmatrix}\boldsymbol{\Omega}, \  \ \ \ \boldsymbol{0} \\
\boldsymbol{0},\ \boldsymbol{\Omega_{T+1,T+1}}
\end{bmatrix}
\]
%\end{equation}
where $\boldsymbol{\Omega}$ is as defined in Eq.~\eqref{eq:11} and $\boldsymbol{\Omega}_{T+1,T+1}=E(\boldsymbol{\epsilon}_{T+1}\boldsymbol{\epsilon}_{T+1}')$.

The GLS estimator of the vector $\boldsymbol{\beta}_u$ is given by
%\begin{equation}
\[
\widehat{\boldsymbol{\beta}_u}=(\boldsymbol{X}_u'\boldsymbol{\Omega}^{-1}_{u}\boldsymbol{X}_u)^{-1}\boldsymbol{X}_u'\boldsymbol{\Omega}^{-1}_{u}\boldsymbol{y}_u
\]
%\end{equation}
where
%\begin{equation}
%\label{eq:xx}
\[
\begin{split}
\underset{(N+1,N+1)}{\boldsymbol{X}_u'\boldsymbol{\Omega}^{-1}_{u} \boldsymbol{X}_u}&=
\begin{bmatrix} 
\boldsymbol{\widetilde{v}}_{T+1}'\boldsymbol{v}_{T+1}& -\boldsymbol{\widetilde{v}}_{T+1}'\left(\boldsymbol{q}_{T+1}'\otimes \boldsymbol{I}_N\right)\boldsymbol{R}'_N
\\ -\boldsymbol{R}_{N}(\boldsymbol{q}_{T+1}\otimes (\boldsymbol{I}_N))\boldsymbol{\widetilde{v}}_{T+1} & \boldsymbol{R}_{N} ((\boldsymbol{Q}\otimes \boldsymbol{I}_{N})\boldsymbol{\Omega} (\boldsymbol{Q}'\otimes \boldsymbol{I}_{N}))\boldsymbol{R}_N'+\boldsymbol{R}_{N}(\boldsymbol{q}_{T+1}\boldsymbol{q}_{T+1}'\otimes \boldsymbol{I}_N)\boldsymbol{R}_N'
\end{bmatrix}
\\
&=\begin{bmatrix}\boldsymbol{\widetilde{v}}_{T+1}'\boldsymbol{v}_{T+1} & -\boldsymbol{\widetilde{v}}_{T+1}'\ast \boldsymbol{q}_{T+1}'\\
-\boldsymbol{q}_{T+1}\ast\boldsymbol{\widetilde{v}}_{T+1} &  \sum_{j=1}^{T+1}\boldsymbol{q}_{j}\boldsymbol{q}_{j}'\ast \boldsymbol{\Omega}_{j,j}
\end{bmatrix}
\end{split}
%\end{equation}
\]
and
%\begin{equation}
%\label{eq:xy}
\[\underset{(N+1,1)}{\boldsymbol{X}_u'\boldsymbol{\Omega}^{-1}_{u}\boldsymbol{y}_u}=
\begin{bmatrix}
\underset{(1,1)}{0}\\\underset{(N,1)}{\boldsymbol{R}_N(\boldsymbol{Q}\otimes \boldsymbol{I}_N)\boldsymbol{\overline{v}}_1}
\end{bmatrix}
= 
\begin{bmatrix}
\underset{(1,1)}{0}\\
\underset{(N,1)}{\left(\boldsymbol{Q}\ast \boldsymbol{\widetilde{V}}\right)\tilde{{\boldsymbol{\delta}}}}
\end{bmatrix}
\]
%\end{equation}
where $\boldsymbol{\widetilde{V}}$ is a $N \times T$ matrix whose $j$-th column is $\boldsymbol{\Omega}_{j,j}\boldsymbol{v}_{j}$.

Now, upon nothing that 
%\begin{equation}
\[
\underset{(1,1)}{\widehat{\delta}_{T+1}}=
\begin{bmatrix}
1&\underset{(1,N)}{\boldsymbol{0}'}
\end{bmatrix} \underset{(N+1,1)}{\boldsymbol{\widehat{\beta}_u}} = \boldsymbol{\Lambda_{12}}{\left(\boldsymbol{Q}\ast \boldsymbol{V}\right)\tilde{{\boldsymbol{\delta}}}},
\]
%\end{equation}
where $\boldsymbol{\Lambda}_{12}$ is the $1 \times N$ upper off diagonal block of the inverse matrix $(\boldsymbol{X}_u' \boldsymbol{\Omega}^{-1}_{u}\boldsymbol{X}_u)^{-1}$, partitioned inversion leads to 
%\begin{equation}
\[
\begin{split}
\boldsymbol{\Lambda}_{12}=&\left\{
\boldsymbol{\widetilde{v}}_{T+1}'\boldsymbol{v}_{T+1}-\left(\boldsymbol{q}_{T+1}'\ast\boldsymbol{\widetilde{v}}_{T+1}'\right)\left[
\sum_{j=1}^{T+1}\boldsymbol{q}_{j}\boldsymbol{q}_{j}'\ast \boldsymbol{\Omega}_{j,j}
\right]^{-1}\left(\boldsymbol{q}_{T+1}\ast\boldsymbol{\widetilde{v}}_{T+1}\right)
\right\}^{-1}\cdot \\&\left(\boldsymbol{q}_{T+1}'\ast\boldsymbol{\widetilde{v}}_{T+1}'\right)\left[
\sum_{j=1}^{T+1}\boldsymbol{q}_{j}\boldsymbol{q}_{j}'\ast \boldsymbol{\Omega}_{j,j}
\right]^{-1}.
\end{split}
\]
%\end{equation}
\noindent Then, pre-multiplying ${\left(\boldsymbol{Q}\ast \boldsymbol{\widetilde{V}}\right)\tilde{{\boldsymbol{\delta}}}}$ by $\boldsymbol{\Lambda}_{12}$ yields the estimator of $\widehat{\delta}_{T+1}$ given in Theorem~\ref{th:5}. The reciprocal of this estimator provides the updated value of the multi-period version of the MPL index. 
\end{proof}

%%%%%%%%%%%%%%%%%%%%%%%%%%%
\subsection{Proof of the MPL index properties}\label{app:propgmpl}
%%%%%%%%%%%%%%%%%%%%%%%%%%%
\begin{proof}
In this Appendix the main properties enjoyed by the MPL index, $\widehat{{\lambda}}_{GLS}$, as defined in Eq.~\eqref{eq:ffrt0g}, are proved. To this end, let $\widehat{{\lambda}}(\boldsymbol{p}_{1},\boldsymbol{p}_{2},\boldsymbol{q}_{1},\boldsymbol{q}_{2})$ denote the MPL price index with $\boldsymbol{p}_{t}$ and $\boldsymbol{q}_{t}$ vectors of $N$ prices and quantities at time $t$.
\begin{enumerate}
\item[P.1] \textit{Strong identity}: 
Simple computation prove that
%\begin{equation}
\[
\widehat{\lambda}(\boldsymbol{p}_{2},\boldsymbol{p}_{2},\boldsymbol{q}_{1},\boldsymbol{q}_{2})=\frac{\boldsymbol{p}_{2}'\overline{\boldsymbol{\pi}}}{\boldsymbol{p}_{2}'\overline{\boldsymbol{\pi}}}=1.
\]
where the weights $\overline{\boldsymbol{\pi}}$ are as defined in \eqref{eq:ffrt0g}
%\end{equation}
\item[P.2] \textit{Commensurability}: 
Let $\boldsymbol{p}_{j}^{\ast}=\boldsymbol{p}_{j} \ast \boldsymbol{\gamma}$ and $\boldsymbol{q}_{j}^{\ast}=\boldsymbol{q}_{j} \ast \boldsymbol{\gamma}^{-1}$ and denote with $p_{ij}^{\ast}$, $q_{ij}^{\ast}$ and $\gamma_i$ the $i$-th entries of $\boldsymbol{p}_{j}^{\ast}$ and $\boldsymbol{q}_{j}^{\ast}$ and $\boldsymbol{\gamma}$ respectively.

Then, simple computations prove that the following holds for the $i-$th weight, $\overline{\pi}_{i}^{\ast}$, of the MPL index 
%\begin{equation}
\[
\overline{\pi}_{i}^{\ast}=\frac{1}{\vartheta_{i}}p_{i2}^{\ast}\frac{q_{i2}^{\ast 2}q_{i1}^{\ast 2}}{q_{i2}^{\ast 2}+q_{i1}^{\ast 2}}=\frac{1}{\gamma_{i}}\overline{\pi}_{i}.
\]
%\end{equation}
where $\overline{\pi}_{i}$ is as defined in \eqref{eq:gf}. This means that $\overline{\boldsymbol{\pi}}^{\ast}=\boldsymbol{\gamma}^{-1} \ast \overline{\boldsymbol{\pi}}$ and, accordingly
%\begin{equation}
\[
\widehat{\lambda}(\boldsymbol{\gamma}\ast\boldsymbol{p}_{1},\boldsymbol{\gamma}\ast\boldsymbol{p}_{2},\boldsymbol{\gamma}^{-1}\ast\boldsymbol{q}_{1},\boldsymbol{\gamma}^{-1} {q}_{2})=\frac{\boldsymbol{p}_{2}^{\ast}\overline{\boldsymbol{\pi}}^{\ast}}{\boldsymbol{p}_{1}^{\ast}\overline{\boldsymbol{\pi}}^{\ast}}=\frac{(\boldsymbol{p}_{2} \ast \boldsymbol{\gamma})'(\boldsymbol{\gamma}^{-1} \ast \overline{\boldsymbol {\pi}})}{(\boldsymbol{p}_{1} \ast \boldsymbol{\gamma})'(\boldsymbol{\gamma}^{-1} \ast \overline{\boldsymbol{\pi}})}=\frac{\boldsymbol{p}_{2}'\overline{\boldsymbol{\pi}}}{\boldsymbol{p}_{1}'\overline{\boldsymbol{\pi}}}.
\]
%\end{equation}
\item[P.3] \textit{Proportionality}: 
Simple computations prove that when $\boldsymbol{p}_{2}^{\ast}=\alpha \boldsymbol{p}_{2}$, the weights of the MPL index become
\begin{equation}\label{eq:jh}
\overline{\boldsymbol {\pi}}^{\ast}=\alpha\overline{\boldsymbol{\pi}}.
\end{equation}
Thus
%\begin{equation}
\[
\widehat{\lambda}(\boldsymbol{p}_{1},\alpha\boldsymbol{p}_{2},\boldsymbol{q}_{1},\boldsymbol{q}_{2})=
\frac{\boldsymbol{p}_{2}^{*'}\overline{\boldsymbol{\pi}}^{*}}{\boldsymbol{p}_{1}'\overline{\boldsymbol{\pi}}*}
=\frac{\alpha^{2}\boldsymbol{p}_{2}'\overline{\boldsymbol{\pi}}}{\alpha\boldsymbol{p}_{1}'\overline{\boldsymbol{\pi}}}=\frac{\alpha\boldsymbol{p}_{2}'\overline{\boldsymbol{\pi}}}{\boldsymbol{p}_{1}'\overline{\boldsymbol{\pi}}}.
\]
%\end{equation}
\item[P.4] \textit{Dimensionality}: 
When $\boldsymbol{p}_{2}^{\ast}=\alpha \boldsymbol{p}_{2}$ and $\boldsymbol{p}_{1}^{\ast}=\alpha \boldsymbol{p}_{1}$ the weights of the MPL index are as in Eq.~\eqref{eq:jh}. Accordingly
%\begin{equation}
\[
\widehat{\lambda}(\alpha\boldsymbol{p}_{1},\alpha\boldsymbol{p}_{2},\boldsymbol{q}_{1},\boldsymbol{q}_{2})=
\frac{\boldsymbol{p}_{2}^{\ast'}\overline{\boldsymbol{\pi}}^{\ast}}{\boldsymbol{p}_{1}^{\ast'}\overline{\boldsymbol {\pi}}^{\ast}}=
\frac{\alpha^{2}\boldsymbol{p}_{2}'\overline{\boldsymbol{\pi}}}{\alpha^{2}\boldsymbol{p}_{1}'\overline{\boldsymbol{\pi}}}=
\frac{\boldsymbol{p}_{2}'\overline{\boldsymbol{\pi}}}{\boldsymbol{p}_{1}'\overline{\boldsymbol{\pi}}}.
\]
%\end{equation}

\item[P.5] \textit{Monotonicity}: Let $\boldsymbol{k}$ be a vector with entries greater than 1, that is $\boldsymbol{k}>\boldsymbol{u}$ where $\boldsymbol{u}$ is the unit vector. Let $\boldsymbol{p}_{2}^{\ast}=\boldsymbol{p}_{2} \ast \boldsymbol{k}$ and note that the entries of $\boldsymbol{p}_{2}^{\ast}$ are greater than those 
of $\boldsymbol{p}_{2}$, namely $\boldsymbol{p}_{2}^{\ast} > \boldsymbol{p}_{2}$. Furthermore, when $\boldsymbol{p}_{2}^{\ast}$ are the prices of the second period, the weights of the MPL index become $\overline{\pi}^{\ast}_{i}=\overline{\pi}_{i}\ast k_{i}$ where $k_{i}$ is the $i$-th element of $\boldsymbol{k}$. Accordingly,  $\boldsymbol{\overline{\pi}}^{\ast}=\boldsymbol{\overline{\pi}} \ast \boldsymbol{k}$ and the MPL index becomes
%\begin{equation}
\[
\widehat{{\lambda}}(\boldsymbol{p}_{1},\boldsymbol{k}\ast\boldsymbol{p}_{2},\boldsymbol{q}_{1},\boldsymbol{q}_{2})=
\frac{\boldsymbol{p}_{2}^{\ast'}\boldsymbol{\overline{\pi}}^{\ast}}{\boldsymbol{p}_{1}'\boldsymbol{\overline{\pi}}^{\ast}}>
\frac{\boldsymbol{p}_{2}'\boldsymbol{\overline{\pi}}}{\boldsymbol{p}_{1}'\boldsymbol{\overline{\pi}}}.
%\end{equation}
\]
Similarly setting ${\boldsymbol{p}}_{1}^{\ast}={\boldsymbol{p}}_{1} \ast \boldsymbol{k}$, with $\boldsymbol{k}$ specified as before, the weights of the MPL index do not change and the index becomes
%\begin{equation}  
\[
\widehat{{\lambda}}(\boldsymbol{k}\ast {\boldsymbol{p}}_{1},{\boldsymbol{p}}_{2},{\boldsymbol{q}}_{1},{\boldsymbol{q}}_{2})=\frac{{\boldsymbol{p}}_{1}'{\boldsymbol{\overline{\pi}}}}{{\boldsymbol{p}}_{1}^{\ast'}{\boldsymbol{\overline{\pi}}}}<\frac{{\boldsymbol{p}}_{2}'{\boldsymbol{\overline{\pi}}}}{\boldsymbol{p}_{1}'\boldsymbol{\overline{\pi}}}
\]
%\end{equation}
as the entries of $\boldsymbol{p}_{1}^{\ast}$ are greater than those of $\boldsymbol{p}_{1}$, namely $\boldsymbol{p}_{1}^{\ast} > \boldsymbol{p}_{1}$.
\end{enumerate}

\noindent It is worth noticing that the MPL index enjoys also the following properties:
\begin{itemize}
\item[P.6] \textit{Positivity}: When $\boldsymbol{p}_{1}^{\ast}=\alpha \boldsymbol{p}_{1}$, the MPL index becomes
\begin{equation}\label{eq:wj}
\widehat{\lambda}(\alpha\boldsymbol{p}_{1},\boldsymbol{p}_{2},\boldsymbol{q}_{1},\boldsymbol{q}_{2})=\frac{1}{\alpha}\frac{\boldsymbol{p}_{2}'\overline{\boldsymbol{\pi}}}{ \boldsymbol{p}_{1}'\overline{\boldsymbol{\pi}}}\geq0.
\end{equation}
\item[P.7] \textit{Inverse proportionality in the base period}: For the proof see Eq.~\eqref{eq:wj}. 
\item[P.8] \textit{Commodity reversal property}: It follows straightforward that the index price is invariant with respect to any permutation $(i)$:
\[\widehat{\lambda}(\boldsymbol{p}_{1},\boldsymbol{p}_{2},\boldsymbol{q}_{1},\boldsymbol{q}_{2})= \frac{\sum_{(i)=1}^{(N)} p_{(i)2} \overline{\pi}_{(i)}}{\sum_{(i)=1}^{(N)}p_{(i)1} \overline{\pi}_{(i)}}= \frac{\sum_{i=1}^N p_{i2} \overline{\pi}_{i}}{\sum_{i=1}^Np_{i1} \overline{\pi}_{i}}.\]
\item[P.9] \textit{Quantity reversal test}:  Simple computations prove that the index price $\widehat{\lambda}$ does not change as a consequence of a change in the quantity which affects only the weights $\overline{\pi}_i$. 
\item[P.10] \textit{Base reversibility (symmetric treatment of time)}: This property is satisfied by the MPL index under a suitable choice of the weights. Setting $\vartheta_{i}=p_{i2}$, or $\vartheta_{i}=z_{i}p_{i2}$ where $z_i$ is a variable or more simply a constant term $z_{i}=z$, leads to weights, $\overline{\pi}_{i}^{\ast}$, of the MPL index that does not depend on prices
%\begin{equation}
\[
\overline{\pi}_{i}^{\ast}=\frac{1}{z_i}\frac{q_{i1}^{2}q_{i2}^{2}}{q_{i1}^{2}+q_{i2}^{2}}, \ i=1,2,\dots,n.
\]
%\end{equation}
Hence, by denoting with $\overline{\boldsymbol{\pi}^{\ast}}$ the vector whose $i$-th entry is $\overline{\pi}_{i}^{\ast}$, simple computations prove that
%\begin{equation}
\[
\widehat{\lambda}(\boldsymbol{p}_{2},\boldsymbol{p}_{1},\boldsymbol{q}_{2},\boldsymbol{q}_{1})=\frac{\boldsymbol{p}_{1}'\overline{\boldsymbol{\pi}}^{\ast}}{\boldsymbol{p}_{2}'\overline{\boldsymbol{\pi}^{\ast}}}=\left(\frac{\boldsymbol{p}_{2}'\overline{\boldsymbol{\pi}}^{\ast}}{\boldsymbol{p}_{1}'\overline{\boldsymbol{\pi}}^{\ast}}\right)^{-1}.
\]
\item[P.11] \textit{Transitivity}: For a particular choice of $\theta_{i}$ and by using the square roots of the quantities, we have proved that the MPL index tallies with the GK index which satisfies all tests for multilateral comparison proposed by \citet{balk1996comparison} except for the proportionality one. 
%\end{equation}
\item[P.12] \textit{Monotonicity}: If $\boldsymbol{p}_{2}=\beta \boldsymbol{p}_{1}$ then the following holds for the weights of the MPL index
%\begin{equation}
\[
\overline{\pi}_{i}^{\ast}=\beta\left(p_{i1}\frac{1}{\vartheta_i}\frac{q_{i1}^2q_{i2}^2}{q_{i1}^2+q_{i2}^2}\right)=\beta p_{i1}\gamma_{i}.
\]
%\end{equation}
Hence
%\begin{equation}
\[
\widehat{\lambda}(\boldsymbol{p}_{1},\beta\boldsymbol{p}_{1},\boldsymbol{q}_{1},\boldsymbol{q}_{2})=\beta \frac{\boldsymbol{p}_{1}'\overline{\boldsymbol{\pi}}^{\ast}}{\boldsymbol{p}_{1}'\overline{\boldsymbol{\pi}}^{\ast}}=\beta.
%\end{equation}
\]
\end{itemize}

\end{proof}

\clearpage
%%%%%%%%%%%%%%%
\section{Empirical application of MPL: graphics }

%%%
\subsection{MPL index and the cultural supply: a multi-period perspective}\label{app:mplapp}
\subsubsection{Incomplete and complete price tableau}\label{app:mpl_m}
%%%
\begin{figure}[htbp]
\begin{center}
\includegraphics[scale=0.42]{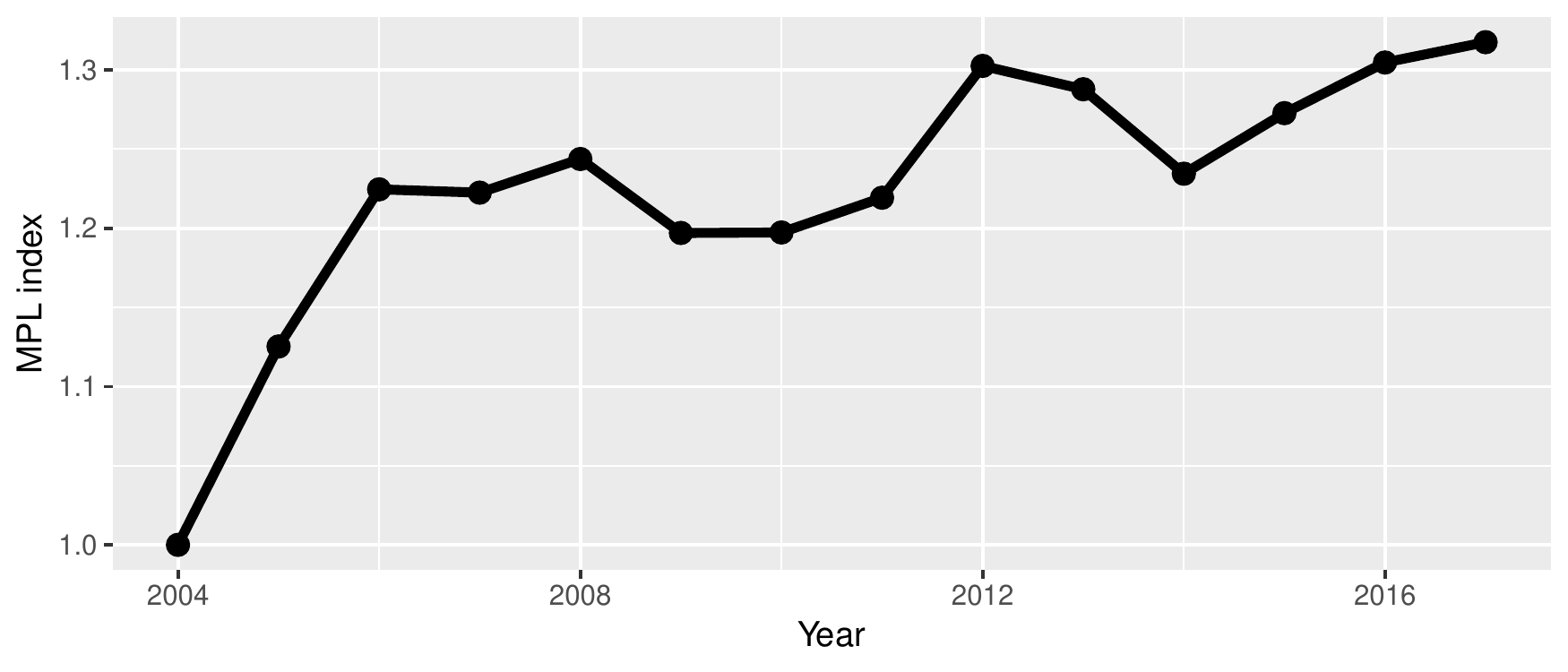}
\includegraphics[scale=0.42]{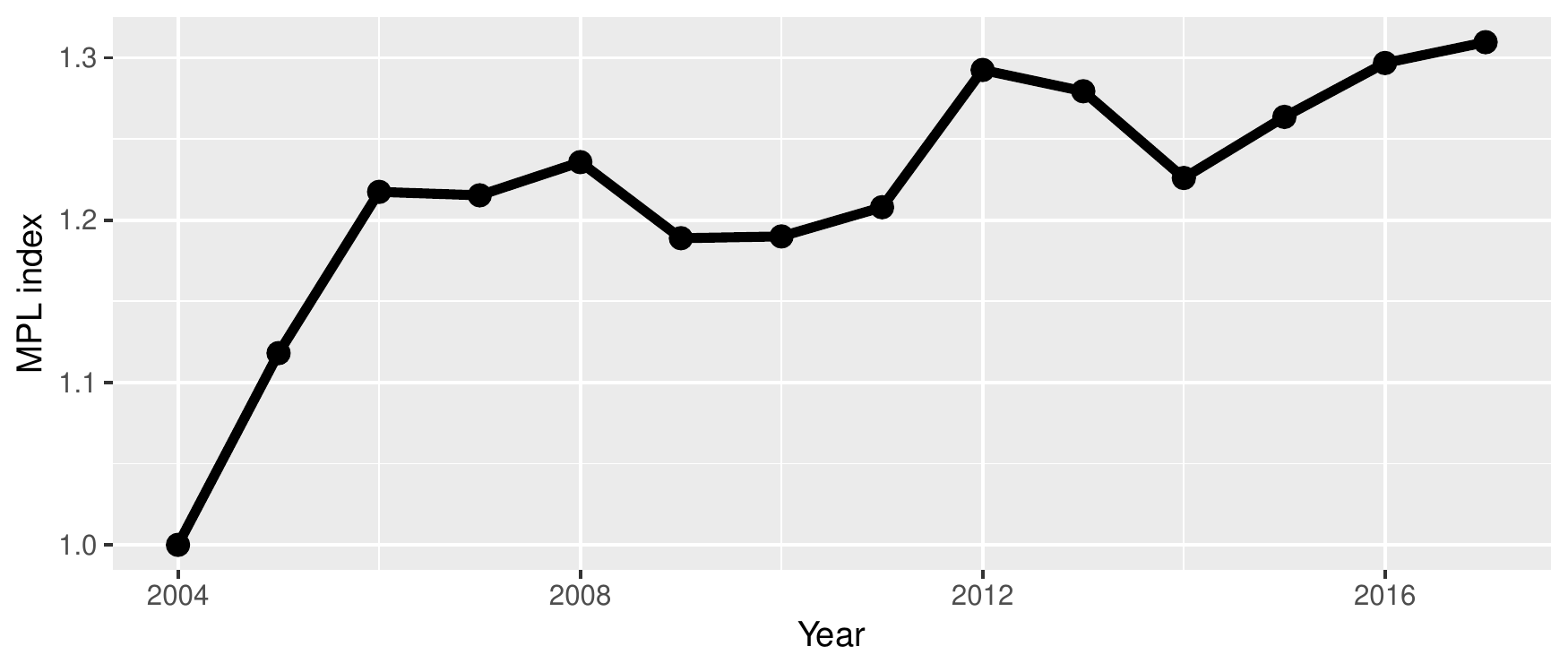}\\
\includegraphics[scale=0.42]{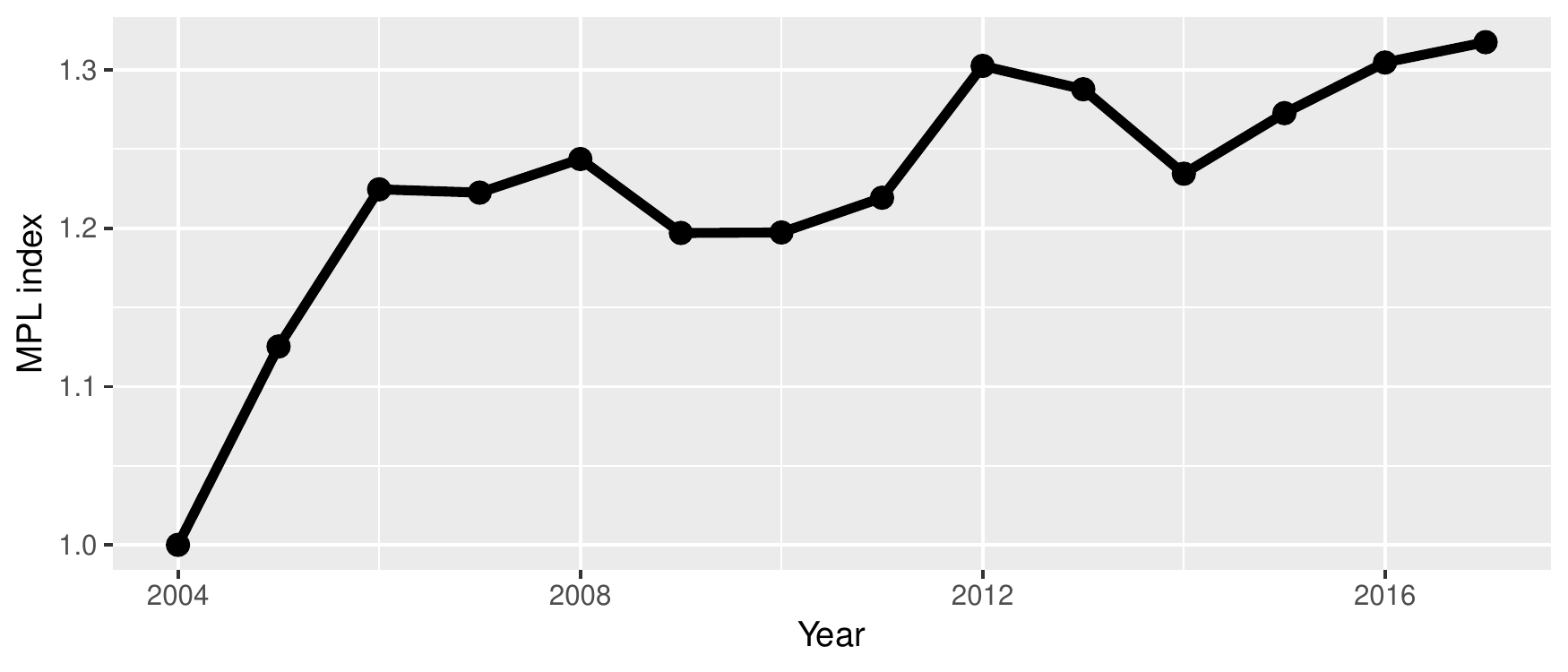}
\includegraphics[scale=0.42]{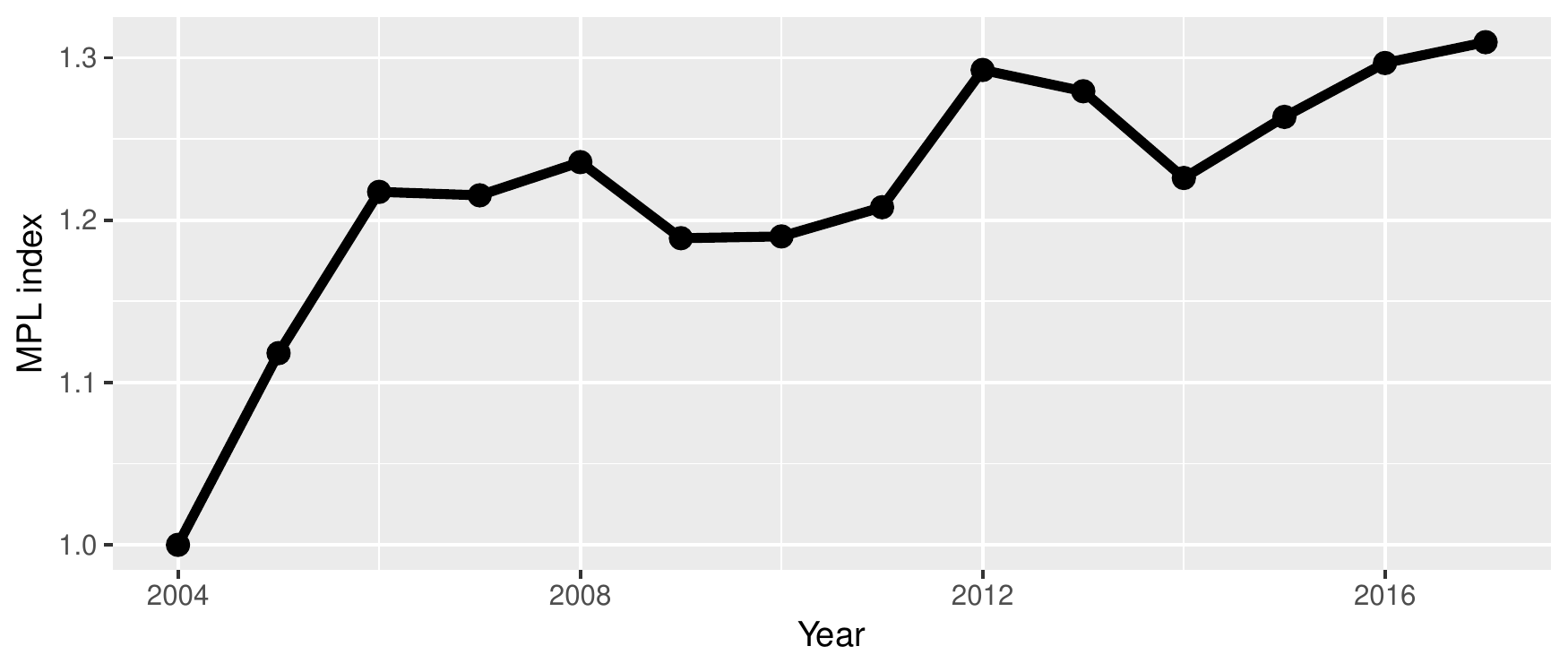}\\
\includegraphics[scale=0.42]{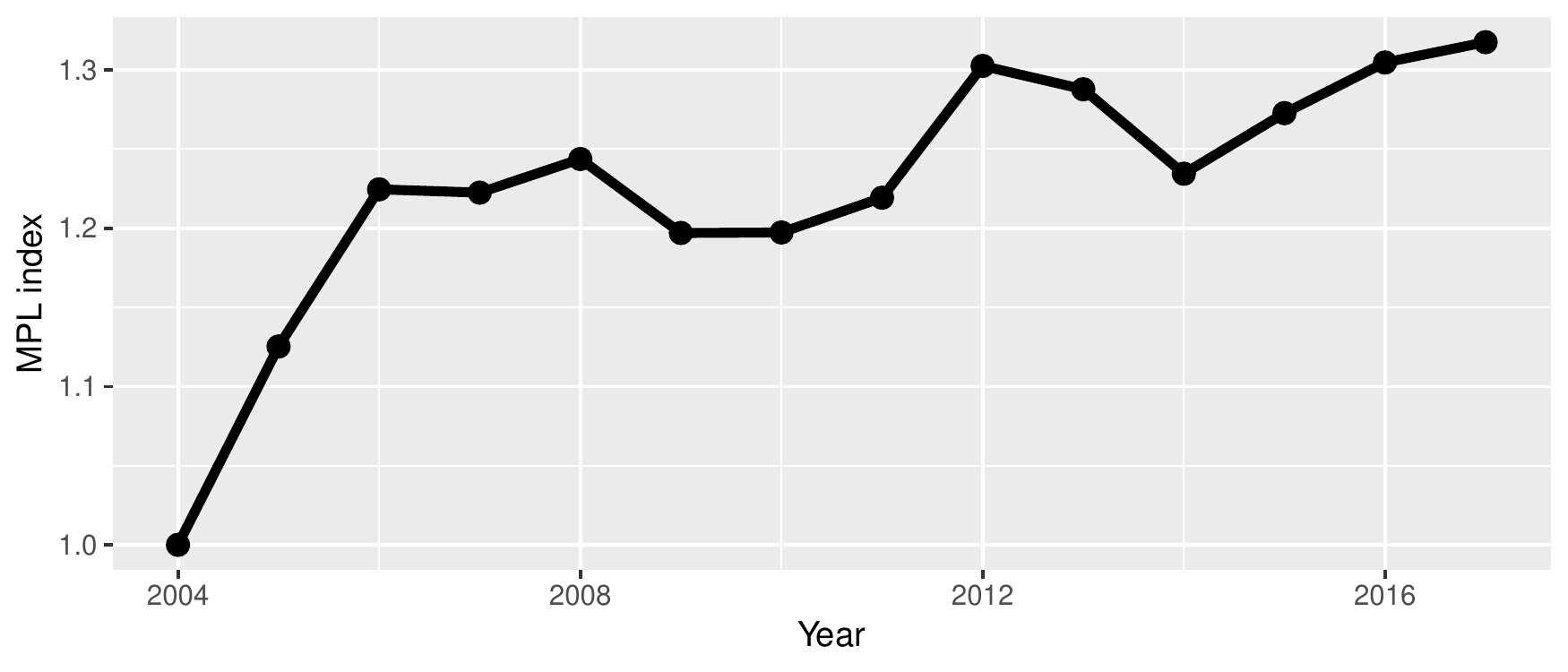}
\includegraphics[scale=0.42]{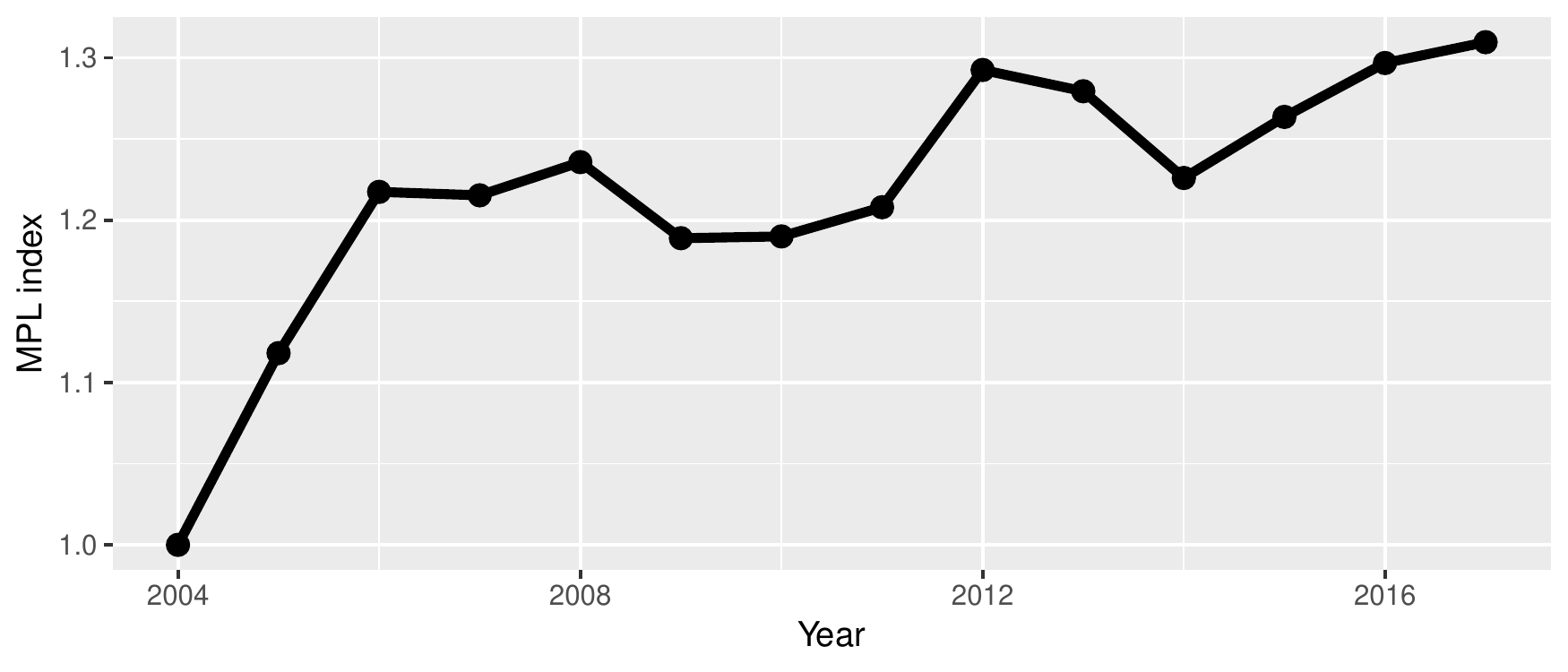}
\caption{GLS-f, GLS-s, and OLS versions of the MPL index are shown in the top, middle and bottom panel, respectively for the case of incomplete (left) and complete (right) price tableau.}
\label{fig:5app_mis}
\end{center}
\end{figure}

%FARE DUE GRAFICI (UNO ACCANTO ALL'ALTRO): UNO PER IL CASO COMPLETE PRICE TABLEAU CHE CONTIENE TUTTE LE VARIE VERSIONI MPL E IL TPD E UN ALTRO, CHE SEMPRE CONTIENE TUTTE LE VARIE VERSIONI MPL E IL TPD, per il caso incomplete price tableau  

\begin{figure}[htbp]
\begin{center}
\includegraphics[scale=0.42]{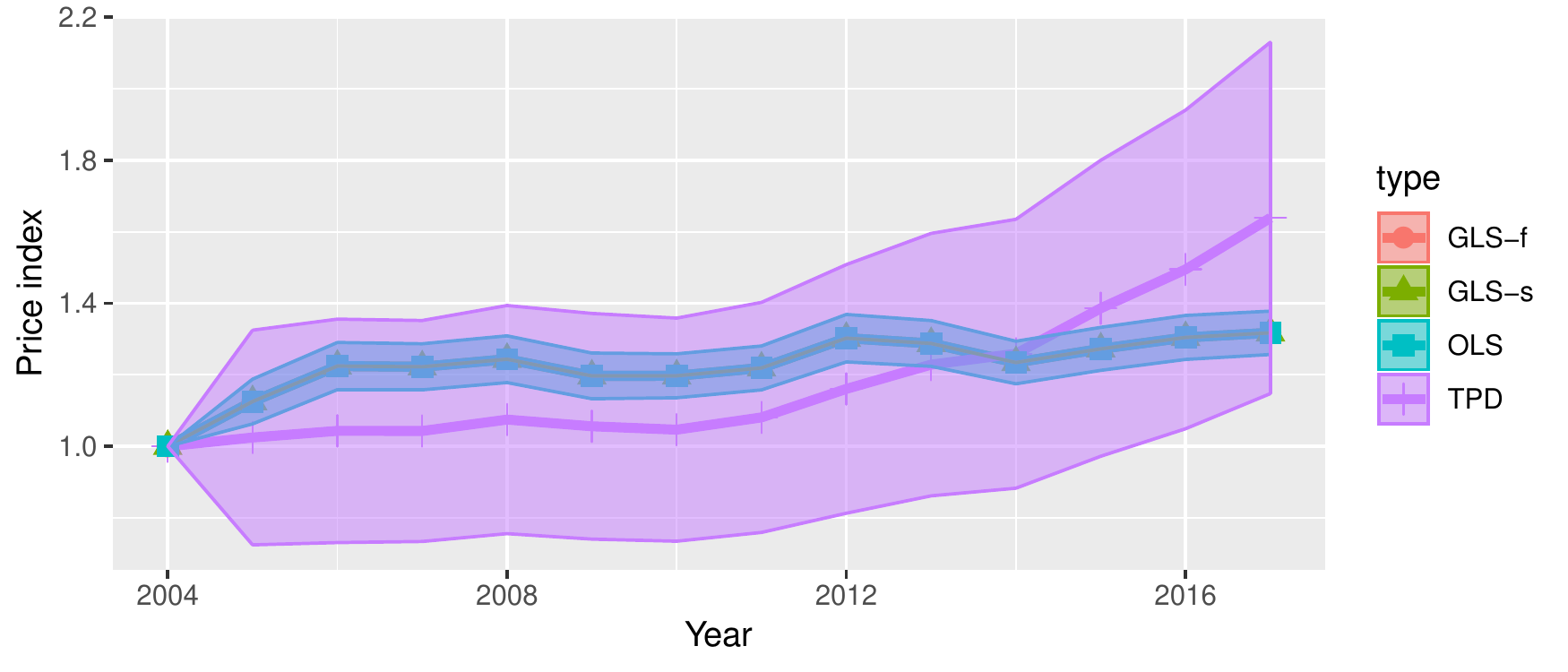}
\includegraphics[scale=0.42]{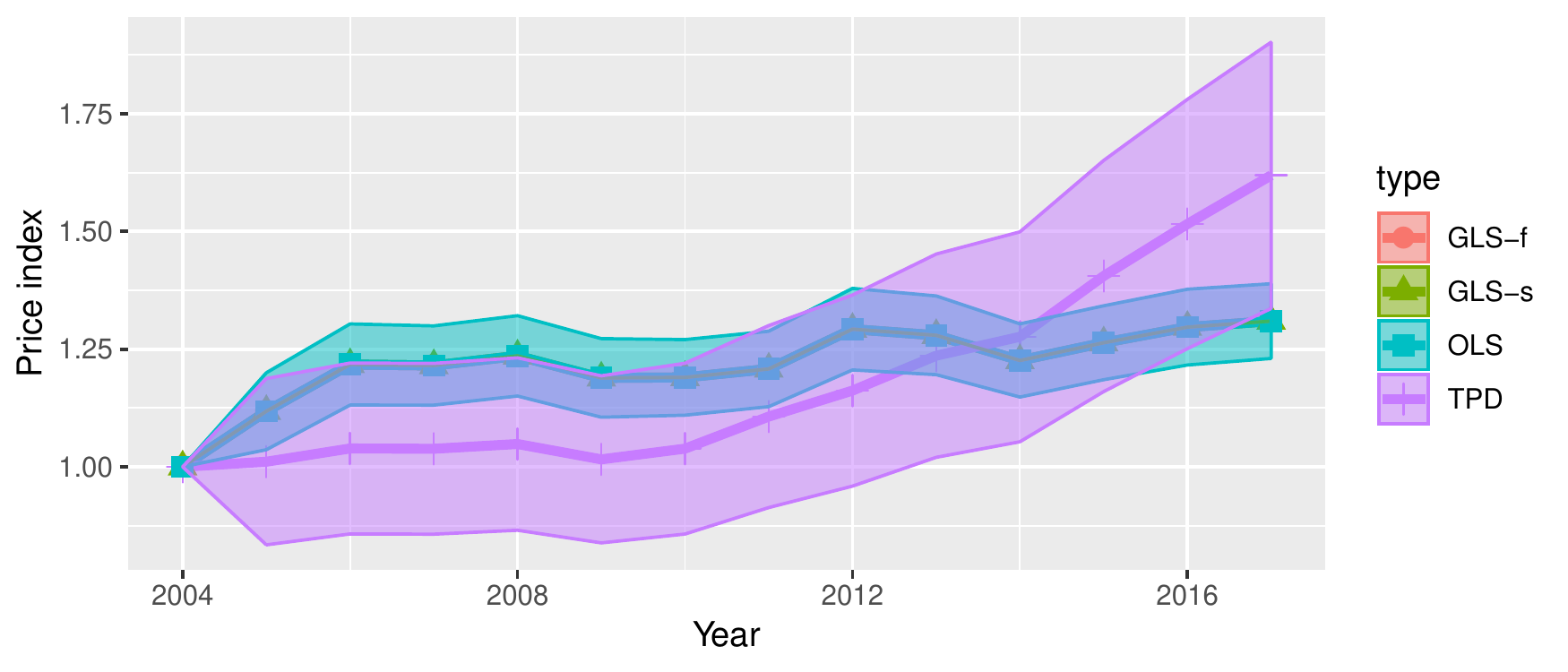}
\caption{Comparison of the GLS-f, GLS-s  and OLS versions of the MPL index with the TPD for the case of incomplete (left) and complete (right) tableau.}
\label{fig:5app_mis1}
\end{center}
\end{figure}

\begin{figure}[htbp]
\begin{center}
\includegraphics[scale=0.42]{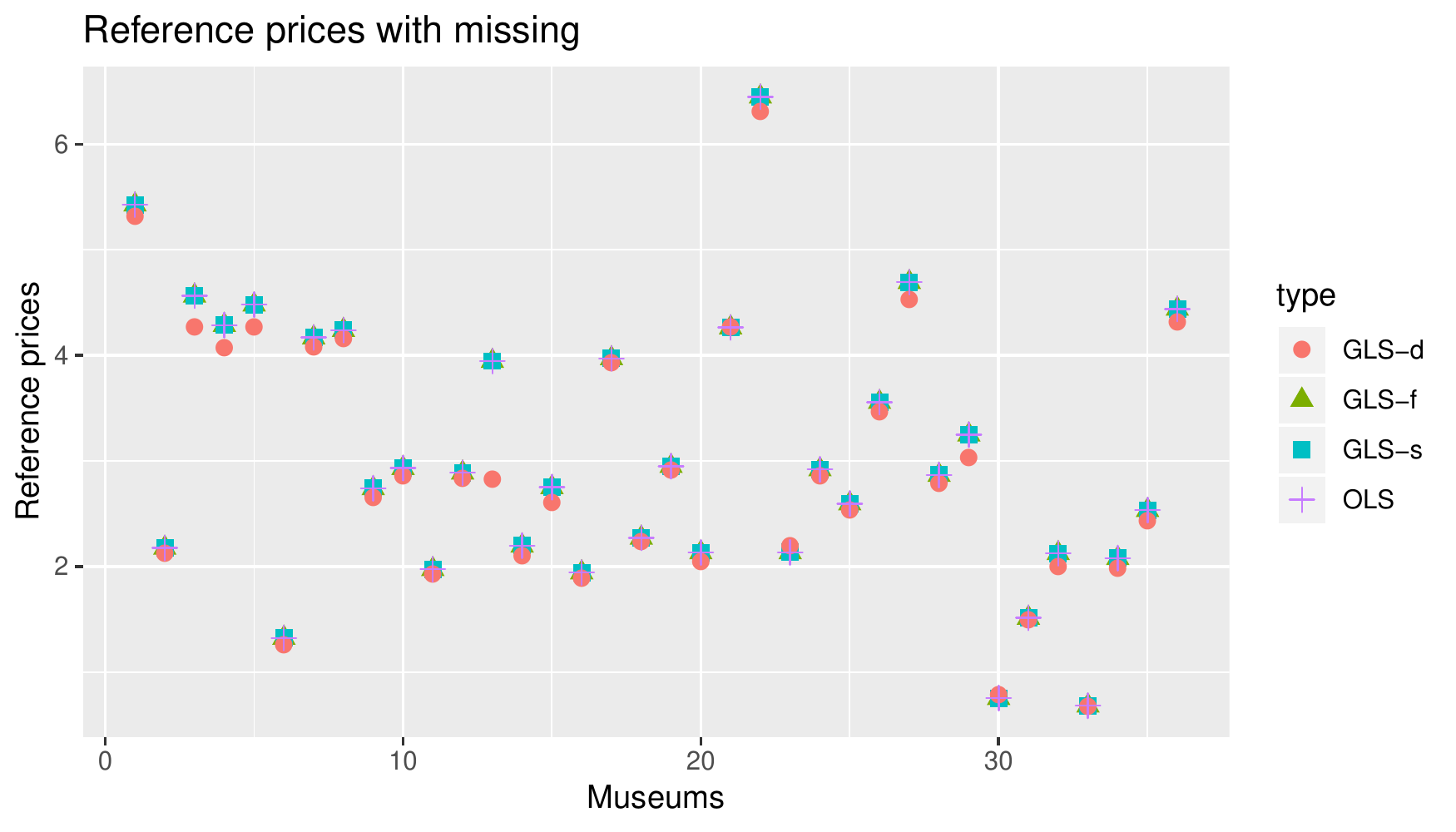}
\includegraphics[scale=0.42]{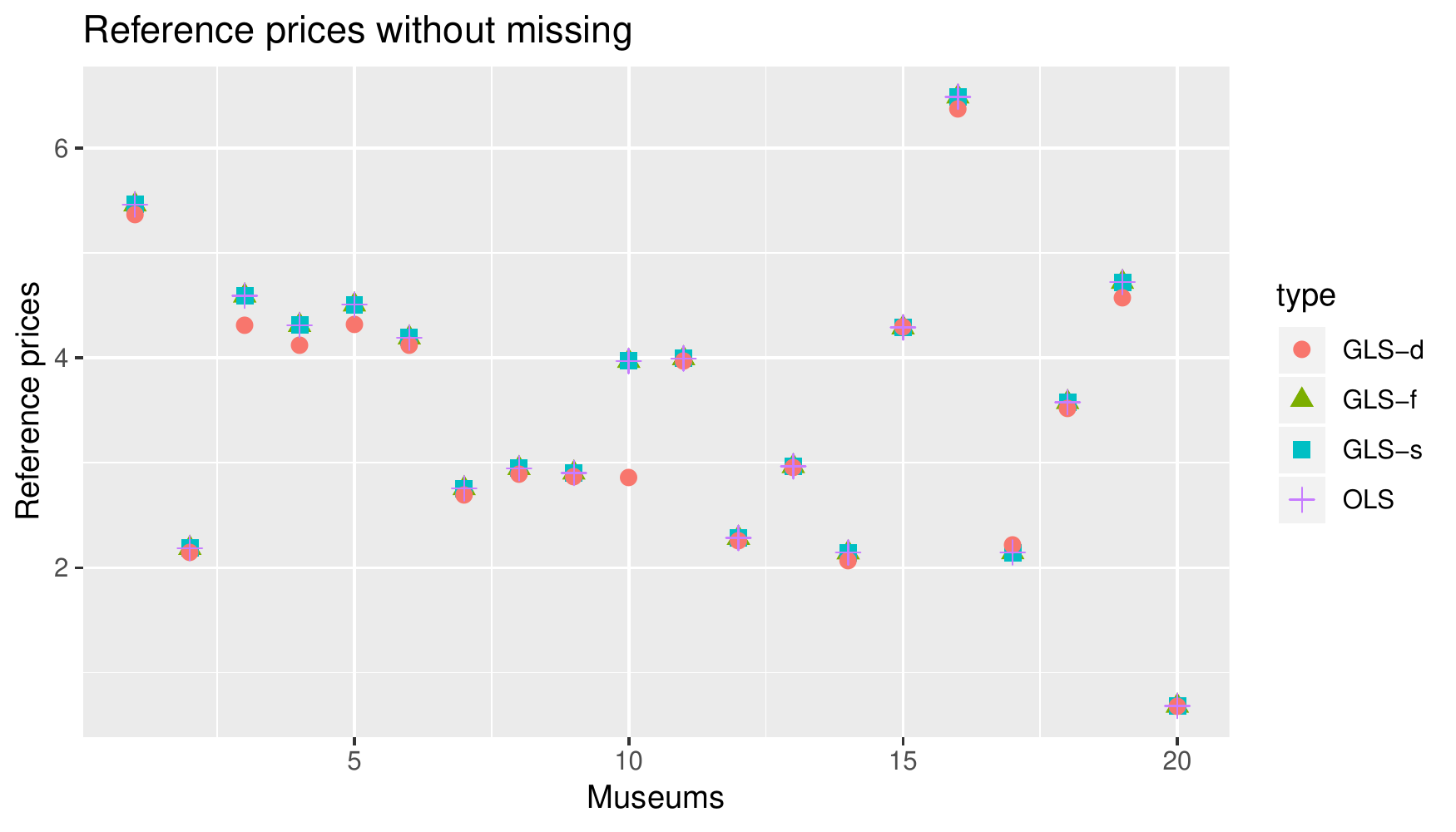}
\caption{Comparison of the GLS-f, GLS-s, GLS-d and OLS versions of the MPL reference prices for the case of incomplete (left) and complete (right) tableau.}
\label{fig:5app_mis1b}
\end{center}
\end{figure}

%\begin{figure}[htbp]
%\begin{center}

%\includegraphics[scale=0.42]{}
%\includegraphics[scale=0.42]{}
%\caption{Comparison of the price index variances estimated with the GLS-f, GLS-s, GLS-d and OLS versions of the MPL and the TPD one for the case of incomplete (left) and complete (right)  tableau.
%}
%\label{fig:varsim1}
%\end{center}
%\end{figure}

\clearpage
%\subsubsection{Complete price tableau}\label{app:mpl_nom}
%%
%
%\begin{figure}[htbp]
%\begin{center}
%\includegraphics[scale=0.42]{f5all_nom.pdf}
%\includegraphics[scale=0.42]{}\\
%\includegraphics[scale=0.42]{f5sta_nom.pdf}
%\includegraphics[scale=0.42]{}\\
%\includegraphics[scale=0.42]{}
%\includegraphics[scale=0.42]{}\\
%\includegraphics[scale=0.42]{f5ols_nom.pdf}
%\includegraphics[scale=0.42]{}
%\caption{MPL index (left) and percentage annual change (right). The GLS-f, GLS-s, and OLS version of the index are shown in the top, middle and bottom panel, respectively.}
%\label{fig:5app_nomis}
%\end{center}
%\end{figure}
%
%
%\begin{figure}[htbp]
%\begin{center}
%\includegraphics[scale=0.42]{}
%\includegraphics[scale=0.42]{f10cfr_sta_nom.pdf}
%\caption{MPL index compare with the OLS and TPD under GLS-f and GLS-s specifications, respectively.}
%\label{fig:5app_mis1}
%\end{center}
%\end{figure}
%
%\clearpage
%
%
%%%

\subsection{MPL index and the cultural supply: a multilateral perspective}\label{app:mpllat}

\begin{figure}[htbp]
\begin{center}
\includegraphics[scale=0.42]{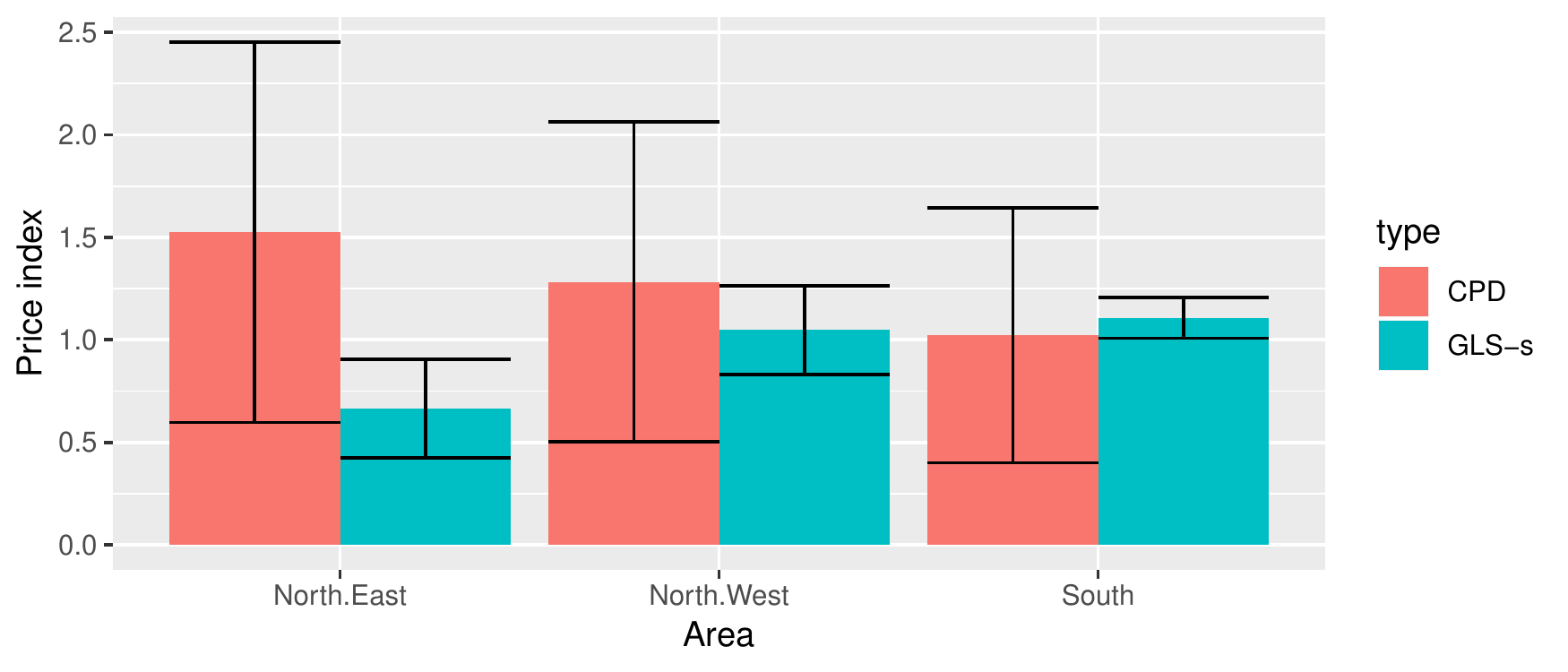}
\includegraphics[scale=0.42]{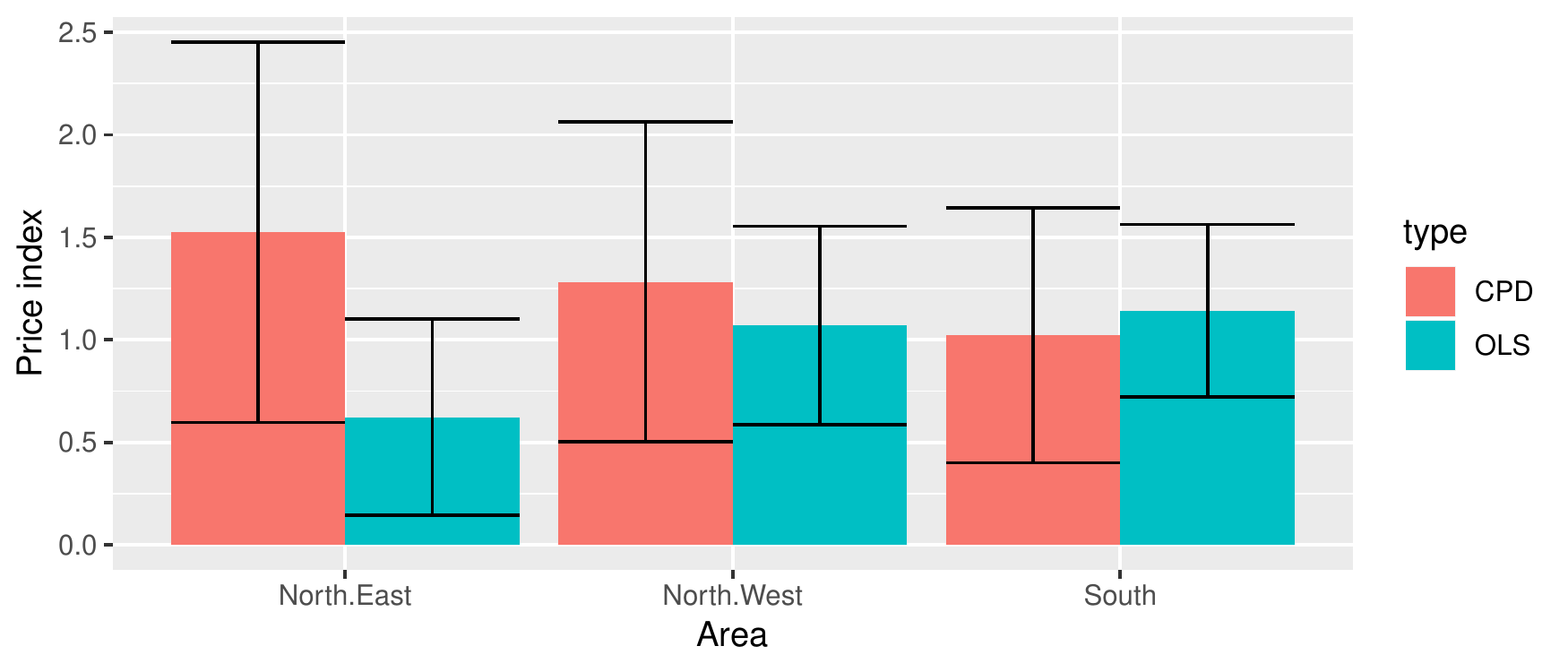}
\caption{MPL and CPD indices with their $2\sigma$ confidence bounds. Left panel shows the GLS-s and the right one the OLS version of the MPL index. }
\label{fig:5capp}
\end{center}
\end{figure}
%\clearpage

%%%%%%%%%%
\subsection{MPL index and cultural supply: a simulation}\label{app:mplsim}

\begin{figure}[htbp]
\begin{center}
\includegraphics[scale=0.42]{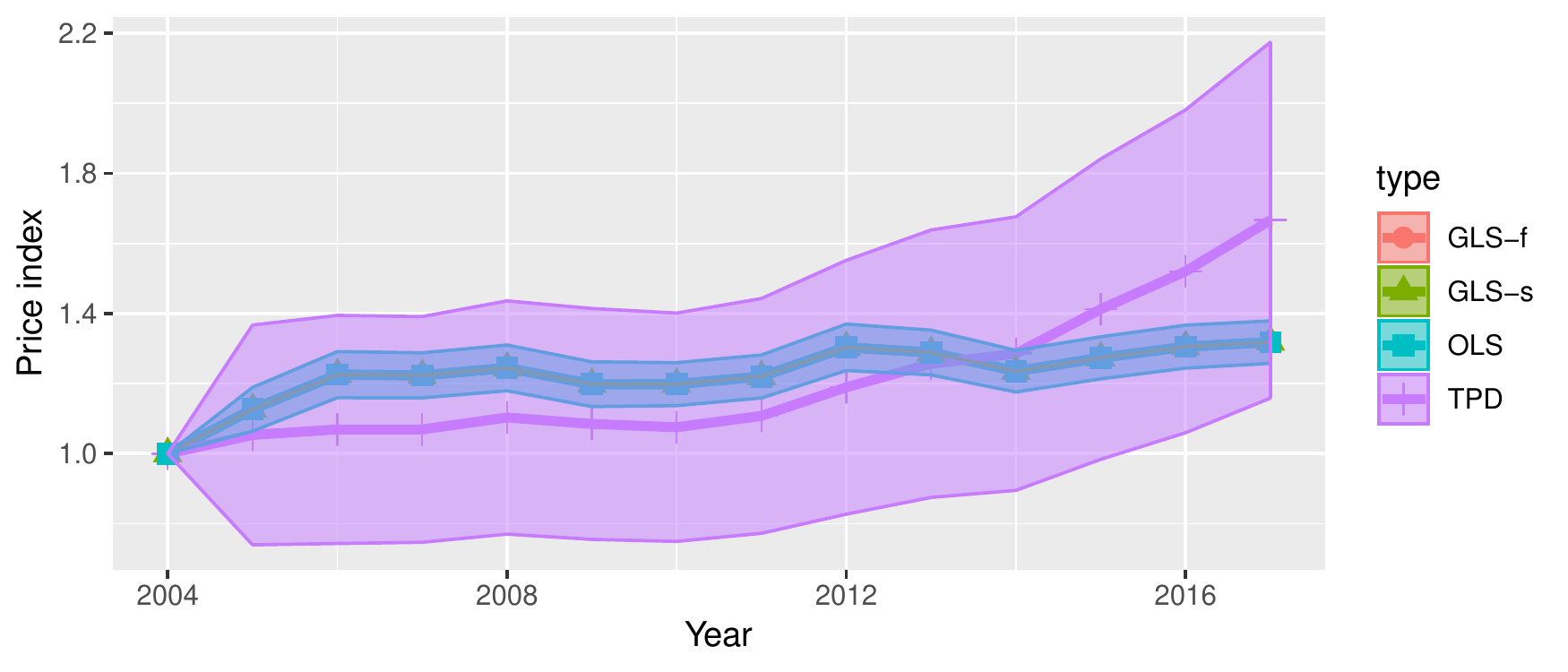}
\includegraphics[scale=0.42]{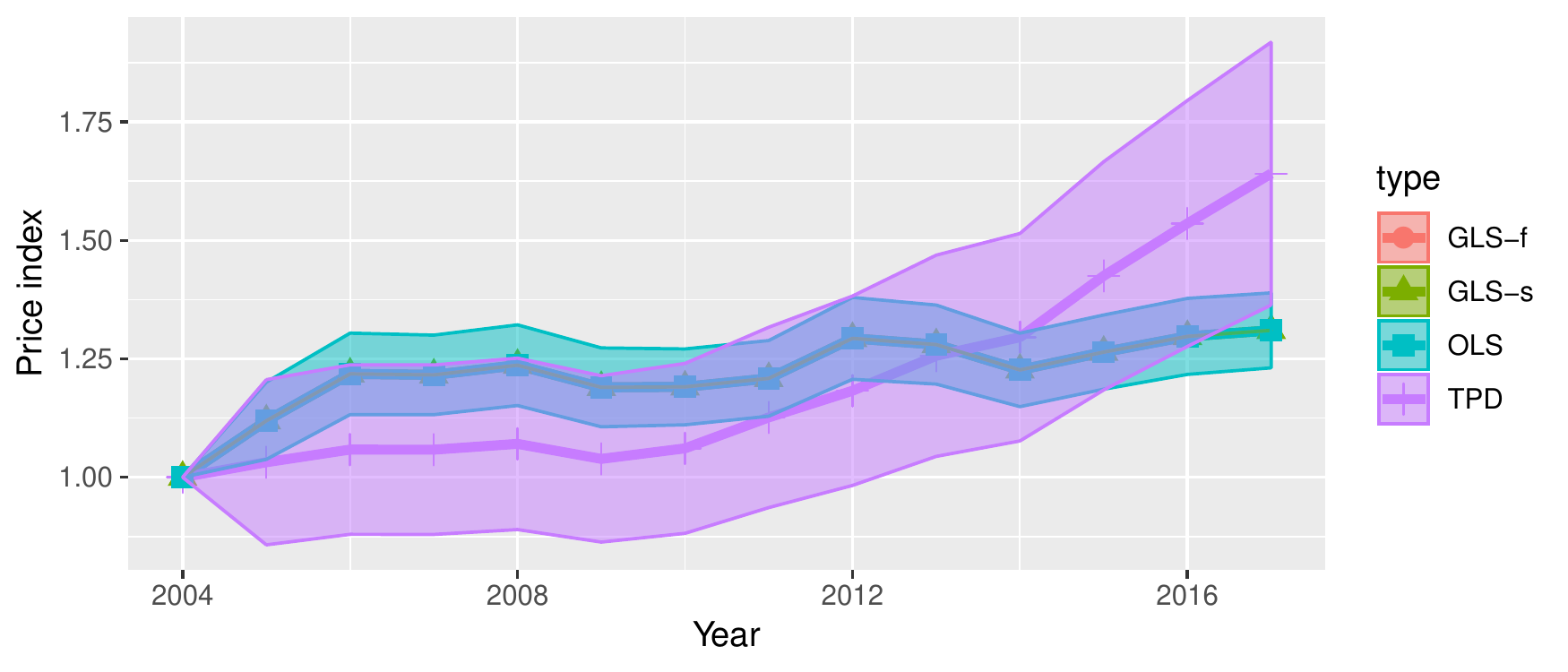} 
\caption{%Comparison of the GLS-f (top panels), GLS-s (middle panels) and OLS (bottom panels)
Comparison of the GLS-f, GLS-s and OLS  versions of the MPL index with the TPD index for the case of complete (left panels) and incomplete (right panels) price tableau. The MPL indices have been obtained on simulated data by adding to $\boldsymbol{V}_1$ random terms drawn from a Normal law with a mean equal to 20000 and a standard error varying randomly from 0 to 1000. 
%The bottom panel shows the GLS-f version of the MPL index computed at each time $t$, , for $t=2,\dots,T$, by adding stochastic terms, drawn from a Normal law with a mean equal to -5000 and a standard error varying randomly from 0 to 800, to $\boldsymbol{\upsilon}_{t-1}$.
Confidence bands are also plotted.
}
%\textit{Bottom panels}: MPL and TPD indices obtained from simulated values $\boldsymbol{\upsilon}_t$ obtained by adding stochastic terms drawn from a Normal law with a mean equal to -5000 and a standard error varying randomly from 0 to 800 to $\boldsymbol{\upsilon}_{t-1}$, for $t=2,\dots,T$. The first and second panel respectively refer to a complete and an incomplete price tableau scenario. The MPL is considered in its GLS-f version.
%}
\label{fig:sim1all}
\end{center}
\end{figure}

\begin{figure}[htbp]
\begin{center}
\includegraphics[scale=0.42]{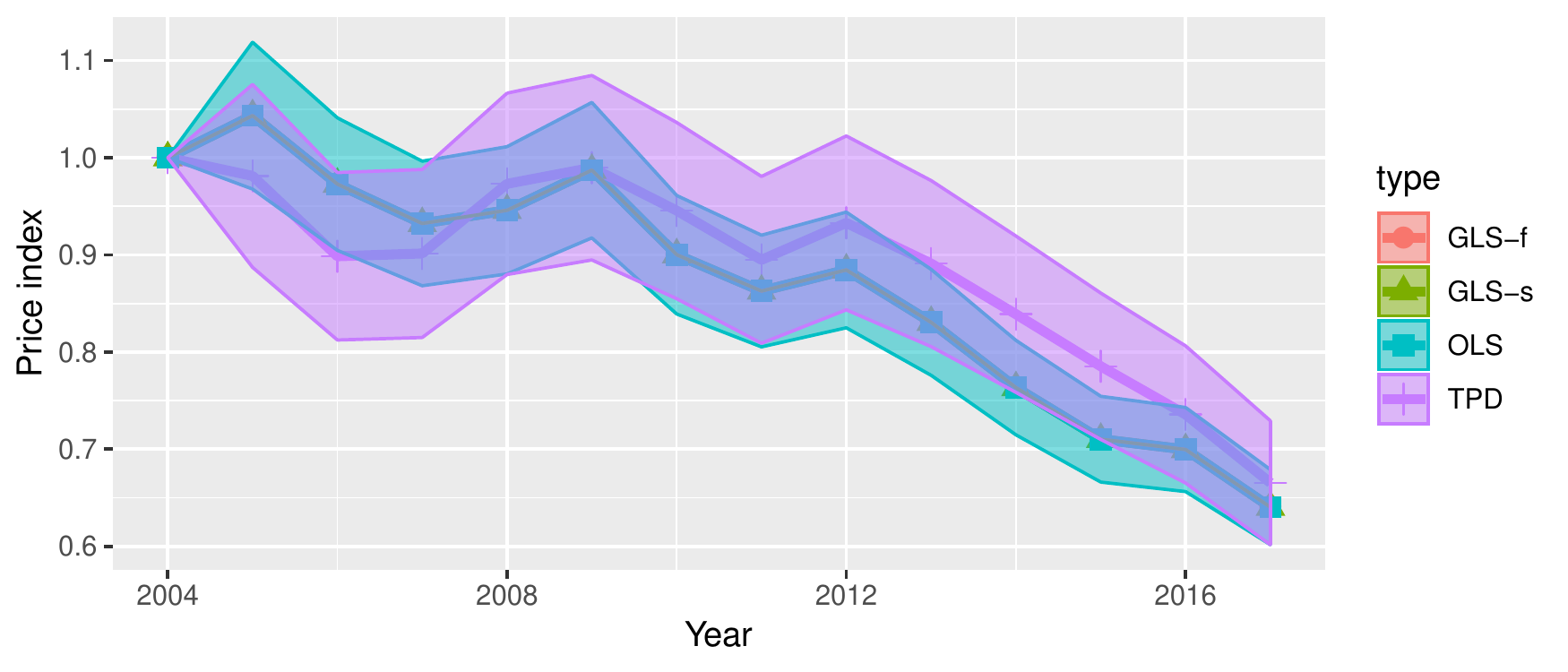} 
\includegraphics[scale=0.42]{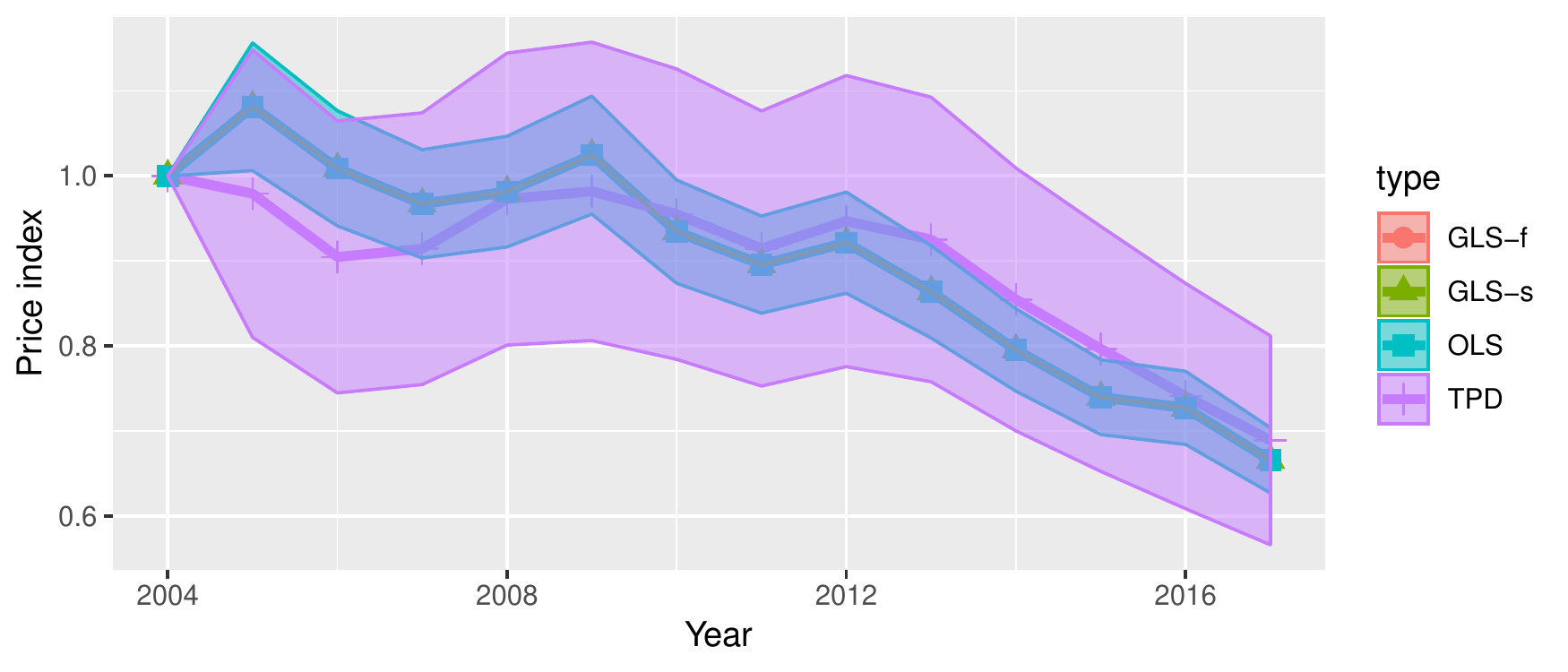}\\
\caption{%Comparison of the GLS-f (top panels), GLS-s (middle panels) and OLS (bottom panels) 
Comparison of the GLS-f, GLS-s  and OLS  versions of the MPL index with the TPD index for the case of complete (left panels) and incomplete (right panels) price tableau. The MPL indices have been obtained on simulated data by adding to $\boldsymbol{\upsilon}_{t}$, for $t=2,\dots,T$, stochastic terms, drawn from a Normal law with a mean equal to -5000 and a standard error varying randomly from 0 to 800, to $\boldsymbol{\upsilon}_{t-1}$. Confidence bands are also plotted.
}

%\textit{Bottom panels}: MPL and TPD indices obtained from simulated values $\boldsymbol{\upsilon}_t$ obtained by adding stochastic terms drawn from a Normal law with a mean equal to -5000 and a standard error varying randomly from 0 to 800 to $\boldsymbol{\upsilon}_{t-1}$, for $t=2,\dots,T$. The first and second panel respectively refer to a complete and an incomplete price tableau scenario. The MPL is considered in its GLS-s version.
%}
\label{fig:sim1sta}
\end{center}
\end{figure}

\clearpage
\section{Simulations}\label{app:sim}
%\subsection{Complete price tableau}\label{app:sim1}

\begin{figure}[htbp]
\begin{center}
\includegraphics[scale=0.28]{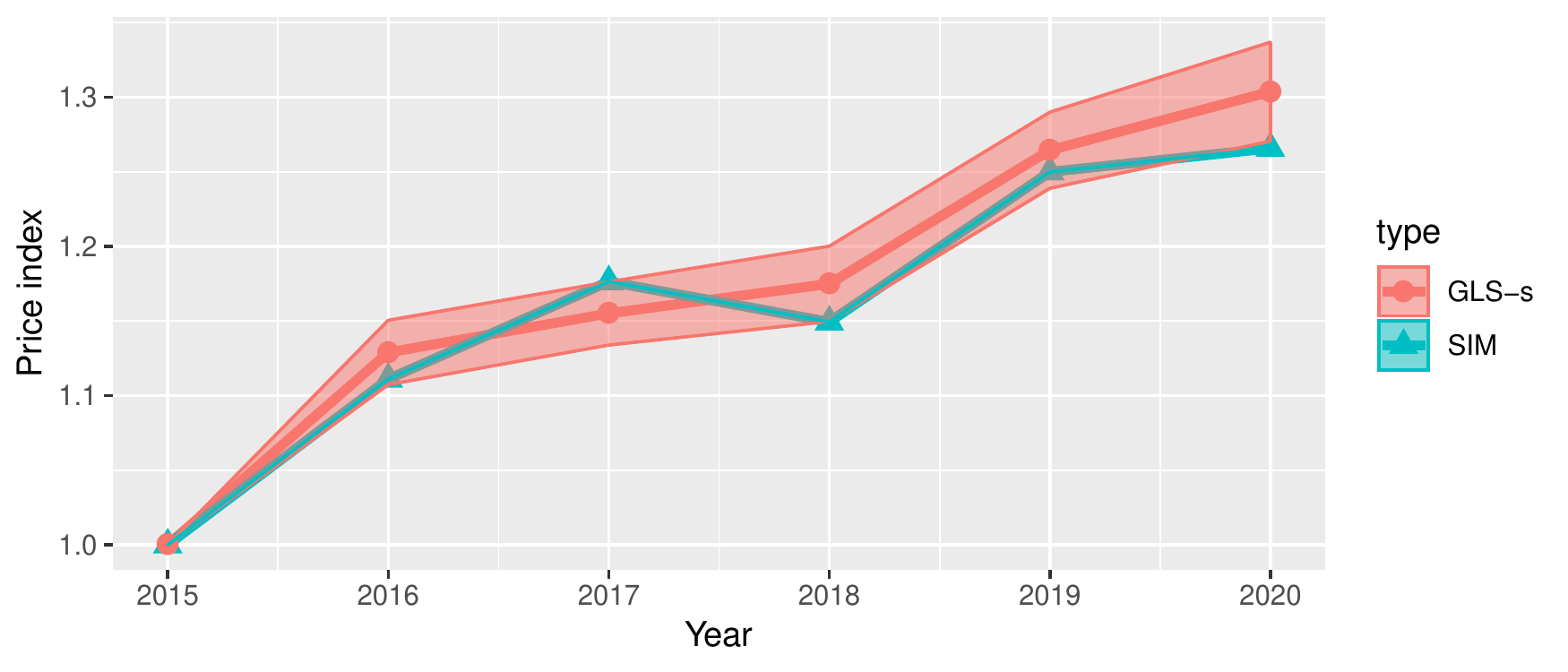}\includegraphics[scale=0.28]{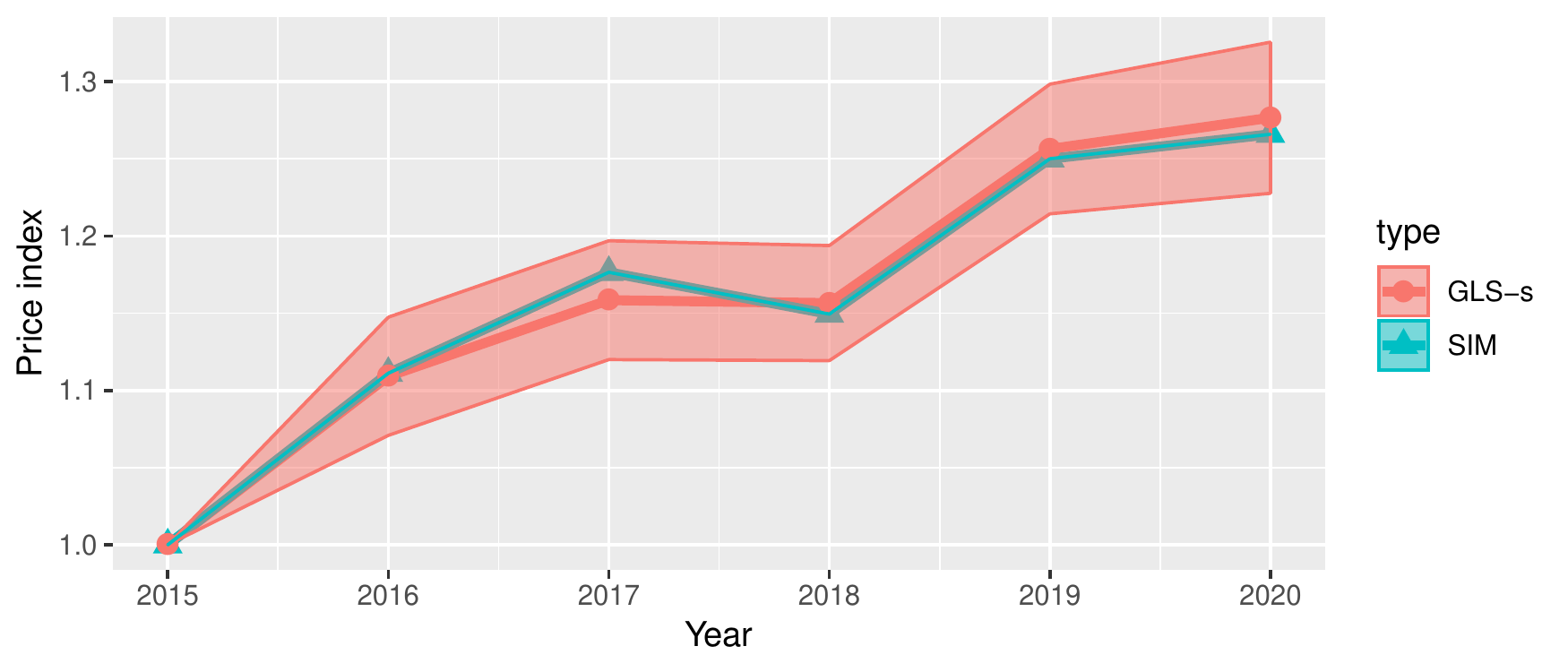}
\includegraphics[scale=0.28]{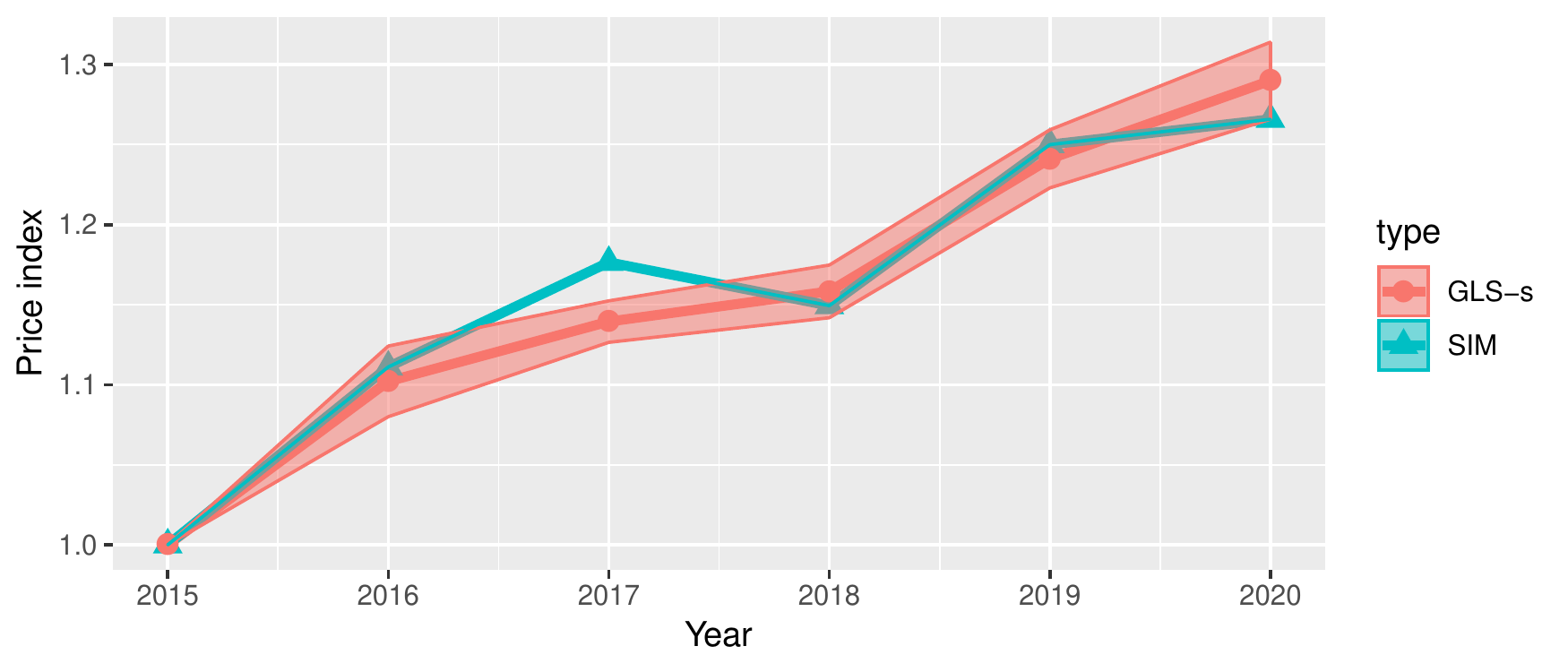}\\
\includegraphics[scale=0.28]{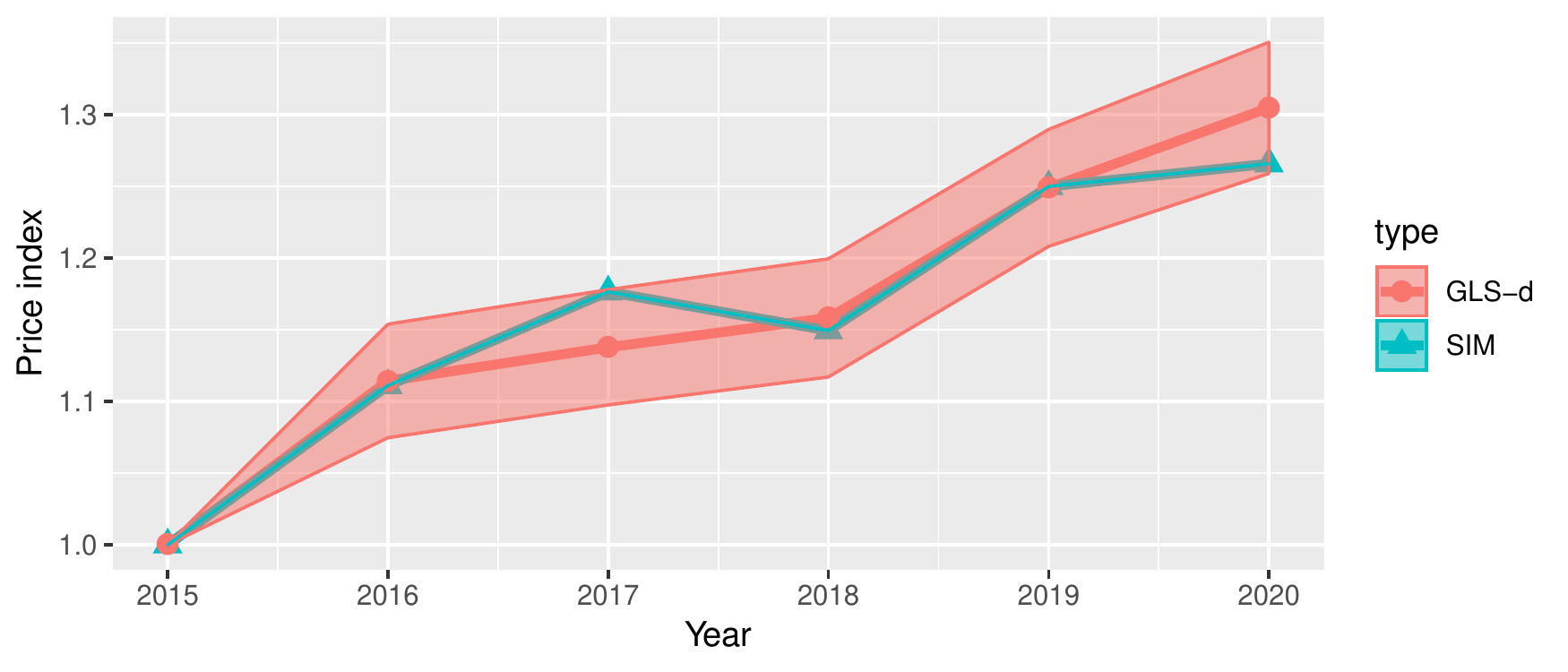} \includegraphics[scale=0.28]{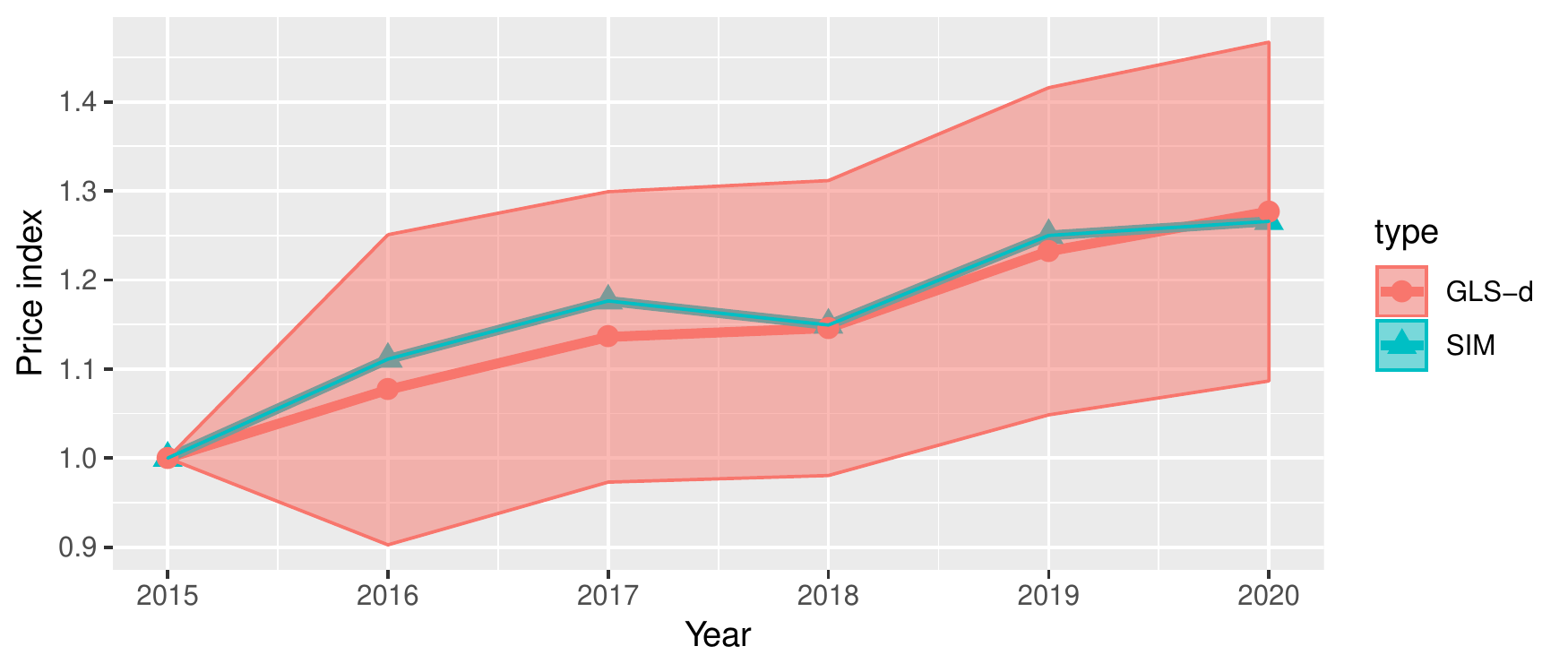}
\includegraphics[scale=0.28]{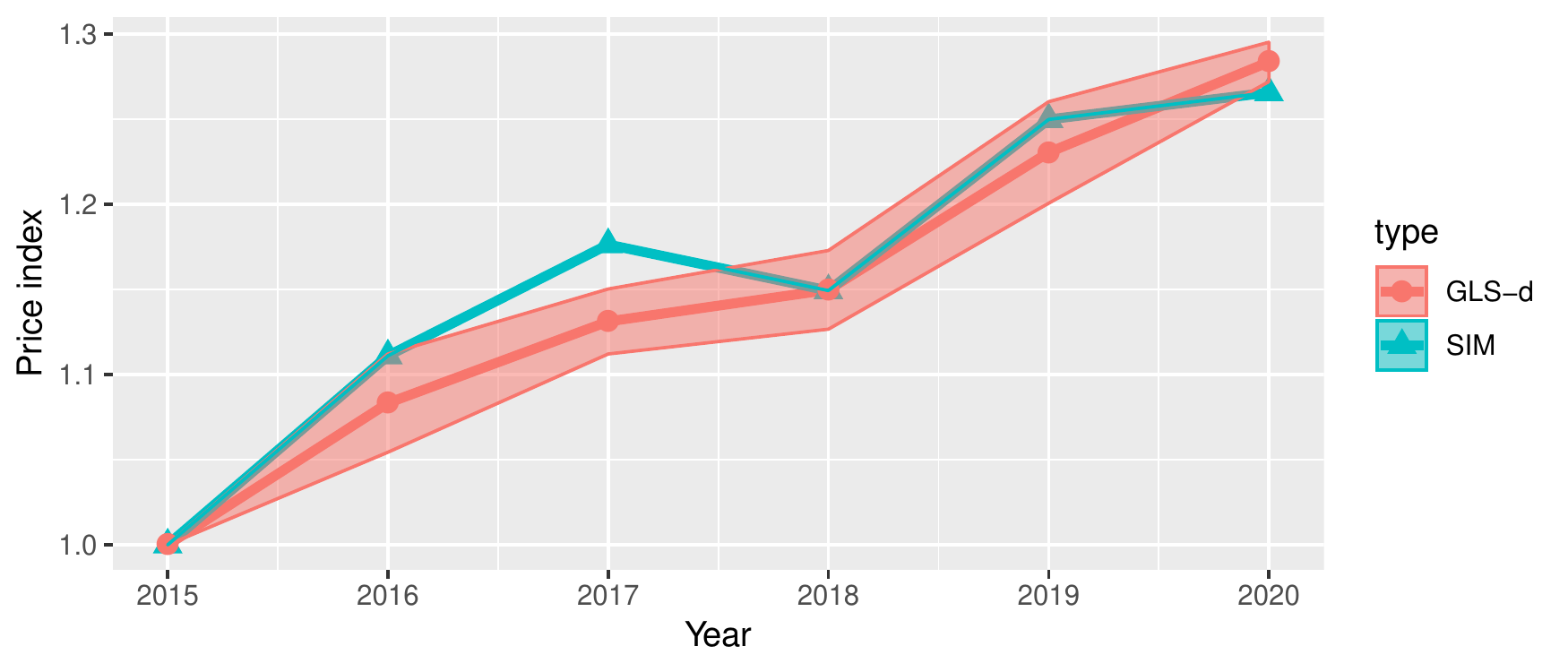}\\
\includegraphics[scale=0.28]{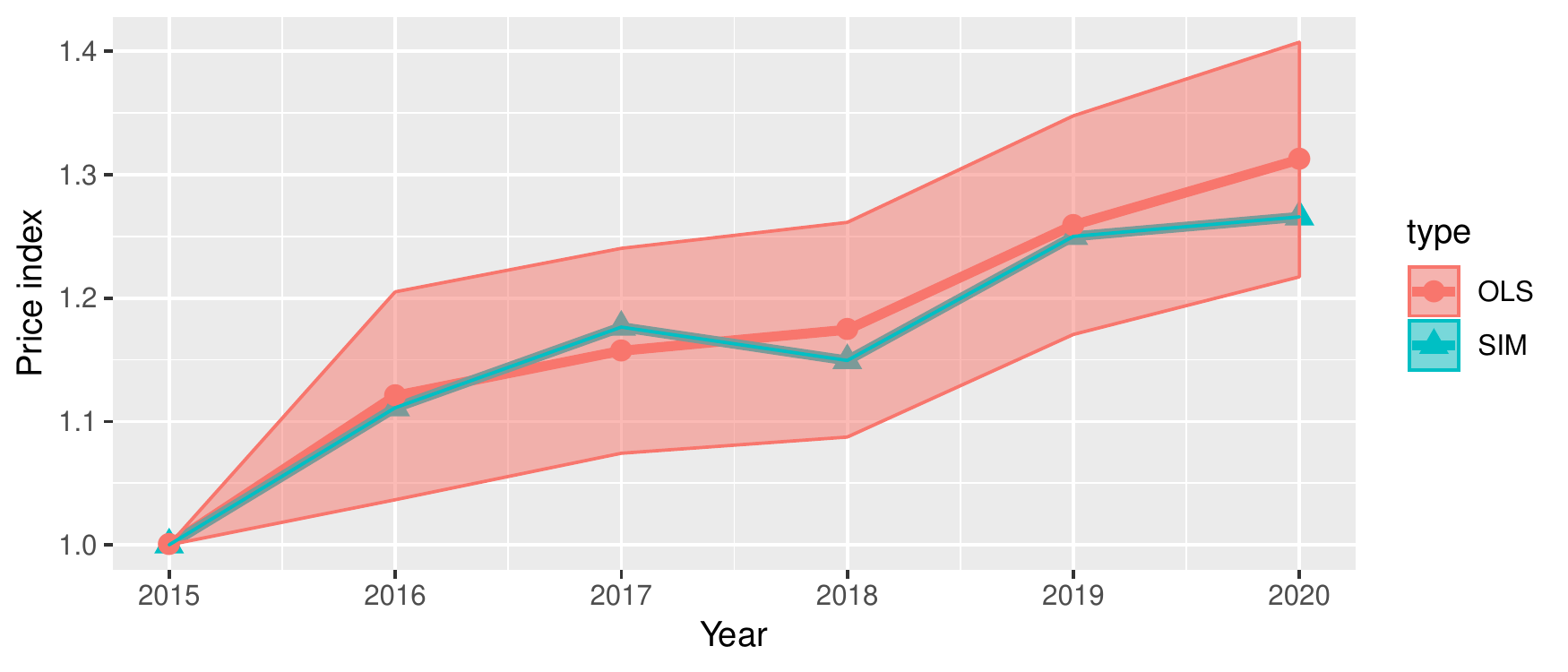} \includegraphics[scale=0.28]{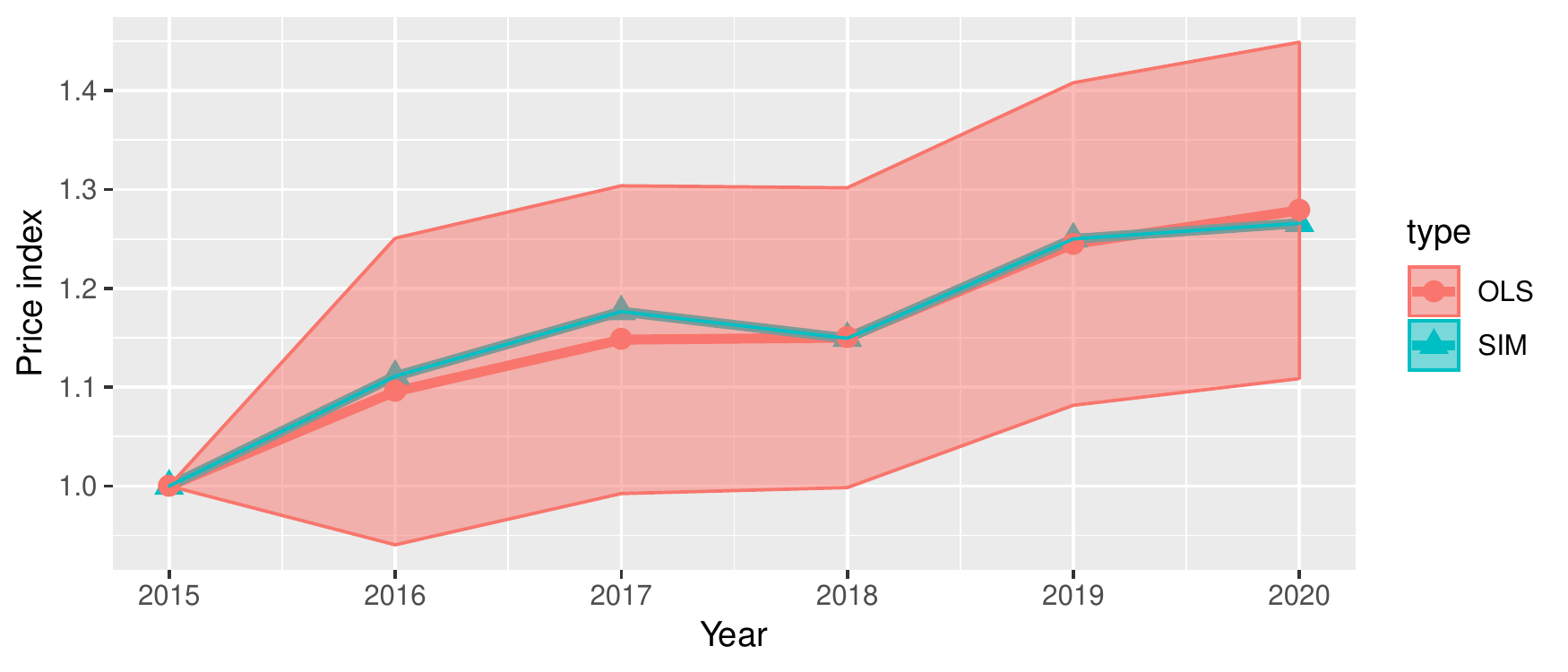} \includegraphics[scale=0.28]{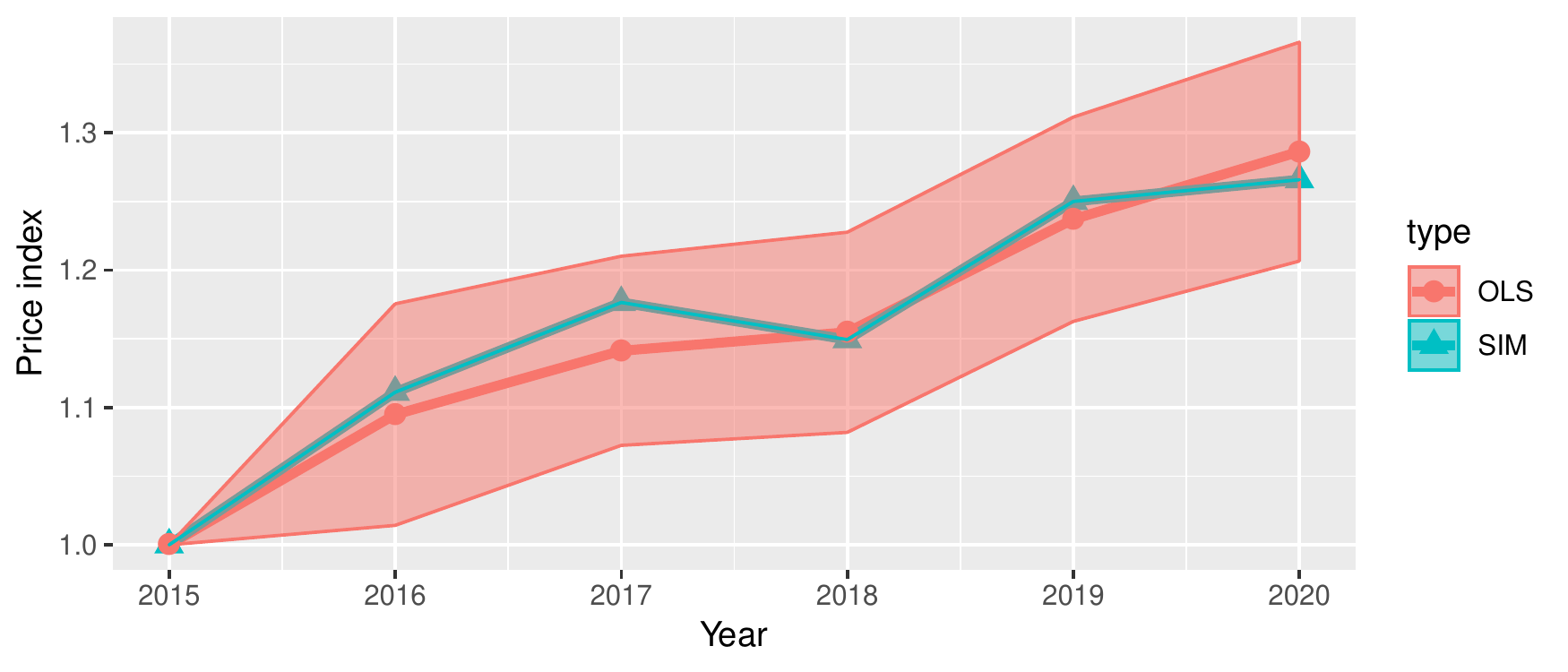}\\
\caption{Comparison of the %GLS-f (first row panels), 
GLS-s (first row panels), GLS-d (second row panels) and OLS (third row panels) versions of the MPL index with the "real" index $\boldsymbol{\lambda}$ assigned in the simulation (SIM), for the case of complete (left panels) and incomplete  price tableau with ``standard'' (central panels) and MPL (right panels) reference basket. Confidence bands are also plotted.
 %panel assumes a full specification of  $\boldsymbol{\Omega}$, the top right panel a stationarity assumption, the bottom left a diagonal one, while the bottom right a identity matrix leading to an OSL estimator. Confidence bands are also plotted.  
}
\label{fig:5sim1}
\end{center}
\end{figure}

\begin{figure}[htbp]
\begin{center}
\includegraphics[scale=0.42]{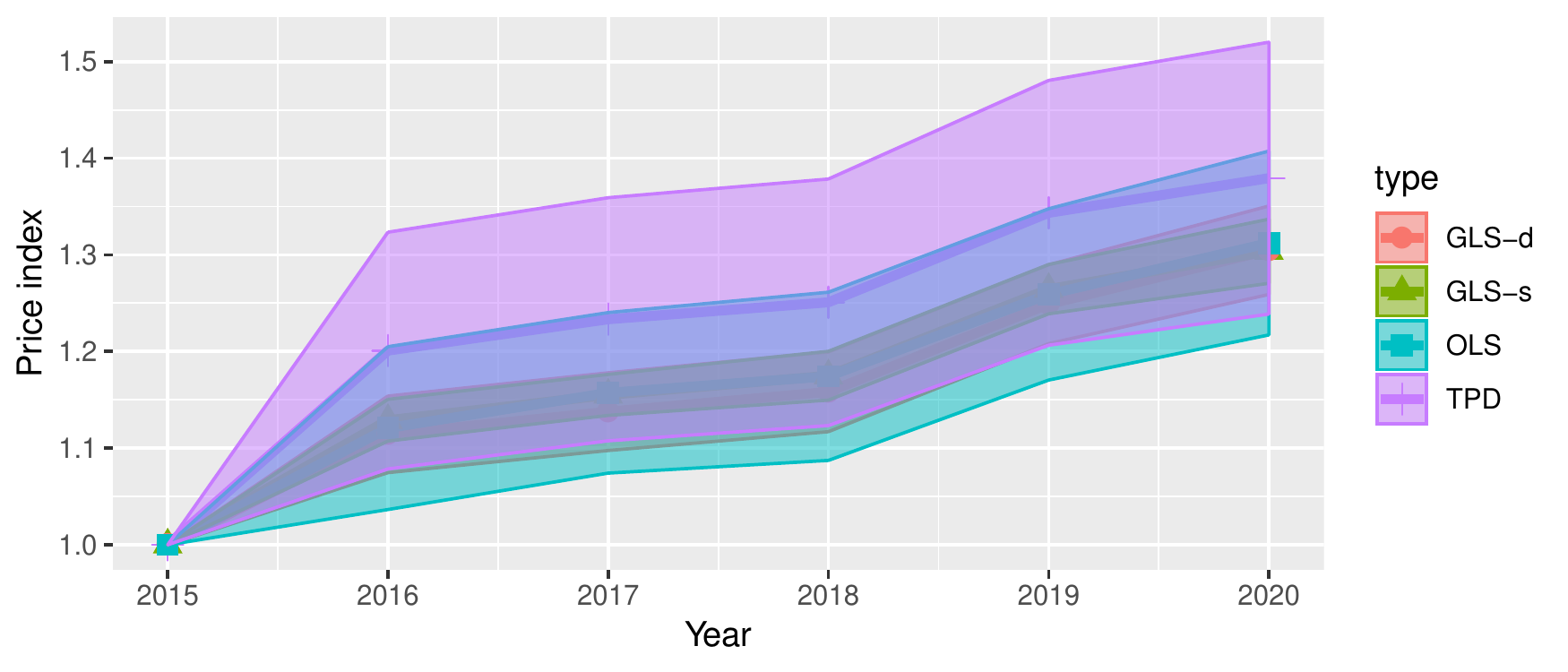} \includegraphics[scale=0.42]{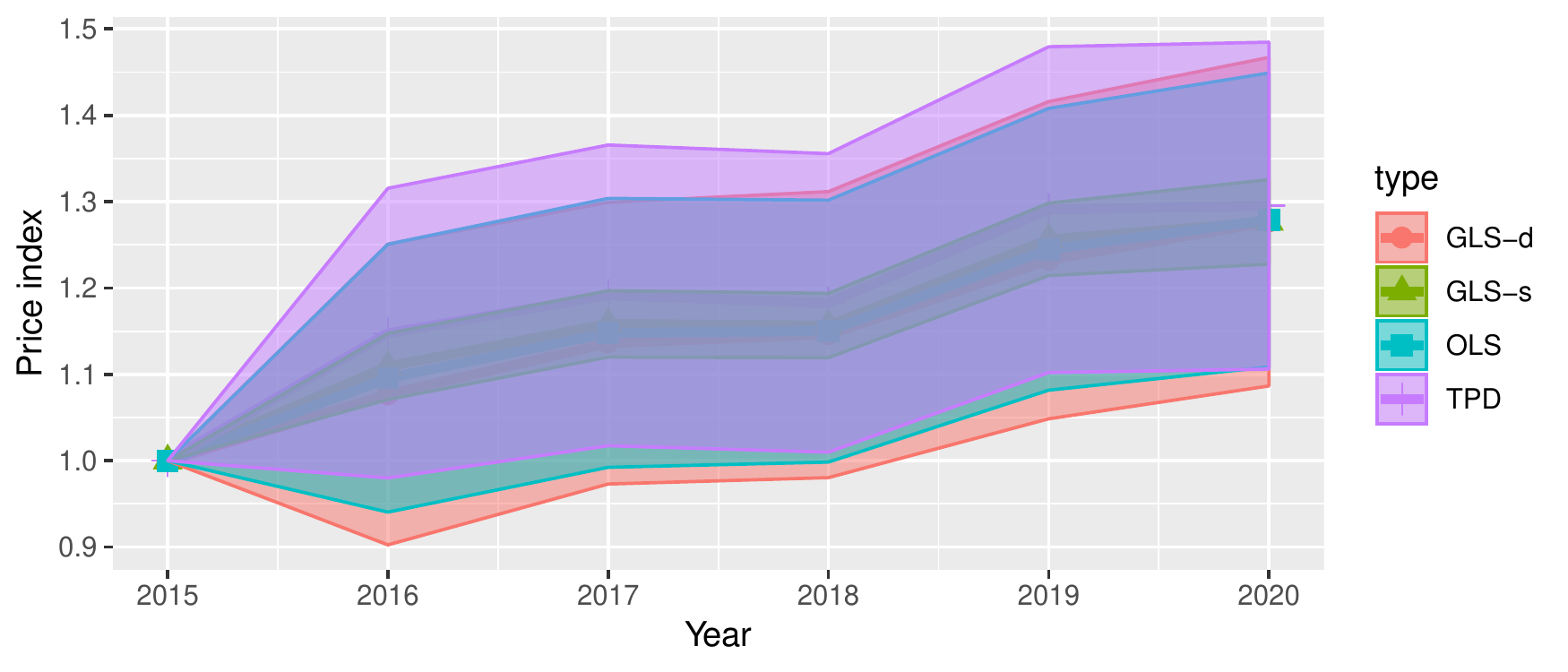}
\includegraphics[scale=0.42]{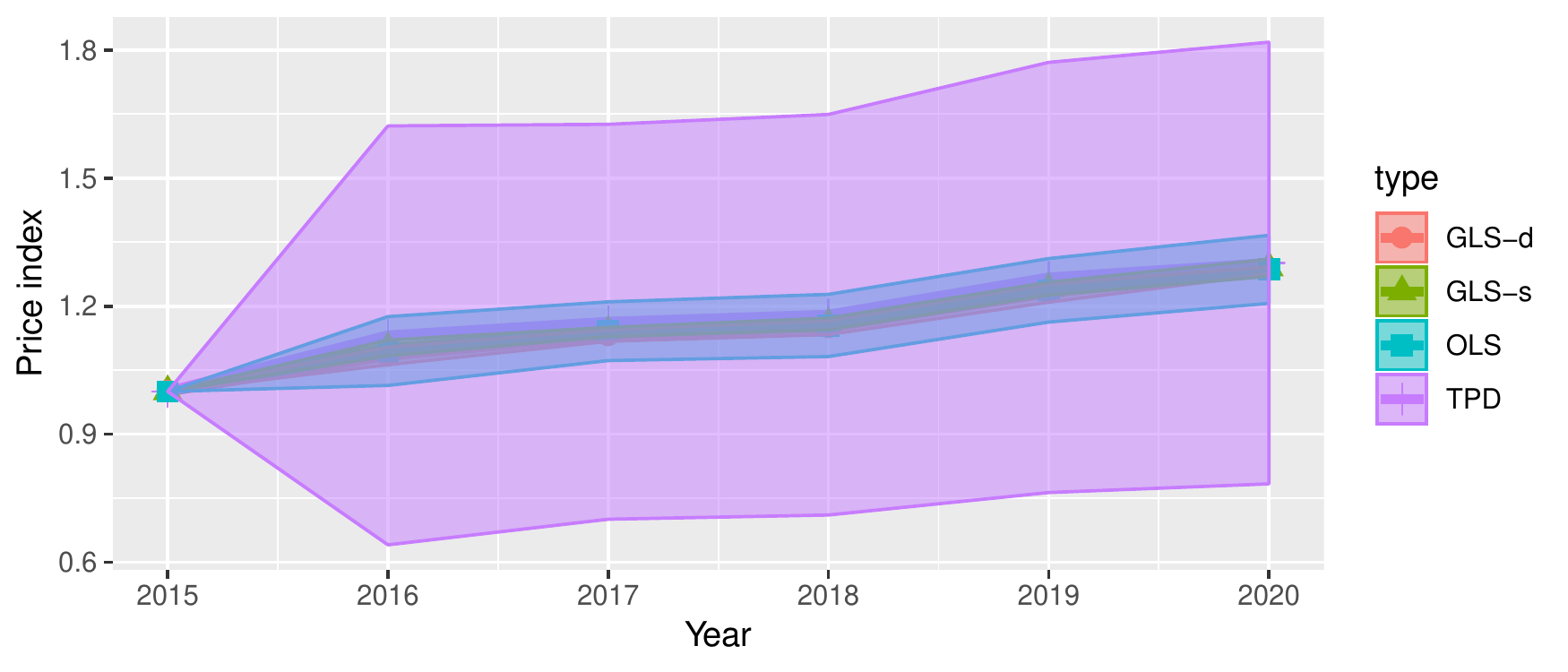}
\caption{Comparison of different versions of the  MPL index with the TPD one for each of the three cases considered in the simulation. Confidence bands are also plotted.  
}
\label{fig:5sim2}
\end{center}
\end{figure}

\begin{table}[h]
\center
\caption{Sum of squares of the differences between the estimated reference prices and the real ones.}
\label{tab:perform_ref}
\begin{tabular}{lrrrr}
\hline
\hline
Data                                                               &  & GLS-s   & GLS-d   & OLS      \\
\hline
\hline
\begin{tabular}[c]{@{}l@{}}1. Complete price tableau\\  (4 commodities)\end{tabular}  &       & 0.0152 & 0.0134 & 0.0165 \\
\begin{tabular}[c]{@{}l@{}}2. Incomplete price tableau\\  ``classical'' basket (2 commodities)\end{tabular} &       & 0.0016        & 0.0005   & 0.0008         \\
\begin{tabular}[c]{@{}l@{}}3. Incomplete price tableau\\  novel basket (4 commodities)\end{tabular} &       & 0.0083        & 0.0038   & 0.0055         \\
\hline
\hline
\end{tabular}
\end{table}

\begin{figure}[htbp]
\begin{center}
\includegraphics[scale=0.42]{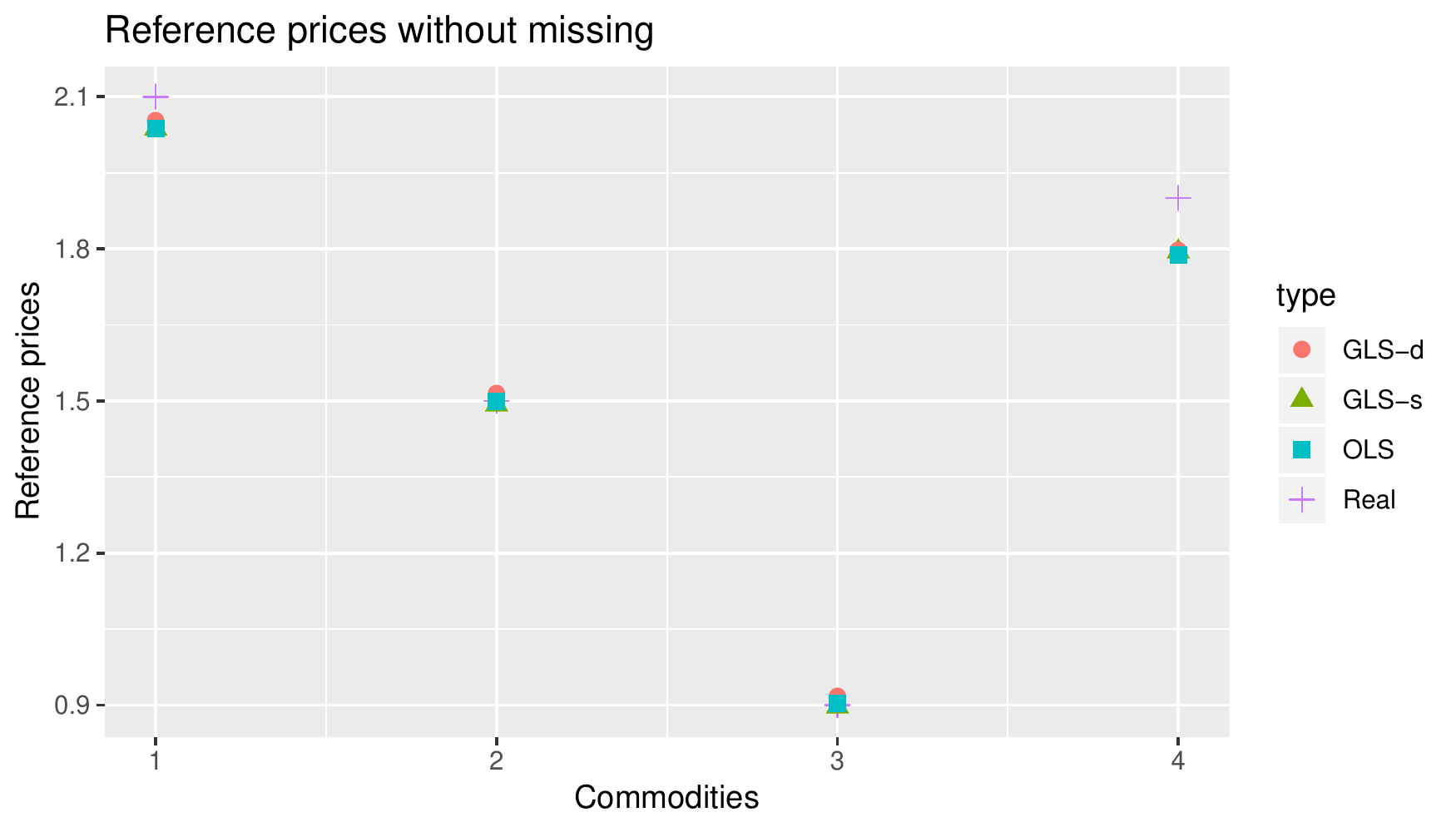} 
\includegraphics[scale=0.42]{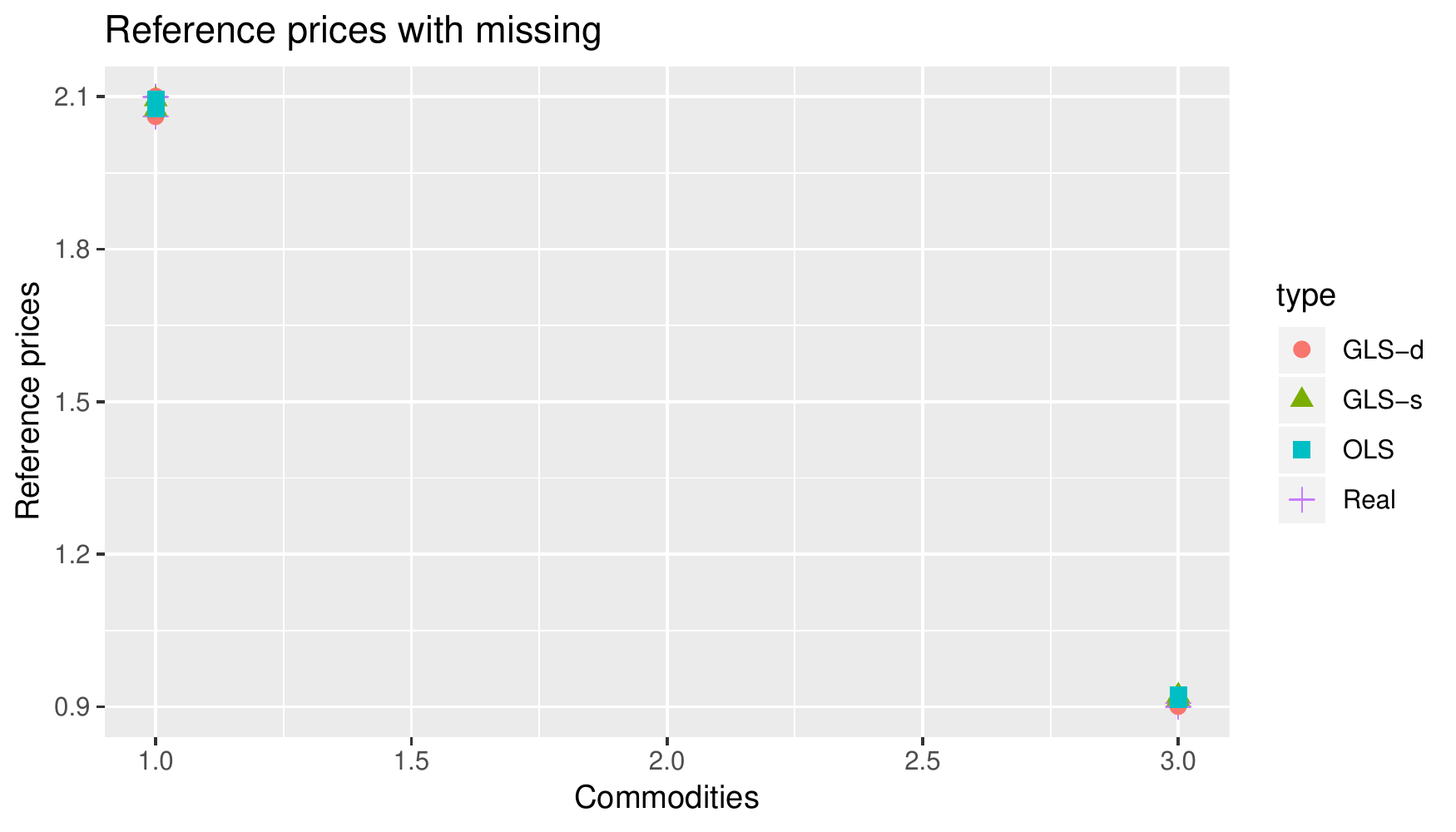}
\includegraphics[scale=0.42]{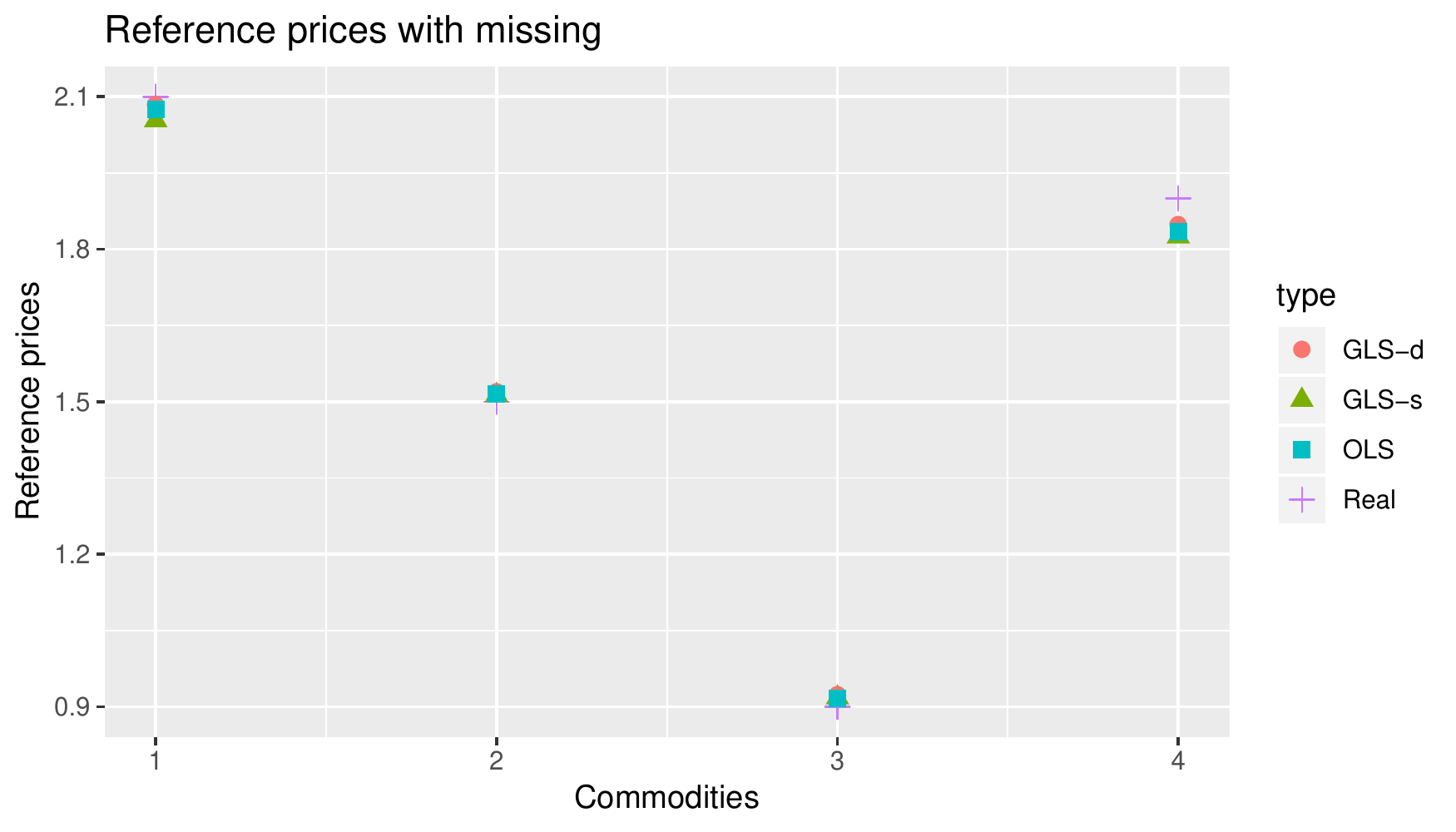}
\caption{Comparison between the reference prices estimated by the MPL index, under different assumptions for the error specification, and the real reference prices.  
The left panel refers to the case of complete price tableau, the right and central panels to the case of incomplete price tableau: with ``standard'' reference basket the former and with the MPL reference basket the latter. 
}
\label{fig:6sim2}
\end{center}
\end{figure}

\begin{figure}[htbp]
\begin{center}

\includegraphics[scale=0.42]{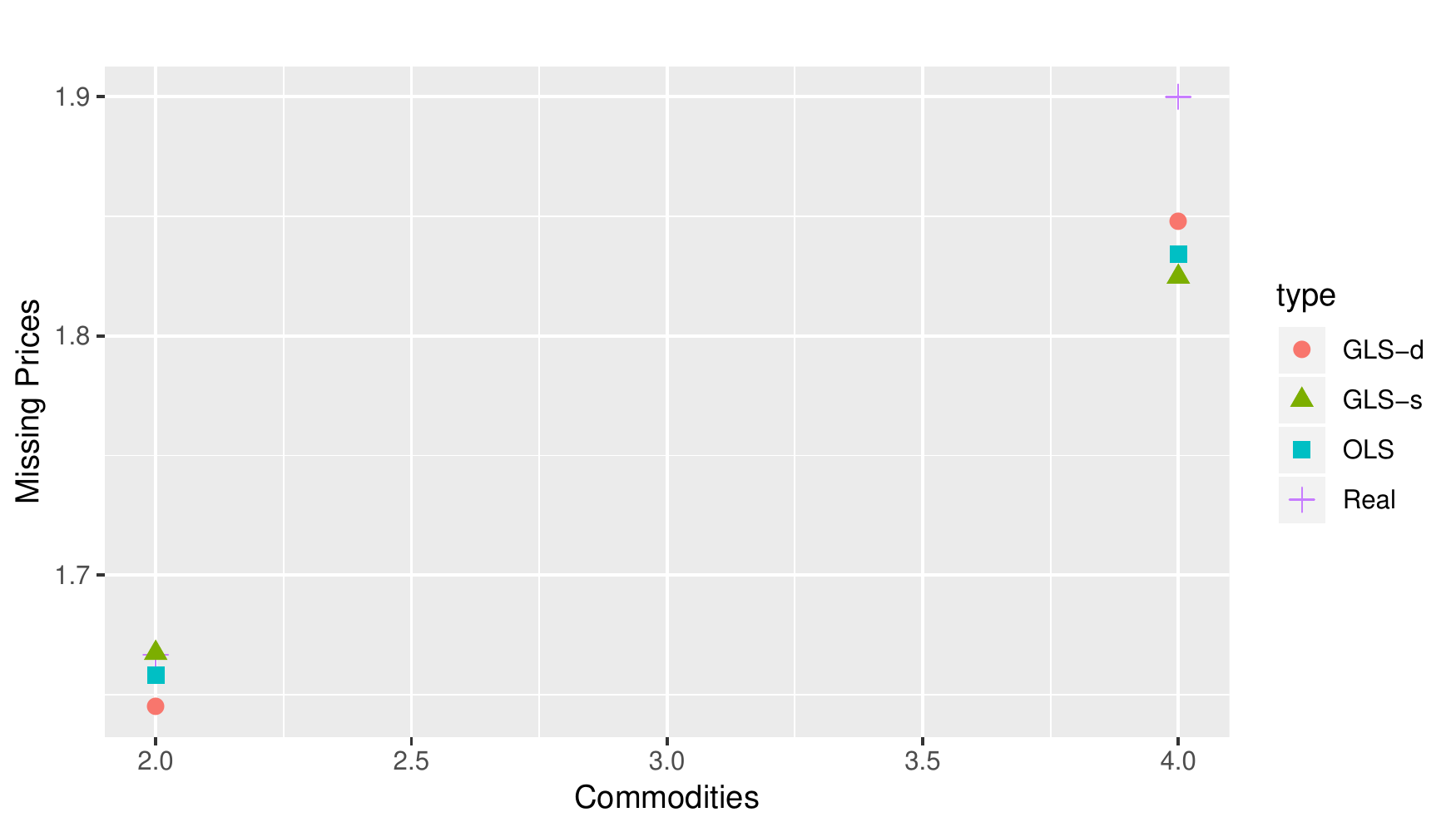}
\caption{ Comparison between the real and  estimated missing prices of the second and third commodities, obtained with the MPL index under different specifications of the error terms,  
}
\label{fig:6sim3}
\end{center}
\end{figure}

\begin{table}[htb]
\center
\caption{Sum of squares of the differences between the real prices and the estimated ones with the TPD and MPL index. For the latter different specifications of the error terms have been considered  }
\label{tab:perform_b}
\begin{tabular}{lrrrrr}
\hline
\hline
Data                                                               &  & GLS-s   & GLS-d   & OLS     & TPD     \\
\hline
\hline
\begin{tabular}[c]{@{}l@{}}1. Complete price \\ tableau\end{tabular}  &       & 0.1337 & 0.0940 & 0.1324 & 0.7602 \\
\begin{tabular}[c]{@{}l@{}}2. Incomplete price\\ tableau (``classical basket'')\end{tabular} &       & 0.0243        & 0.0291        & 0.0227        &   0.0661    \\
\begin{tabular}[c]{@{}l@{}}3. Incomplete price \\ tableau (novel basket)\end{tabular} &       & 0.0772        & 0.0660        & 0.0724        &   0.1471     \\
\hline
\hline
\end{tabular}
\end{table}

%
%\clearpage
%\subsection{Incomplete price tableau}\label{app:simc}
%
%\begin{figure}[htbp]
%\begin{center}
%\includegraphics[scale=0.42]{} \includegraphics[scale=0.42]{f5_stat_s_mis.pdf}
%\includegraphics[scale=0.42]{f5_diag_s_mis.pdf} \includegraphics[scale=0.42]{f5_ols_s_mis.pdf}
%\caption{MPL index compared with the simulated $\lambda$. The GLS-f and  GLS-d versions of the index are shown in the left panels (at the top and at the bottom, respectively), the GLS-s and OLS are shown in the right panels (at the top and at the bottom, respectively). Confidence bands are also plotted
%
%The MPL index according to different specification of $\boldsymbol{\Omega}$ is compared with the real values of $\lambda$: the top left panel assumes a full specification of  $\boldsymbol{\Omega}$, the top right panel a stationarity assumption, the bottom left a diagonal one, while the bottom right a identity matrix leading to an OSL estimator. Confidence bands are also plotted.  
%}
%\label{fig:5sim1}
%\end{center}
%\end{figure}

%\begin{figure}[htbp]
%\begin{center}
%\includegraphics[scale=0.42]{} \includegraphics[scale=0.42]{}
%\caption{The MPL index is compared to the MPL-OLS and the TPD: the  left panel assumes the stationarity of  $\boldsymbol{\Omega}$, while the right one a diagonal diagonal $\boldsymbol{\Omega}$. Confidence bands are also plotted.  
%}
%\label{fig:5sim2}
%\end{center}
%\end{figure}
%
\end{document}